\theoremstyle{plain}
\newtheorem{prob}{Problem}
\newtheorem{open}{Open Question}
\newtheorem{theorem}{Theorem}[section]
\newtheorem{lemma}[theorem]{Lemma}
\newtheorem{cor}[theorem]{Corollary}
\theoremstyle{definition}
\newtheorem{definition}[theorem]{Definition}
\newtheorem{remark}[theorem]{Remark}
 \def\ShowAuthNotes{1}
 \newcommand{\authnote}[2]{\ \\ \textcolor{red}{\parbox{0.9\linewidth}{[{\footnotesize {\bf #1:} { {#2}}}]}}\newline}
 \newcommand{\authnote}[2]{}
\renewcommand{\epsilon}{\varepsilon}
\newcommand{\eps}{\varepsilon}
\renewcommand{\Pr}{\operatorname*{\mathbf{Pr}}}
\newcommand{\Ex}{\operatorname*{\mathbb{E}}}
\newcommand{\bin}{\operatorname*{\mathrm{bin}}}
\newcommand{\low}{\operatorname*{\mathrm{low}}}
\newcommand{\Var}{\operatorname*{\mathrm{Var}}}
\newcommand{\poly}{\operatorname{\mathrm{poly}}}
\newcommand{\polylog}{\poly\log}
\newcommand{\F}{\mathbb{F}}
\newcommand{\N}{\mathbb{N}}
\newcommand{\Z}{\mathbb{Z}}
\newcommand{\dd}{\mathinner{.\,.}}
\def\tO{\widetilde{O}}
\newcommand{\D}{\Delta}
\newcommand{\OO}{\widetilde{O}}
\newcommand{\COUNT}{\mbox{\sc count}}
\newcommand{\Meq}{M_{\mbox{\scriptsize\rm eq}}}
\title{Faster Algorithms for Text-to-Pattern Hamming Distances\thanks{Timothy Chan is partially supported by NSF grant CCF-2224271.   Ce Jin is partially supported by NSF grants CCF-2129139 and CCF-2127597. Virginia Vassilevska Williams and Yinzhan Xu are partially supported by NSF grants CCF-2129139 and CCF-2330048 and BSF Grant 2020356.}}
\author{
Timothy M. Chan\thanks{tmc@illinois.edu}\\UIUC \and
Ce Jin\thanks{cejin@mit.edu}\\MIT
\and 
Virginia Vassilevska Williams\thanks{virgi@mit.edu}\\MIT
\and 
Yinzhan Xu\thanks{xyzhan@mit.edu}\\MIT
}
\date{}
\begin{document}

	\setcounter{page}{0} \clearpage
	\maketitle
	\thispagestyle{empty}
\begin{abstract}
We study the classic \emph{Text-to-Pattern Hamming Distances} problem:  given a pattern $P$ of length $m$ and a text $T$ of length $n$, both over a polynomial-size alphabet, compute the Hamming distance between $P$ and $T[i \mathinner{.\,.} i + m-1]$ for every shift $i$, under the standard Word-RAM model with $\Theta(\log n)$-bit words. 

\begin{itemize}
    \item 

    We provide an $O(n\sqrt{m})$ time Las Vegas randomized algorithm for this problem, beating the decades-old $O(n \sqrt{m \log m})$ running time [Abrahamson, SICOMP 1987]. We also obtain a deterministic algorithm, with a slightly higher $O(n \sqrt{m} (\log m \log \log m)^{1/4})$ running time. 
    
    Our randomized algorithm extends to the $k$-bounded setting, with running time  $O\big (n + \frac{nk}{\sqrt{m}}\big )$, removing all the extra logarithmic factors from earlier algorithms [Gawrychowski and Uzna\'{n}ski, ICALP 2018;  Chan, Golan, Kociumaka, Kopelowitz and Porat, STOC 2020].
    
    \item For the $(1+\eps)$-approximate  version of Text-to-Pattern Hamming Distances, we give an $\widetilde{O}(\epsilon^{-0.93}n)$ time Monte Carlo randomized algorithm (where $\widetilde{O}$ hides poly-logarithmic factors), beating the previous $\widetilde{O}(\epsilon^{-1} n)$ running time [Kopelowitz and Porat, FOCS 2015; Kopelowitz and Porat, SOSA 2018].
\end{itemize}

Our approximation algorithm exploits a connection with $3$SUM, and uses a combination of Fredman's trick, equality matrix product, and random sampling; in particular, we obtain new results on approximate counting versions of $3$SUM and Exact Triangle, which may be of independent interest. Our exact algorithms use a novel combination of hashing, bit-packed FFT, and recursion; in particular, we obtain a faster algorithm for computing the sumset of two integer sets, in the regime when the universe size is close to quadratic in the number of elements.

We also prove a fine-grained equivalence between the exact Text-to-Pattern Hamming Distances problem and a range-restricted, counting version of $3$SUM.

	\end{abstract}
	\newpage

\maketitle

\section{Introduction}

In this paper,
we study one of the most basic problems about string matching, the classic \emph{Text-to-Pattern Hamming Distances} problem (also known as Sliding Window Hamming Distances, or String Matching with Mismatches):  given a pattern $P$ of length $m$ and a text $T$ of length $n$ over an alphabet of size $\sigma$, compute the Hamming distance (i.e., the number of mismatches) between $P$ and $T[i \mathinner{.\,.} i + m-1]$ for every shift $i$.

Fischer and Paterson's seminal work \cite{fispat} gave an algorithm running in $O(\sigma n\log m)$ time\footnote{We consider the Word-RAM model with $\Theta(\log n)$-bit words throughout the paper. } by reducing it to convolution or polynomial multiplication, which can be solved using the Fast Fourier Transform (FFT); this is the fastest known algorithm for small $\sigma$. For arbitrary alphabet size,  well-known work by Abrahamson \cite{Abrahamson87} described an $O(n\sqrt{m}\mathop{\rm polylog} n)$ time algorithm
for a family of generalized string matching problems;
for Text-to-Pattern Hamming Distances, the time bound is $O(n\sqrt{m\log m})$.  Abrahamson's algorithm
was perhaps the first example of a string algorithm with ``intermediate complexity'' between linear and quadratic (ignoring logs).
  A fine-grained reduction attributed to Indyk (see \cite{indyk-lb}) shows that no combinatorial algorithm for Text-to-Pattern Hamming Distances can run in $O(nm^{1/2 - \delta})$ time for an arbitrarily small constant $\delta>0$, under the combinatorial Boolean Matrix Multiplication Hypothesis.\footnote{While the notion of ``combinatorial'' is not well-defined, the typical notion of a combinatorial algorithm is one that does not use fast matrix multiplication. The combinatorial Boolean Matrix Multiplication hypothesis states that there is no combinatorial algorithm that multiplies two $n\times n$ Boolean matrices in $O(n^{3-\eps})$ time, for any constant $\eps>0$, in the word-RAM model with $\Theta(\log n)$ bit words.} This suggests that Abrahamson's algorithm might be optimal up to sub-polynomial factors, at least for combinatorial algorithms. 

However, so far not even poly-logarithmic improvements of Abrahamson's algorithm have been reported. This is not due to a lack of interest. In fact, many algorithms are designed to shave logarithmic factors for stringology problems (e.g. \cite{ChanGKKP20, MASEK198018, MyersRegular, Indyk98a, CrochemoreLZ03, BilleF08, BilleThorup, Grabowski16}). 
 In this paper, we will tackle the following decades-old question:

\begin{open}
\label{open:abrahamson}
Can we improve Abrahamson's $O(n\sqrt{m\log m})$ time algorithm for Text-to-Pattern Hamming Distances?
\end{open}

 We remark that Fredriksson and Grabowski \cite{FredrikssonG13} designed a faster algorithm for Text-to-Pattern Hamming Distances  when the word size $w$ is $\omega(\log n)$ and $m = O(\frac{n \log m}{w})$. However, their algorithm is not faster in the common Word-RAM model with $w = \Theta(\log n)$, which is the model we consider here.

To obtain faster algorithms for the Text-to-Pattern Hamming Distances problem, researchers have considered two easier versions: the $(1+\eps)$-approximate version and the $k$-bounded version. Next we summarize previous results in these two settings.

\paragraph*{$(1+\eps)$-approximation.}
 The $(1+\eps)$-approximate version asks to approximate the Hamming distance  between $P$ and $T[i \mathinner{.\,.} i + m-1]$ for every shift $i$ within a $(1 + \eps)$ factor of the true distance, for $\eps>0$.

In 1993, Karloff \cite{Karloff93} gave a randomized (Monte Carlo) algorithm running in $O(\eps^{-2}n\log n\log m)$ time with high success probability.\footnote{With high probability (w.h.p.) means probability $1-O(n^{-c})$ for arbitrarily large constant $c$.} Karloff also derandomized his algorithm at the cost of only an extra logarithmic factor.

Karloff's $\widetilde O(\eps^{-2}n)$ time algorithm remained the state-of-the-art for a long time (and unimproved except in some special cases~\cite{AtallahGW13}).\footnote{We use $\OO$ to hide poly-logarithmic factors in the input size.} 
This $\eps^{-2}$ dependency is due to the variance that results from random projections, and it was thought to be inherent as suggested by the $\Omega(\eps^{-2})$ lower bound for computing Hamming distance in the one-way communication model \cite{Woodruff04,JayramKS08}. Hence, it came as a surprise when Kopelowitz and Porat in STOC'15 \cite{KopelowitzP15} gave a faster algorithm in $\widetilde O(\eps^{-1}n)$ time, using techniques from sparse recovery.
This algorithm was subsequently simplified (and improved in terms of logarithmic factors) by Kopelowitz and Porat \cite{KopelowitzP18}, with a time bound of $O(\eps^{-1}n\log n\log m)$ (with high success probability). See Table~\ref{tbl:approx}. 
 It is unclear whether this $\eps^{-1}$ dependency is best possible, and this leads to the following tantalizing question:

\begin{open}
\label{open:approx}
Can we improve Kopelowitz and Porat's $\widetilde O(\eps^{-1}n)$ time algorithm for $(1+\eps)$-approximate Text-to-Pattern Hamming Distances?
\end{open}

Generally, there has been growing interest in understanding the $\eps$-dependencies needed to solve fundamental problems in fine-grained complexity
(e.g., partition~\cite{MuchaW019,BringmannN21,linchen} and knapsack~\cite{Chan18a,Jin19,BringmannC22, DengJM23,xiaomao,guochuanzhang}).
Such $\eps$-dependencies are especially important when one demands very accurate answers (e.g., computing $(1+\frac{1}{\sqrt{m}})$-approximations).

More recently, Chan, Golan, Kociumaka, Kopelowitz and Porat in STOC'20 \cite{ChanGKKP20} partially answered \cref{open:approx}:  when the pattern length $m$ satisfies $m\ge \eps^{-28}$, one can $(1+\eps)$-approximate Text-to-Pattern Hamming Distances in $\widetilde O(n)$ time, without any $\eps^{-O(1)}$ factors.
The assumption may be relaxed to $m\ge \eps^{-10}$ if
the matrix multiplication exponent $\omega$ is equal to 2, and if the goal is to obtain better than $\OO(\eps^{-1}n)$ time instead of $\OO(n)$, the assumption can be relaxed further by re-analyzing/modifying their algorithm.
However, inherently their approach is unable to beat $\eps^{-1}n$ if $\eps^{-1}$ is large, for example, when $\eps^{-1}$ is $m^{1/3}$ or $\sqrt{m}$.
The $\eps^{-1}=\sqrt{m}$ case is particularly instructive: here, $\OO(\eps^{-1}n)$ coincides with the $\OO(n\sqrt{m})$  bound for the exact problem; for distances that are $\Theta(m)$, we are demanding $O(\sqrt{m})$ additive error, and sampling-based approaches do not seem to offer any speedup (if we try to estimate distances by sampling different positions of the pattern string, we would need a sample size of $\Omega(m)$, which is not any smaller than the length of the original pattern string).

Chan et al.~\cite{ChanGKKP20} also gave an $O(\eps^{-2}n)$ time randomized algorithm (correct with high probability)    without any logarithmic factors, which is preferable when  $\eps^{-1}$ is small. Both \cref{open:abrahamson} and \cref{open:approx} were explicitly asked during a talk on \cite{ChanGKKP20} given by Kociumaka.\footnote{Talk video link: \url{https://youtu.be/WEiQjjTBX-4?t=2820}}

Other variations of $(1+\eps)$-approximation text-to-pattern matching problem have also been studied in the literature, such as replacing Hamming distance by other $\ell_p$ norms 
\cite{LipskyP11,GawrychowskiU18,StudenyU19,Uznanski20approx}, or restricting to algorithms that do not use FFT \cite{ChanGKKP20,Uznanski20approx}. 
See also the survey by Uzna\'{n}ski~\cite{Uznanski20recent}.

\paragraph*{$k$-mismatch.}
 Given a threshold $k$, the $k$-bounded (or $k$-mismatch) version of Text-to-Pattern Hamming Distances asks to compute the Hamming distances only for locations
with distances at most $k$, and output $\infty$ for other locations. 

After a long line of works \cite{LandauV86,LandauV89,GalilG86,SahinalpV96,ColeH02,AmirLP04,CliffordFPSS16,GawrychowskiU18,ChanGKKP20}  
(see Table~\ref{tbl:exact}), the current fastest algorithm by Chan, Golan, Kociumaka, Kopelowitz and Porat \cite{ChanGKKP20} is a Monte Carlo randomized algorithm (correct with high  probability) in $O\big (n + \min \big (\frac{nk}{\sqrt{m}}\sqrt{\log m},\frac{nk^2}{m}\big )\big )$ time, shaving off some logarithmic factors from the earlier deterministic algorithm by Gawrychowski and Uzna\'{n}ski \cite{GawrychowskiU18} in $O\big ( n \log ^2 m\log \sigma + \frac{nk\sqrt{\log n}}{\sqrt{m}} \big )$ time. Gawrychowski and Uzna\'{n}ski \cite{GawrychowskiU18} also extended Indyk's fine-grained reduction (mentioned in the notes of Clifford \cite{indyk-lb}) to show a tight conditional lower bound for combinatorial algorithms solving the $k$-mismatch problem under the Boolean Matrix Multiplication Hypothesis.

\subsection{Our results}
In this paper, we give new exact and approximation algorithms for Text-to-Pattern Hamming Distances, answering both \cref{open:abrahamson} and \cref{open:approx} in the affirmative.

\begin{restatable}[Approximation algorithm with sublinear $1/\eps$ dependence]{theorem}{approxhammingmain}
\label{thm:approxhammingmain}
The $(1+\eps)$-approximate Text-to-Pattern Hamming Distances problem can be solved by
a Monte Carlo randomized algorithm in $\widetilde O(\eps^{-\gamma} n)$ time, where $\gamma = 8/(11-\omega) < 0.928$.
\end{restatable}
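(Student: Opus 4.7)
The plan is to reduce the $(1+\eps)$-approximate Text-to-Pattern Hamming Distances problem to an approximate counting variant of $3$SUM (closely related to a counting version of Exact Triangle), and then attack that problem by combining Fredman's trick, equality matrix product, and random sampling.

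First, I would reduce to approximating the match count $M(i) = \sum_{j=0}^{m-1} \mathds{1}\bigl[P[j] = T[i+j]\bigr]$ for every shift $i$ within additive error $O(\eps m)$. After a standard preprocessing step (e.g., reducing to instances where the true Hamming distances of interest are $\Theta(m)$, by character projection plus padding), a $(1+\eps)$-multiplicative approximation of the distance is equivalent to an $O(\eps m)$-additive approximation of $M(i)$. Rewriting $M(i) = \bigl|\{(k,j) : T[k] = P[j],\ k - j = i\}\bigr|$ exposes a clear $3$SUM-like structure, where character equality plays the role of the $a + b = c$ relation and the difference $k - j$ plays the role of the target sum. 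Thus the task becomes a range-restricted approximate counting $3$SUM problem for which a new algorithm is needed.

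The core algorithmic ingredient will be an equality matrix product: given $A \in \Sigma^{p \times d}$ and $B \in \Sigma^{d \times q}$, compute $C_{i,j} = |\{r : A_{i,r} = B_{r,j}\}|$ in time $\widetilde{O}\bigl(p q d^{\omega-2}\bigr)$, by hashing the characters to $O(\log n)$-bit fingerprints and summing $O(\log n)$ indicator matrix products computed by rectangular fast matrix multiplication. I would tile the text and pattern into blocks of length $\Delta$, so that counting matches contributed by a pattern block to a block of shifts reduces to a single equality matrix product of dimensions roughly $(n/\Delta)\cdot\Delta \times \Delta$ by $\Delta \times (m/\Delta)$. Random sampling is then layered on top: subsampling pattern positions per block down to $\widetilde{O}(\eps^{-2})$ and rescaling gives an unbiased estimator of each block's contribution to $M(i)$, with Chernoff/Hoeffding keeping the total additive error at $O(\eps m)$. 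Fredman's trick is invoked to strip the $\log$ factor that would otherwise come from sorting/arranging sampled data and comparing differences across shift blocks, which matters because the bound $\widetilde{O}(\eps^{-1} n)$ we are trying to beat has no logarithmic slack to begin with.

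Balancing three cost components---an FFT-based part that handles frequent characters in $\widetilde{O}(n)$ time, the sampled equality-matrix-product part of cost roughly $\widetilde{O}\bigl(n \cdot \Delta^{\omega-2} \cdot \eps^{-O(1)}\bigr)$, and a brute-force part handling the rare-character contributions of cost roughly $\widetilde{O}(n\Delta)$---in the single parameter $\Delta$, and setting $\Delta$ as a fractional power of $\eps^{-1}$, should yield a total running time of $\widetilde{O}(\eps^{-\gamma} n)$ with $\gamma = 8/(11-\omega)$ (specializing to $8/9$ when $\omega = 2$). The main obstacle I expect is controlling how sampling error accumulates across the $\Theta(m/\Delta)$ block contributions that sum to each shift's estimate: a naive variance bound is too weak, and one likely needs stratified or importance-weighted sampling combined with an Abrahamson-style heavy/light character split so that the per-shift additive error stays $O(\eps m)$ with high probability. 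A secondary obstacle is implementing equality matrix product over a polynomial-size alphabet in the Word-RAM model while preserving the $d^{\omega-2}$ factor; this should succumb to character hashing and Schwartz--Zippel-style arguments, but must be done carefully to keep the estimator unbiased and avoid spurious correlations between the sampling and the hashing randomness.
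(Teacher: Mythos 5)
Your high-level ingredients (3SUM-like reformulation, equality matrix product, random sampling, Fredman's trick) overlap with the paper's, but your plan misses the key algorithmic idea, and as sketched it would not get below $\eps^{-1}$ dependence.

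The central obstacle to beating $\widetilde{O}(\eps^{-1}n)$ is exactly the one you flag at the end but do not resolve: Chernoff/Hoeffding with a fixed sample size of $\widetilde{O}(\eps^{-2})$ per block yields the Karloff-type $\eps^{-2}$ overhead, which cannot be balanced away. The paper's actual mechanism is a \emph{count-adaptive} sampling scheme embedded in an approximate-counting Exact-Triangle lemma. Fix a target triple $(i,j,z)$ and let $\COUNT[i,j,z]$ be the number of witnesses. If the count is small, you may sample at a very low rate $\rho$ and still have small variance, because the variance of the unbiased estimator is $\Theta(\COUNT/\rho)$, not $\Theta(n/\rho)$; the sample size can then be much smaller than $\eps^{-2}$. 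If the count is large, you cannot afford low-rate sampling directly, but Fredman's trick lets you rewrite $a+b=z$ as $a - a_{k_0} = b_{k_0} - b$ for some witness index $k_0$, converting the count into an entry of an \emph{equality product}, which can be computed (again on a subsample at a rate tuned to the count) via fast matrix multiplication. The algorithm runs a logarithmic cascade of sampling rates $\rho_\ell = 1/(2^\ell t)$, each targeting a count range $[n/2^{\ell+1}, n/2^{\ell})$, with a hitting set $H^{(\ell)}$ to find a witness $k_0$ at the right scale. The crucial quantitative fact is that the resulting variance is $O(t\,n)$ uniformly across all count sizes, and the running time is controlled because high-count entries are rare. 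None of this is captured by "subsample to $\widetilde{O}(\eps^{-2})$ per block and rescale," nor by an Abrahamson-style heavy/light character split (which partitions by character frequency, not by witness count). You also use Fredman's trick for a different (and much weaker) purpose, shaving a $\log$ factor from sorting; its role in the paper is to make the many-witness counts expressible as an Equality Product.

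A second gap: the reduction from Hamming Distances to the counting lemma is not just block tiling of the text. The paper goes through a chain: colored counting 3SUM $\to$ per-color 3SUM $\to$ Convolution-3SUM (via the Chan--He hashing reduction) $\to$ Exact-Triangle (via the standard Vassilevska Williams block reduction). It also needs the Gawrychowski--Uzna\'{n}ski reduction to run-length-encoded instances in order to go from additive error $O(\eps m)$ to $O(\eps k)$ for distances $\Theta(k)$, and then ranges over $k$ a power of two. Your "character projection plus padding" preprocessing does not do this, and it is not clear it achieves additive error scaling with the true distance rather than with $m$. Finally, your three-way balance (FFT / equality product / brute force) in a single parameter $\Delta$ does not match the paper's parameter structure, which has at least the sampling parameter $t$, the splitting factor $s$, the block length $d$, and the range $\Delta$ for endpoints, all interacting to give $\gamma = (1+\beta)/(1+2\beta)$ with $\beta = (3-\omega)/(5+\omega)$.
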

Here, $\omega\in [2,2.372)$ is the matrix multiplication exponent \cite{DWZ22, journals/corr/abs-2307-07970}.
Our result resolves \cref{open:approx}, showing that $\OO(\eps^{-1}n)$ \cite{KopelowitzP15,KopelowitzP18} is not the ultimate answer for this problem (and, in particular, that it is possible to obtain polynomial improvements over $\OO(n\sqrt{m})$ in the critical case of $\eps^{-1}=\sqrt{m}$).

\begin{restatable}[Exact algorithm without log factors]{theorem}{exacthammingmain}
\label{thm:exacthammingmain}
The Text-to-Pattern Hamming Distances problem can be solved by
a Las Vegas algorithm  which terminates in $O(n\sqrt{m})$ time with high probability. 
\end{restatable}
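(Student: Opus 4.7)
The plan is to refine Abrahamson's frequent/rare character dichotomy, but with threshold $t=\sqrt{m}$ rather than $t=\sqrt{m\log m}$, and to standardly reduce to the case $n=\Theta(m)$ by chunking the text into overlapping length-$O(m)$ windows. Call a character \emph{frequent} if it appears at least $t$ times in $P$ and \emph{rare} otherwise; there are at most $\sqrt{m}$ frequent characters. Rare characters are handled by Abrahamson's direct counting: for each position of $T$ whose character $c$ is rare, iterate over the $\leq t$ occurrences of $c$ in $P$ and increment the induced shift counter, costing $O(nt)=O(n\sqrt{m})$.

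The nontrivial part is the frequent-character contribution $\sum_c P_c*T_c$. Since the number of frequent characters is at most $\sqrt{m}$ and each convolution is of length-$\Theta(m)$ Boolean vectors, the budget forces us to compute each convolution in $O(m)$ time, a logarithmic-factor saving over plain FFT. The plan is to combine three ingredients. First, randomly hash the at most $\sqrt{m}$ frequent character identifiers down to short bit-labels, so that many characters' indicator bits can be packed into a single $\Theta(\log n)$-bit word; the Las Vegas guarantee comes from detecting hash collisions and resampling in $O(n+m)$ time. Second, apply a bit-packed FFT that performs several convolutions in parallel on the packed coefficients, with per-slot bit-width dictating the packing factor. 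Third, to overcome the barrier that a single convolution entry may be as large as $m$ (which would cap the packing factor at $O(1)$), recast each frequent-character convolution as a collection of sumset-with-multiplicities problems on block-decomposed inputs: within each block pair the partial output is bounded by the block size, shrinking the per-slot bit-width and raising the packing factor. Recursing in this fashion realizes the faster sumset routine in the near-quadratic-universe regime promised by the abstract, yielding $O(m)$ per length-$m$ $0/1$ convolution.

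The main obstacle I anticipate is in the third ingredient: ensuring that the packing savings compound multiplicatively across the recursion rather than being absorbed by the cost of stitching block convolutions back together (including carry handling in packed-word arithmetic), and that each recursive level actually lands in the favorable regime where the sumset speedup applies. A secondary technical point is to certify the Las Vegas property end-to-end: verify, in $O(n+m)$ time, that the hash of step one has no collisions among the (few) frequent characters, resampling on failure with constant success probability per trial. Summing the $O(n\sqrt{m})$ rare-character cost and the $O(n\sqrt{m})$ frequent-character cost then gives the claimed bound with high probability.
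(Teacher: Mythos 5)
Your high-level framework does match the paper's: split characters by frequency, handle the low-frequency ones by direct counting (or brute-force per-character sumset), and for the high-frequency ones reduce to a sparse-convolution / sumset-with-multiplicities subroutine that must beat plain FFT by a $\log$ factor. The paper does exactly this, and the entire weight of the theorem rests on a single lemma: for two $n$-element multisets in $[U/2]$ with $2U<n^2$, one can compute all multiplicities of $X+Y$ in $O(U\log(n^2/U))$ expected time (Lemma~\ref{lem:x+y}). Summing this over characters bucketed by frequency, plus brute force for characters with $n_c\le\sqrt{2n}$, gives $O(n^{3/2})$ per window.

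Where your sketch diverges, and where the real gaps are, is in your proposed implementation of that key subroutine. Your ingredients one and two (hash the $\sqrt m$ frequent character identifiers to short labels, then pack several characters' indicator bits into a word and run a ``parallel'' FFT) do not compute what you want: word-packing accelerates a \emph{single} convolution whose coefficients are small by stuffing several coefficients of the same polynomial into one machine word; it does not let you run several \emph{independent} Boolean convolutions in parallel, because the FFT butterflies for different characters are not aligned and the output of $P_c\star T_c$ cannot be disentangled from cross terms $P_c\star T_{c'}$ once you interleave them. Relatedly, hashing character IDs buys you nothing (they can be renamed to $\{1,\dots,\sqrt m\}$ for free), and the Las Vegas step you describe (resample on character-ID collisions) is not the source of randomness in any workable argument here. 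Your third ingredient (block-decompose to bound per-slot values) addresses the right obstacle but not in a way that closes: blocking bounds only the \emph{partial} counts within a block pair, yet the final answer at a shift is a sum over $\Theta(m/b)$ aligned block pairs, so it is again as large as $m$; you give no mechanism for reassembling the blocks without forfeiting exactly the bit-width savings you just bought, and the recursion you gesture at is not specified enough to tell whether the savings compound or wash out.

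The paper's Lemma~\ref{lem:x+y} does something genuinely different. Within a \emph{single} character's sumset problem it (i) computes all counts modulo a small prime $p$ by packed FFT (Lemma~\ref{lem:packed_FFT}): packing works because mod-$p$ residues are small regardless of the true count; (ii) uses Dietzfelbinger's almost-linear hash on the \emph{positions} (with new ``predictable error'' properties, Lemma~\ref{lem:modifiedhash}) to hash down to a smaller universe and recursively determine which output positions are ``heavy'' (count $\ge p$), since for the remaining ``light'' positions the mod-$p$ residue already \emph{is} the count; and (iii) recursively handles the few heavy and non-isolated elements. The crucial idea you are missing is this light/heavy split driven by a modular reduction plus recursive hashing; bounding counts by blocking is not a substitute, because there is no way to assemble a large true count from small per-block counts without paying for the large value somewhere. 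You would need to either adopt the paper's mod-$p$/heavy-element recursion or supply a genuinely new reassembly argument for your blocking scheme, and as written your proposal has neither.
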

This result is the first speedup over Abrahamson's  algorithm \cite{Abrahamson87} for more than three decades. 
We also give a new deterministic algorithm that runs faster than Abrahamson's algorithm, but slower than \cref{thm:exacthammingmain}.

\begin{restatable}[Deterministic exact algorithm]{theorem}{exacthammingdetmain}
\label{thm:exacthammingdetmain}
The Text-to-Pattern Hamming Distances problem can be solved by
a deterministic algorithm in $O(n\sqrt{m} (\log m\log \log m)^{1/4})$ time.
\end{restatable}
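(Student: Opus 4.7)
\textbf{Plan for Theorem~\ref{thm:exacthammingdetmain}.}
The plan is to revisit Abrahamson's heavy/light decomposition and speed up its FFT bottleneck with a deterministic bit-packed convolution. Fix a threshold $t$, call a character $c$ \emph{heavy} if it occurs at least $t$ times in $P$, and handle the at most $m/t$ heavy characters by correlations $(T_c \star P_c)[k]=|\{i: T[i+k]=c=P[i]\}|$, while processing light characters by the standard enumeration at total cost $O(nt)$. Abrahamson performs one independent length-$n$ FFT per heavy character, which costs $O((m/t)\, n\log n)$ and leads, after balancing with $O(nt)$, to the $O(n\sqrt{m\log m})$ bound we wish to improve.

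The main algorithmic step is to compute all heavy-character correlations simultaneously in time $\widetilde O((m/t)\,n/f)$ for a deterministic bit-packing speedup $f \asymp \sqrt{\log m/\log\log m}$. The approach announced in the introduction combines \emph{hashing}, \emph{bit-packed FFT}, and \emph{recursion}: partition the heavy characters into small groups, encode each group as a single integer-valued vector whose coordinates carry the per-character multiplicities in packed $\Theta(\log m)$-bit fields, and compute all convolutions of one group by a single length-$n$ FFT over the packed ring. The output of such a batched FFT yields the correct per-character counts only at the positions where distinct heavy characters do not collide in the same packed field; the residual cross-contamination is cleaned up by recursing on a much smaller instance (on the colliding sub-alphabet), so that the total cost is dominated by a geometric series whose top level has the claimed bound.

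The hard part is to carry out the hashing \emph{deterministically}. With randomization, a uniform assignment of characters to packed fields yields a tight expected collision profile and underlies the Las Vegas $O(n\sqrt{m})$ bound of Theorem~\ref{thm:exacthammingmain}; derandomizing the same recursion is where the extra $(\log m\log\log m)^{1/4}$ factor must come from. I would derandomize using standard deterministic hash/splitter constructions (e.g.\ a perfect-hash family, or an $\eps$-biased generator over $\F_2$) so that every packed FFT has a bounded collision profile and the recursion terminates in $O(1)$ levels, each of which inflates the work by an amount that can be charged against a $\sqrt{f}$-type factor in the final running time. Tracking this overhead through the recursion and re-balancing heavy versus light by choosing $t \asymp \sqrt{m}\,(\log m\log\log m)^{1/4}$ yields the claimed bound $O(n\sqrt{m}\,(\log m\log\log m)^{1/4})$. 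The subtlety to watch for is that light-character enumeration is already deterministic, so the entire derandomization budget must be absorbed inside the heavy step without disturbing its $\sqrt{f}$-type savings; this is exactly the regime where the paper's new faster deterministic sumset algorithm (applied here to the multi-coloured difference-set instance defined by the heavy characters) is designed to operate.
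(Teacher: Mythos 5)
Your high-level numerology is right---the heavy/light threshold $t \asymp \sqrt{m}\,(\log m\log\log m)^{1/4}$, the light cost $O(nt)$, a per-heavy-character saving of roughly $\sqrt{\log m/\log\log m}$ over plain FFT, and the resulting balance---but the mechanism you describe for the heavy side is not what the paper does, and as stated it has a real gap.

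The paper handles heavy characters one at a time. For each character $c$, the positions of $c$ in the text and pattern form two integer sets $X_c, Y_c \subseteq [m]$ with $n_c$ elements, and the task is to compute the multiplicity vector of the sumset $X_c + Y_c$ (equivalently, the correlation). The entire savings comes from a new deterministic single-instance sumset lemma: for $n$ elements in a universe of size $U$ with $2U < n^2$, counts can be computed in $O\bigl(U\log(n^2/U) + U\sqrt{\log n \log\log U}\bigr)$ time. This lemma works by (i) a packed FFT over $\F_p$ for a small prime $p$ (Lemma~\ref{lem:packed_FFT}) to get counts mod $p$, (ii) a recursive call with the universe halved (reducing each element modulo $U'/2$) to produce a small superset of the positions whose true count is $\ge p$, and (iii) the Chan--Lewenstein deterministic modular hash family ($\bmod\; m_k$ for the specially constructed integers of Lemma~\ref{lem:ChanL15}) to resolve those few large counts exactly. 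The recursion is over the \emph{universe}, not over characters, and the hashing is over \emph{positions} (via $x \mapsto x \bmod m_k$), not over characters. The $(\log m\log\log m)^{1/4}$ penalty comes precisely from the $2^{\Theta(\sqrt{\log t\log\log U})}$ factor intrinsic to the Chan--Lewenstein construction, not from a generic derandomization trick.

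Your proposal instead batches many heavy characters into a single packed FFT by encoding per-character multiplicities in packed $\Theta(\log m)$-bit fields of one word, with a deterministic hash on \emph{characters} to assign them to fields and a recursion on the ``colliding sub-alphabet.'' Two concrete problems: first, since the correlation values can be as large as $m$, each packed field needs $\Theta(\log m)$ bits, so the packing factor is $O(\log n/\log m)=O(1)$ and there is no batching gain from this route; to get a nontrivial $f$ you must work modulo a small $p$ and then reconstruct large counts, which brings you back to a per-instance scheme like the paper's and dissolves the batching picture. Second, the derandomization gadgets you name ($\eps$-biased generators over $\F_2$, generic perfect-hash families) do not give the specific $2^{O(\sqrt{\log n\log\log U})}$ blowup that shows up as $(\log m\log\log m)^{1/4}$ in the final bound; you would at best get a polylogarithmic penalty of a different shape. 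You also mis-describe the randomized Theorem~\ref{thm:exacthammingmain} as ``a uniform assignment of characters to packed fields''; that algorithm also works per-character and hashes positions (via Dietzfelbinger's almost-linear hash family), not characters.

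Your final sentence does gesture at the right object (``the paper's new faster deterministic sumset algorithm''), and if you simply apply that lemma independently to each character $c$ with $n_c > \sqrt{m}(\log m\log\log m)^{1/4}$, brute-force the rest, and sum, you land on the theorem. But the rest of the proposal's mechanism is orthogonal to, and does not recover, that lemma.
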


Our exact algorithms actually apply to the harder problem of computing $|\{j: P[j]\le T[i+j]\}|$ for all shifts $i$. (Note that the Text-to-Pattern Hamming Distances problem reduces to two instances of this problem, one of which with an alphabet in reversed order.) This is also known as the \emph{Dominance Convolution} problem (see e.g., \cite{AmirF91,LabibUW19}).

\begin{restatable}[Exact algorithm for Text-to-Pattern Dominance Matching]{theorem}{exactdominance}
\label{thm:exactdominancemain}
The Text-to-Pattern Dominance Matching problem can be solved by
a Las Vegas algorithm 
which terminates in $O(n\sqrt{m})$ time with high probability, or a deterministic algorithm which terminates in $O(n\sqrt{m} (\log m\log \log m)^{1/4})$ time.
\end{restatable}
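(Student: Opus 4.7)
The plan is to prove the dominance statement directly, from which the two exact Hamming Distance theorems follow via the identity $\#\{j:P[j]=T[i{+}j]\}=\#\{j:P[j]\le T[i{+}j]\}+\#\{j:P[j]\ge T[i{+}j]\}-m$, with the $\ge$ instance obtained by reversing the alphabet order. By the standard $n/m$-way text-splitting reduction, I may assume $n=\Theta(m)$ and aim for $O(m^{3/2})$ time. The backbone is an Abrahamson-style frequency partition: with threshold $\tau=\Theta(\sqrt{m})$, declare a character $c$ \emph{heavy} if it occurs more than $\tau$ times in $P$ (so there are at most $m/\tau$ heavy characters) and \emph{light} otherwise, and handle the two groups separately.

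For heavy characters, shift $i$ receives $\sum_{c \text{ heavy}}\bigl(\mathbb{1}[P[\cdot]=c]\ast\mathbb{1}[T[\cdot]\ge c]\bigr)(i)$, i.e.\ at most $m/\tau$ integer cross-correlations. A straight FFT would cost $O((m/\tau)\cdot n\log n)=O(n\sqrt{m}\log n)$, leaving an unwanted $\log n$ factor. The plan to remove it is a bit-packed FFT: the indicator inputs are $\{0,1\}$-valued and the convolution entries fit in $O(\log m)$ bits, so we pack the indicator vectors of several heavy characters into disjoint $O(\log m)$-bit slots of a single polynomial via Kronecker substitution, run one FFT, and read off all packed convolutions, amortising the transform cost across $\Theta(\log n/\log m)$ characters.

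The light group is the crux because Abrahamson's infrequent-symbol trick is tailored to the equality relation and breaks for $\le$. Here the plan is to reformulate the contribution as a collection of sparse Boolean convolutions: letting $J_c=\{j:P[j]=c\}$ and $U_c=\{t:T[t]\ge c\}$, shift $i$ gets $|J_c\cap(U_c-i)|$ from each light $c$. Since $\sum_c|J_c|\le m$ while the $U_c$ are large and nested, rather than convolve character-by-character, group the light characters into batches whose combined $J$-sets live in a universe of size $\Theta(m)$ with about $\sqrt{m}$ points, and invoke the new sumset algorithm (promised in the abstract for the regime where the universe is nearly quadratic in the number of elements) to compute the Boolean convolution of each batch in near-linear time. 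Random hashing of the alphabet into $O(\log m)$ level sets then lets us separate the colliding contributions and recover the integer counts, with $O(\sqrt m)$ batches totalling $O(m^{3/2})$ work.

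The main obstacle is precisely this light step, as dominance forces a sum of overlapping sparse convolutions where no single character is a bottleneck but the aggregate universe is quadratic-sized; the new sumset subroutine (built from hashing, bit-packed FFT, and recursion) is exactly what is needed here. Controlling the hash collision error and rerunning on detected failures yields the Las Vegas $O(n\sqrt{m})$ bound of \cref{thm:exactdominancemain}, while replacing the random hash by a deterministic small-collision family costs an extra $(\log m\log\log m)^{1/4}$ factor, matching the deterministic bound. The remaining log factors from the heavy step are absorbed by the Kronecker packing, and those from the light step by the recursive structure of the sumset routine.
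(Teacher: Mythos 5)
Your proposal takes a genuinely different route from the paper, and in its current form it has gaps in both halves of the frequency split. The paper's proof of this theorem does not use an Abrahamson-style heavy/light partition at all. Instead, it sorts the $n+m$ characters of pattern and text together into a single ranked array and walks the dyadic intervals of that array: for an interval $I$, it takes $L$ to be the pattern positions whose rank falls in the left half of $I$ and $R$ the text positions whose rank falls in the right half, so that $P[i]\le T[j]$ holds automatically for all $(i,j)\in L\times R$, and every pair with $P[i]<T[j]$ is covered by exactly one dyadic $(L,R)$. The problem then reduces to computing $|\{(i,j)\in L\times R: j-i=k\}|$ for all $k$, which is a sparse convolution with $|L|,|R|\le 2^\ell$ at level $\ell$; brute force is used when $2^\ell\le\sqrt{2n}$ and \cref{lem:x+y} otherwise. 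Summing over $\ell$ gives $O(n\sqrt n)$. The heavy/light dichotomy never appears: it is replaced by a geometric sweep over dyadic levels, and the only subroutine is the $X+Y$ lemma on same-size sparse sets.

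The concrete gaps in your proposal are these. In the heavy step, the claimed amortisation of the transform cost across $\Theta(\log n/\log m)$ characters is vacuous once you have reduced to $n=\Theta(m)$, since $\log n/\log m=\Theta(1)$. More fundamentally, packing the indicator vectors of several heavy characters into disjoint $O(\log m)$-bit slots and running one FFT does not yield the individual convolutions: a product of two packed polynomials produces cross-terms $I_c\star J_{c'}$ for $c\neq c'$ sitting at the same exponents as some of the diagonal terms, and widening the slots to separate them costs exactly the factor you were trying to save. Since each heavy-character cross-correlation $\mathbb{1}[P=c]\star\mathbb{1}[T\ge c]$ can take values up to $\Theta(m)$, \cref{lem:packed_FFT} still costs $O(m\log m)$ per heavy character and the heavy step stays at $O(m^{3/2}\log m)$. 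In the light step, the reformulation $\sum_{c\text{ light}}|J_c\cap(U_c-i)|$ is correct, but batching the light characters does not produce a single sparse convolution: each $c$ in the batch has its own large set $U_c$, and while these are nested they are not equal, so $\sum_{c\in S}|J_c\cap(U_c-i)|$ is not $|J_S\cap(U_S-i)|$ for any single pair of sets. The ``random hashing of the alphabet into $O(\log m)$ level sets'' is not specified enough to evaluate, and it does not appear to address this mismatch. If you push on the idea of batching consecutive light characters and separately accounting for within-batch vs.\ cross-batch comparisons, you are led naturally to a merge-sort/segment-tree decomposition of the order relation, which is precisely what the paper does directly (and which cleanly handles heavy and light characters with the same machinery). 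I would recommend switching to the paper's dyadic decomposition rather than trying to salvage the frequency partition.

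Two smaller points. First, the reduction you state, $\#\{j:P[j]=T[i+j]\}=\#\{j:P[j]\le T[i+j]\}+\#\{j:P[j]\ge T[i+j]\}-m$, is correct and matches the paper's remark; but the paper proves the dominance theorem by the dyadic argument and derives the Hamming theorem separately (and more simply) by decomposing per character and applying \cref{lem:x+y} to each character class directly, with no heavy/light split needed there either. Second, for the deterministic variant the paper replaces \cref{lem:x+y} by \cref{lem:det_X_plus_Y} inside the same dyadic argument; the extra $(\log m\log\log m)^{1/4}$ factor comes from the deterministic sparse-convolution subroutine, not from derandomising an alphabet hash.
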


\cite{LabibUW19} also observed that this Text-to-Pattern Dominance Matching problem is equivalent to the following ``threshold'' problem~\cite{AtallahD11}: for a fixed $\delta$, compute $|\{j: |P[j]- T[i+j]|>\delta\}|$ for all shifts $i$. Hence, this threshold problem can also be solved in the same time complexity as in \cref{thm:exactdominancemain}.

Our technique also yields improvement to the $k$-mismatch problem.

\begin{restatable}[$k$-mismatch algorithm without log factors]{theorem}{kmismatchmain}
\label{thm:kmismatchmain}
The $k$-bounded Text-to-Pattern Hamming Distances problem can be solved by
a Monte Carlo algorithm in 
$O\big (n + \frac{nk}{\sqrt{m}}\big )$
 expected time which outputs correct answers with high probability.
\end{restatable}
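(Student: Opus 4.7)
The plan is to combine techniques from the Las Vegas exact algorithm of \cref{thm:exacthammingmain} with the $k$-mismatch framework of Chan, Golan, Kociumaka, Kopelowitz and Porat~\cite{ChanGKKP20}, shaving the $\sqrt{\log m}$ factor they incur by replacing their sparse-recovery / FFT-based filtering with the bit-packed FFT and randomized hashing primitives underlying \cref{thm:exacthammingmain}. I would split the analysis into two regimes depending on whether $k \le \sqrt{m}$ or $k > \sqrt{m}$.

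In the regime $k \le \sqrt{m}$, the target bound is $O(n)$. Here I would use the classical pigeonhole filter: partition $P$ into $k+1$ blocks of length $\lfloor m/(k+1)\rfloor$, find all exact occurrences of these blocks in $T$ in total time $O(n)$ via Aho--Corasick, treat each match as a candidate shift, and verify each candidate using $O(k)$ kangaroo jumps over LCE queries. Handling the approximately-periodic case separately (so as to bound the number of candidates) gives $O(n + nk^2/m) = O(n)$ without any logarithmic overhead.

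In the regime $\sqrt{m} < k \le m$, neither \cref{thm:exacthammingmain} directly (which yields $O(n\sqrt{m})$) nor the small-$k$ algorithm (which yields $O(nk^2/m)$) achieves the target $O(nk/\sqrt{m})$. The plan is to run the exact algorithm of \cref{thm:exacthammingmain} with an early-termination modification: process the pattern in $O(\sqrt{m})$ rounds of $O(n)$ amortized time each, maintain partial per-shift counts, and discard any shift whose partial count already exceeds $k$. Combined with a preliminary filter that restricts attention to $O(n/\sqrt{m})$ surviving shifts (again reusing the hashing and bit-packed FFT tools) followed by $O(k)$-time LCE verification, this yields $O(nk/\sqrt{m})$ total time.

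The hard part is making the early termination compatible with the internal structure of the exact algorithm, whose intermediate sumset / bit-packed FFT values are not immediately monotone lower bounds on the final per-shift Hamming distances. This requires carefully ordering the computation so that after $O(k/\sqrt{m})$ rounds each surviving partial count is a valid lower bound, and making sure the preliminary filter does not reintroduce any logarithmic factors. Because the theorem only requires Monte Carlo correctness and expected running time, the Las Vegas guarantees of \cref{thm:exacthammingmain} can be invoked with the required high-probability bounds and absorbed into the overall success analysis.
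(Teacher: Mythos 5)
Your proposal takes a fundamentally different (and, as written, incomplete) route from the paper. The paper's proof is a short modular substitution: it observes that the bottleneck of the $O\big(n + \min\big(\tfrac{nk\sqrt{\log m}}{\sqrt{m}}, \tfrac{nk^2}{m}\big)\big)$ algorithm of Chan et al.\ \cite{ChanGKKP20} is a single subroutine (their Lemma~7.8) that computes $t$ sparse binary convolutions $f_i \star g_i$ with total support size $O(k)$ over a universe $[n]$, running in $O(k\min(k,\sqrt{n\log n}))$. Replacing the FFT step there with \cref{lem:x+y} (brute force when $n_i\le\sqrt{2n}$, the $X+Y$ lemma otherwise) yields $\sum_{n_i\le\sqrt{2n}}n_i^2 + \sum_{n_i>\sqrt{2n}} n\log(n_i^2/n) = O(k\min(k,\sqrt{n}))$, and the rest of \cite{ChanGKKP20} is reused verbatim; no regime split, early termination, or separate filtering step is needed.

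Your plan, by contrast, tries to re-derive the $k$-mismatch algorithm from scratch. For $k\le\sqrt{m}$ you sketch a pigeonhole/Aho--Corasick/kangaroo scheme with a separate approximately-periodic case; getting this to $O(n+nk^2/m)$ with no log factors is itself essentially the hard part of earlier works, and you do not explain how your tools remove those factors. More seriously, for $k>\sqrt{m}$ your ``early-termination'' idea has a genuine gap that you yourself flag: the per-character sumset counts produced by \cref{lem:x+y} are not per-shift lower bounds on the Hamming distance, and there is no evident way to order the computation so that partial counts become valid lower bounds after $O(k/\sqrt{m})$ rounds. You also never invoke the run-length-encoding kernelization of Gawrychowski and Uzna\'{n}ski \cite{GawrychowskiU18} (used inside \cite{ChanGKKP20}), which is precisely what reduces the problem to the sparse-convolution bottleneck that the $X+Y$ lemma can then accelerate. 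Without that reduction, the $O(nk/\sqrt{m})$ target does not follow from the pieces you assemble. I would recommend replacing the two-regime re-derivation with the paper's approach: reuse \cite{ChanGKKP20} as a black box and only improve the sparse-convolution subroutine using \cref{lem:x+y}.
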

This speeds up the previous $O\big (n + \min \big (\frac{nk}{\sqrt{m}}\sqrt{\log m},\frac{nk^2}{m}\big )\big )$-time Monte Carlo  algorithm (with high success probability) 
 time by  Chan, Golan, Kociumaka, Kopelowitz and Porat  \cite{ChanGKKP20},
and cleans up \emph{all} the extra factors from the long line of previous works shown in Table~\ref{tbl:exact} (note that 
$n+\frac{nk^2}{m}$ is never better than $n+\frac{nk}{\sqrt{m}}$).

Finally, we consider the fine-grained complexity of Text-to-Pattern Hamming Distances. As mentioned earlier, a reduction by Indyk (see \cite{indyk-lb}) gives a tight conditional lower bound for combinatorial  Text-to-Pattern Hamming Distances algorithms under the Boolean Matrix Multiplication Hypothesis. Indyk's reduction only gives an $n\cdot m^{\omega/2-1-o(1)}$ conditional lower bound for arbitrary (potentially non-combinatorial) algorithms. This lower bound is only non-trivial if $\omega>2$.

In the Appendix we observe\footnote{The authors thank Amir Abboud and Arturs Backurs for discussions around this observation in 2015--2016.}
that Indyk's reduction can be easily extended to start from the Equality Product of matrices \cite{vnotes}, which is known to be equivalent to  Dominance Product \cite{VassilevskaW09,MatIPL,LabibUW19,vnotes}). Equality Product and Dominance Product are among the so called ``intermediate'' matrix products \cite{Lincoln0W20} believed to require $n^{2.5-o(1)}$ time, even if $\omega=2$ (see also \cite{LabibUW19}). The observation gives a higher, $nm^{1/4-o(1)}$ time fine-grained lower bound for Text-to-Pattern Hamming Distances against potentially non-combinatorial algorithms which holds even if $\omega=2$. Similarly, Gawrychowski and Uzna\'{n}ski's  reduction \cite{GawrychowskiU18} from Matrix Multiplication to the $k$-mismatch problem can also be extended this way, giving a higher $\frac{n^{1-o(1)}k}{m^{3/4}}$ fine-grained lower bound against potentially non-combinatorial algorithms which holds even if $\omega=2$ and is only off by an $m^{1/4}$ factor  from the known combinatorial algorithms for the problem. 

Finally, we examine the relationship between Text-to-Pattern Hamming Distances and the well-studied $3$SUM problem. It has long been asked (see e.g. \cite{Uznanski20recent}) whether one can reduce $3$SUM to Text-to-Pattern Hamming Distances.

Recently, Chan, Vassilevska Williams and Xu \cite{ChanVX23} showed that $3$SUM is {\em fine-grained equivalent} to the following counting version called $\#$All-Nums-$3$SUM.

\begin{prob}[$\#$All-Nums-$3$SUM]
\label{prob:allnumscount}
Given three size $N$ sets $A, B, C$ of integers, for every $c \in C$, compute the number of $(a, b) \in A \times B$ where $a + b =c$. 
\end{prob}

We consider the following variant of $\#$All-Nums-$3$SUM in which one of the input sets is assumed to contain integers from a small range ($3$SUM  where the numbers of one of the three sets are from a small range was  mentioned in~\cite{ChanL15}).

\begin{prob}
\label{prob:$3$SUM_variant}
Given three size $N$ sets $A, B, C$ where $C = [N]$, for every $c \in C$, compute the number of $(a, b) \in A \times B$ where $a + b =c$. 
\end{prob}

We show that Text-to-Pattern Hamming Distances when $n = O(m)$ is {\em equivalent} to Problem \ref{prob:$3$SUM_variant}. This at least partially addresses the relationship between Text-to-Pattern Hamming Distances and $3$SUM, as Problem \ref{prob:$3$SUM_variant} can be viewed as a range-restricted version of $3$SUM (as $3$SUM is equivalent to $\#$All-Nums-$3$SUM).

\begin{restatable}[Equivalence with a variant of $3$SUM]{theorem}{equivalence}
\label{thm:equivalence}
If \cref{prob:$3$SUM_variant} has a $f(N)$ time algorithm, then Text-to-Pattern Hamming Distances with $n = O(m)$ has an $\OO(f(m))$ time algorithm, and vice versa. 
\end{restatable}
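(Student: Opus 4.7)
\textbf{Forward direction: Hamming reduces to \cref{prob:$3$SUM_variant}.}
The plan is to turn a Text-to-Pattern Hamming Distances instance with $|P|=m$ and $|T|=n=O(m)$ into one call of \cref{prob:$3$SUM_variant} with $N=O(m)$. Fixing any multiplier $M>2n$, I would set
\[
A=\{M\cdot P[j]+j : j\in[m]\},\qquad B=\{-M\cdot T[k]+(n-1-k) : k\in[n]\},\qquad N=n,
\]
padding $A,B$ with dummy values whose sums cannot land in $C=[N]$ if needed. Because $M$ dwarfs the positional offsets, $a+b$ lies in $[0,N)$ precisely when $P[j]=T[k]$, and in that case $a+b=n-1-(k-j)$. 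Thus the \cref{prob:$3$SUM_variant} oracle returns, at $c=n-1-i$, the number of character matches at shift $i$ between $P$ and $T[i\dd i+m-1]$, and the Hamming distance is $m$ minus this count.

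\textbf{Reverse direction: \cref{prob:$3$SUM_variant} reduces to Hamming.}
For the harder direction, my plan is to bucket each $a\in A$ as $a=q_aN+r_a$ (and $b\in B$ analogously) with residues $r_a,r_b\in[0,N)$. Since any contributing $a+b$ lies in $[0,N)$, only two bucket patterns are possible: (i)~$q_a+q_b=0$ with $r_a+r_b=c$, and (ii)~$q_a+q_b=-1$ with $r_a+r_b=c+N$. Each case has the ``convolution in positions, equality in characters'' shape computed by a Text-to-Pattern Hamming Distances instance: I would build a pattern $P'$ placing character $q_a$ at a slot indexed by $r_a$, and a reversed text $T'$ placing character $-q_b$ (resp.\ $-q_b-1$ for case (ii)) at a slot indexed by $r_b$, so that at the shift corresponding to target $c$ the equal-character matches count exactly the desired pairs $(a,b)$ with $a+b=c$. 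The equal-character match count is extracted from the Hamming distance through the standard inclusion--exclusion against prefix character counts of $P'$ and $T'$.

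\textbf{Main obstacle.}
The subtlety is that many $a\in A$ may share a residue $r_a$ while carrying different quotients $q_a$, which a one-character-per-slot encoding cannot accommodate. The plan is to precompose with a random hash modulo a prime $p>N$: because every target $c$ satisfies $c<p$, the condition $a+b=c$ is preserved under $a\mapsto a\bmod p$, and standard balls-in-bins bounds keep every residue class of size at most $L=O(\log N/\log\log N)$ with high probability. I would reserve $L$ consecutive slots per residue in $P'$ and $T'$, and sum the Hamming output across the $O(L)$ shifts corresponding to each target $c$. Since $\sum_{r}|A^{(r)}|=N$, the enlarged instance has length $\OO(N)$, so only $O(1)$ Hamming calls suffice and yield the claimed $\OO(f(N))$ bound. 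The most delicate parts of the proof are (a)~inserting separator characters between blocks so that cross-block alignments cannot contaminate the intended sum, and (b)~removing the false positives created by modular wraparound $(a\bmod p)+(b\bmod p)=c+p$, which I would handle by one auxiliary Hamming pass keyed on the high-order bits of $a$ and $b$.
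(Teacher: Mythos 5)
Your forward direction is correct and matches the paper's in spirit: both encode a character-position pair as a single integer, scaled so that only equal characters contribute to sums in $[N]$, then read off match counts from $C$ (the paper uses $-2nP[i]-i$ and $2nT[i]+i$; your $M\cdot P[j]+j$ and $-M\cdot T[k]+(n-1-k)$ is the same idea).

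Your backward direction departs from the paper's, and the sketch has a few real issues. First, the phrase ``precompose with a random hash modulo a prime $p>N$'' together with ``$a\mapsto a\bmod p$'' is a problem: if you literally replace $a$ by $a\bmod p$ you destroy the quotient $q_a$, which is precisely the object you wanted to serve as a character, so the equality test no longer certifies $a+b=c$. The workable version of your idea is to \emph{decompose with modulus $p$} rather than $N$ (i.e.\ keep $q_a=\lfloor a/p\rfloor$ as character and $r_a=a\bmod p$ as position). But then your cases (i) and (ii), now with $p$ in place of $N$, are both \emph{exact} events with no false positives: $a+b=c<p$ holds iff ($q_a+q_b=0$, $r_a+r_b=c$) or ($q_a+q_b=-1$, $r_a+r_b=c+p$), and the $(-q_b-1)$-characters pass handles case (ii) directly. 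Your auxiliary ``pass keyed on high-order bits'' to remove wraparound false positives is thus a symptom of the wrong interpretation, not a needed step. Second, the $L=O(\log N/\log\log N)$ bound is the uniform balls-in-bins bound and does not apply to a modular hash $a\mapsto a\bmod p$; for a random prime $p=\Theta(N)$, two distinct integers of magnitude $U$ collide with probability $O(\log U/N)$, giving $O(\log U)$ expected residue-class size (so you need the usual polynomial-universe assumption on the 3SUM input, and $L$ is really $O(\polylog N)$, not $O(\log N/\log\log N)$). These are fixable, but as written the sketch doesn't quite close.

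For comparison, the paper's backward reduction avoids both pitfalls by a different mechanism: it buckets $A,B$ by the true quotient $g=\lfloor\cdot/N\rfloor$, matches $A_g$ only against $B_{g-1},B_g,B_{g+1}$, then applies an independent random \emph{additive} shift $s_g\in[N]$ to each bucket (keeping all values in $[0,2N)$ with no modular arithmetic at all), and handles the resulting multiplicity at each position --- $O(\log N)$ w.h.p.\ by a Chernoff bound over independent bucket shifts --- via $O(\log^2 N)$ ``layer pair'' Hamming instances rather than by widening each position into a block of $L$ slots. This sidesteps separator-block bookkeeping, works for arbitrary universe size, and needs no wraparound analysis, at the cost of $O(\log^2 N)$ oracle calls instead of your $O(1)$ calls on strings of length $\OO(N)$.
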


Bringmann and Nakos \cite{BringmannN20} designed a reduction from  Text-to-Pattern Hamming Distances to a problem called Interval-Restricted Convolution, which is more general than \cref{prob:$3$SUM_variant}, and their reduction also works from  Text-to-Pattern Hamming Distances to \cref{prob:$3$SUM_variant}. We show that the reduction is also possible in the other direction from \cref{prob:$3$SUM_variant} to  Text-to-Pattern Hamming Distances, establishing the equivalence.

\renewcommand{\arraystretch}{1.3}

\begin{table}\centering\small
\begin{tabular}{|l|l|}\hline
reference & run time\\\hline\hline
Fischer and Paterson~\cite{fispat} &  $O(\sigma n\log m)$\\
Abrahamson~\cite{Abrahamson87} & $O(n\sqrt{m\log m})$\\
new & $O(n\sqrt{m})$\\\hline
Landau and Vishkin~\cite{LandauV86,LandauV89} / Galil and Giancarlo~\cite{GalilG86} & $O(nk)$\\
Sahinalp and Vishkin~\cite{SahinalpV96}  & $O(n + \frac{nk^{O(1)}}{m})$\\
Cole and Hariharan~\cite{ColeH02} & $O(n + \frac{nk^4}{m})$\\
Amir, Lewenstein and Porat~\cite{AmirLP04} & $O(\min\{n\sqrt{k\log k},\, 
n\log k + \frac{nk^3\log k}{m}\})$\\
Clifford, Fontaine, Porat, Sach, and Starikovskaya~\cite{CliffordFPSS16} & $O(n\log^{O(1)}m + \frac{nk^2\log k}{m})$\\
Gawrychowski and Uzna\'{n}ski~\cite{GawrychowskiU18} & $O(n\log^2 m\log\sigma +
\frac{nk\sqrt{\log n}}{\sqrt{m}})$\\
Chan, Golan, Kociumaka, Kopelowitz and Porat~\cite{ChanGKKP20} &
$O(n + \min\{\frac{nk\sqrt{\log m}}{\sqrt{m}},\, \frac{nk^2}{m}\})$\\
new & $O(n + \frac{nk}{\sqrt{m}})$\\\hline
\end{tabular}
\caption{Exact algorithms for the text-to-pattern Hamming distance problem (randomization allowed).}
\label{tbl:exact}
\end{table}

\begin{table}[ht]\centering\small
\begin{tabular}{|l|l|l|}\hline
reference & run time & techniques\\\hline
Karloff~\cite{Karloff93} & $\OO(\eps^{-2}n)$ & random projection\\ 
Indyk~\cite{Indyk98a} & $\OO(\eps^{-3}n)$ & random sampling\\
Kopelowitz and Porat~\cite{KopelowitzP15} &
$\OO(\eps^{-1}n)$ & projection with sparse recovery\\
Kopelowitz and Porat~\cite{KopelowitzP18} &
$\OO(\eps^{-1}n)$ & random projection\\
Chan, Golan, Kociumaka, Kopelowitz and Porat~\cite{ChanGKKP20} &
$\OO(n)$ for $m\gg \eps^{-28}$ & random sampling $+$ rect.\ matrix mult.\\
new & $\OO(\eps^{-0.93}n)$ & \#3SUM techniques with matrix mult.\\\hline
\end{tabular}
\caption{Approximation algorithms for the text-to-pattern Hamming distance problem, focusing on $\eps$-dependencies and ignoring logarithmic factors (randomization allowed).}
\label{tbl:approx}
\end{table}

\subsection{Technical overview}

Both our approximation and exact algorithms for
Text-to-Pattern Hamming Distances
use interesting new techniques, on which we now briefly elaborate.

\paragraph{Approximation algorithm.}
Our new approximation algorithm for Theorem~\ref{thm:approxhammingmain} uses an approach that is markedly different from all 
previous approximation algorithms.
The algorithms by Karloff~\cite{Karloff93} and Kopelowitz and Porat~\cite{KopelowitzP18} used random projection to reduce the alphabet size; afterwards, the problem can be solved by FFT\@.  Karloff's algorithm
required $O(\eps^{-2})$ projections, whereas Kopelowitz and Porat's
algorithm required a reduced alphabet size of $O(\eps^{-1})$.  On the other hand, the algorithms by Indyk~\cite{Indyk98a} and Chan et al.~\cite{ChanGKKP20} used
random sampling to examine selected positions of the pattern and text strings.  An application of the Chernoff bound 
leads to $O(\eps^{-2})$ factors in the running time, which are too big for
the critical case $\eps^{-1}=\sqrt{m}$.

In contrast, our new algorithm follows an approach that is actually 
closer to the known exact algorithms.  We view the problem as 
a certain colored counting variant of $3$SUM (where colors correspond to characters
in the alphabet), which can be decomposed into multiple instances of an
uncolored counting $3$SUM problem (one per character in the alphabet).

Recently, Chan, Vassilevska Williams and Xu~\cite{ChanVX23} gave reductions from
counting versions of basic problems in fine-grained complexity,
including $3$SUM and Exact-Triangle (finding triangles with weight exactly zero in a dense weighted graph), to their original versions.
They obtained their results using a simple combination of
``Fredman's trick''\footnote{The (trivial) observation that $a+b\le a'+b'$ is equivalent to $a-a'\le b'-b$, which is the key insight behind Fredman's seminal paper on all-pairs shortest paths (APSP)~\cite{fredman1976new}, and used in many subsequent works on APSP and 3SUM (e.g., \cite{Takaoka98,Chan10,Williams18,GronlundP18, chan3sum}).} and  Equality Product.
We show that their ideas, originally developed for proving fine-grained equivalences and conditional lower bounds, can be adapted to design faster algorithms for \emph{approximate} counting versions of $3$SUM and
Exact-Triangle.

More specifically, Fredman's trick and Equality Product allow us to compute counts in the case when counts are large (i.e., when the number of ``witnesses'' is large)~\cite{ChanVX23}.  Chan, Vassilevska Williams and Xu then
used oracles to handle the case when counts are small (since they were designing reductions, not algorithms), but we observe that
the small-count case is actually easier in the context 
of approximation: we can use random sampling with smaller sample sizes, since the variance is lower. 

To summarize, our new algorithm is technically interesting for
multiple reasons:

\begin{enumerate}
\item It further illustrates the power of the ``Fredman's trick meets Equality Product'' technique from \cite{ChanVX23}, in the context
of approximation algorithms.  These ideas might spawn further applications.
\item It demonstrates that fine-grained reductions 
(originally developed for proving conditional lower bounds) can
help in the design of algorithms.  In particular, our algorithm makes essential use of
the known chain of reductions from  Convolution-$3$SUM to Exact-Triangle ~\cite{VassilevskaW09}, and from $3$SUM to
Convolution-$3$SUM~\cite{ChanHe}.
\item Its use of matrix multiplication is non-trivial and interesting.
It is open whether fast matrix multiplication helps for the exact Text-to-Pattern Hamming Distances problem, but our new algorithm demonstrates that it helps for the approximate problem.  Chan et al.'s previous
algorithm~\cite{ChanGKKP20} also used rectangular matrix multiplication
to speed up certain steps, but our algorithm here relies on matrix multiplication (via Equality Product) in a more essential way.
\end{enumerate}

\paragraph{Exact algorithms.}
Our exact algorithm for Theorem~\ref{thm:exacthammingmain}
also works by decomposing into multiple $3$SUM-like subproblems (one per character of the alphabet). More precisely, a subproblem corresponds to
computing a sumset $X+Y$ (where we also want the counts/multiplicities per element) for
two given sets of $n$ integers in $[U]$; equivalently, this corresponds
to computing the convolution of two sparse binary vectors in $[U]$ with $n$ nonzero entries.
The critical case in our application turns out to be when $U$ is below and close to $n^2$;
this is when 
the standard  $O(U\log U)$-time algorithm by FFT does not outperform  the brute-force $O(n^2)$-time algorithm.
We present a new lemma showing that the sumset/convolution can be
computed in $O(U\log (n^2/U))$ expected time, which outperforms both FFT and brute-force, and is good enough to shave
off all the extra logarithmic factors and yield the $O(n\sqrt{m})$ bound
for the exact Text-to-Pattern Hamming Distances problem.

A number of algorithms have already been developed for sparse convolution
\cite{ColeH02,ChanL15,BringmannFN22}.  The current fastest algorithm by Bringmann et al.~\cite{BringmannFN22}
is complicated, and does not address our question of speeding up
$O(U\log U)$ in the regime of $U$ close to $n^2$.
Our new algorithm shares some ideas from these previous algorithms, but
is arguably simpler, and more accessible, than the one of \cite{BringmannFN22}.  It only requires
Dietzfelbinger's standard family of almost linear hash functions~\cite{Dietzfelbinger96} (though we need to establish some new properties).
Hashing is used to iteratively shrink the ``support'' (the subset of elements whose counts are not known yet), but the key twist is an extra use
of recursion to identify candidates for ``light'' elements (elements with small counts).  A bit-packed version of FFT  is used to compute counts with a small modulus.

By further incorporating some ideas from \cite{ChanL15}, we obtain 
a derandomization (Theorem~\ref{thm:exacthammingdetmain}), albeit with an extra factor of about $\log^{1/4}m$.

It is straightforward to combine our new lemma with existing algorithms
\cite{GawrychowskiU18,ChanGKKP20} to obtain our result on the $k$-mismatch problem (Theorem~\ref{thm:kmismatchmain}) and the Dominance Convolution problem (Theorem~\ref{thm:exactdominancemain}).

\section{Preliminaries}

A string $S$ of length $|S| = s$ is a sequence of characters $S[1]S[2]\dots   S[s]$ over an alphabet $\Sigma$. We assume the alphabet has size $|\Sigma| = \sigma \le n^{O(1)}$, and identify $\Sigma$ with the set of integers in $[\sigma]$.
For $1\le i\le j\le s$, we denote the substring $S[i]S[i + 1] \dots S[j]$ of $S$ by $S[i \dd j]$.

We use $[n]$ to denote $\{0, 1, \ldots, n-1\}$. 

\begin{definition}
Given two length-$n$ sequences $\langle a_0,\ldots,a_{n-1}\rangle$ and $\langle b_0,\ldots,b_{n-1}\rangle$, their convolution $c = a \star b$ is a length $2n-1$ sequence, where 
$c_i = \sum_{j=0}^{i} a_i b_{j-i}$ (assume that out-of-range
array entries are set to $0$). 
\end{definition}

It is well-known that we can compute the convolution between two integer sequences in $O(n \log n)$ time using FFT. If one instead only needs to compute the entries of $c$ modulo some given prime $p$, then a slightly faster running time is possible, in the word-RAM model with $O(\log n)$ bit words:
\begin{restatable}{lemma}{packedFFT}
\label{lem:packed_FFT}
Given a prime $p \le n^{O(1)}$ and two length-$n$ sequences $a, b$ with entries in $\mathbb{F}_p$,  we can deterministically compute $a \star b$ in $O(n \log p)$ time. 
\end{restatable}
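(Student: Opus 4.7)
The plan is to compute the convolution by a bit-packed Number-Theoretic Transform (NTT), shaving the usual $(\log n)/(\log p)$ factor over a direct $O(n\log n)$ FFT.

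First I would select a prime $q$ with $q\ge 2np^2$ and $q\le n^{O(1)}$ that admits a $2n$-th primitive root of unity; such primes are dense (e.g., of the form $c\cdot 2^k+1$) and can be found deterministically in $n^{o(1)}$ time for our range of parameters. Each element of $\mathbb{F}_q$ then fits in one machine word. Since every entry of the integer convolution $a\star b$ is bounded by $np^2<q$, it suffices to compute $a\star b$ modulo $q$ via NTT and then reduce each of the $O(n)$ output entries modulo $p$ in $O(n)$ additional time.

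The core step is to perform a length-$2n$ NTT over $\mathbb{F}_q$ in $O(n\log p)$ deterministic word operations rather than $O(n\log n)$. The key observation is that the inputs occupy only $O(\log p)$ bits each, so $g=\Theta((\log n)/(\log p))$ of them fit in one word. I would use a Cooley-Tukey decomposition organized so that $g$ inner NTTs are executed in parallel within a single word, with their corresponding entries interleaved across the slots of the word and separated by slack bits. Each level of inner butterflies is then simulated by a constant number of word-level additions and subtractions, together with a scalar multiplication by the twiddle factor, effectively performing $g$ field operations per word operation. The inner stage then costs $O((n/g)\log(n/g))=O(n\log p)$ word operations; the outer stage of $2n/g$ length-$g$ NTTs with twiddles contributes $O(n\log g)=O(n\log\log n)$, which is absorbed into $O(n\log p)$ except in the corner case $p=O(1)$, handled by padding the modulus up to $\lceil\log n\rceil$ bits before running the algorithm. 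Pointwise multiplication in $\mathbb{F}_q$ is $O(n)$, and the inverse NTT is symmetric.

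The main obstacle is keeping the packed representation correct across the many butterfly levels: an intermediate value can grow past its allotted slot and contaminate a neighbor unless it is periodically reduced modulo $q$, and twiddle-factor multiplications are more delicate than the additive butterflies because the scalar may not fit inside a single slot. These issues are resolved by reserving enough slack bits so that a bounded number of butterflies can be performed between slot-wise modular reductions, and by either multiplying twiddles slot-by-slot (one scalar multiplication per packed word) or using small precomputed tables indexed on packed data; both options remain within the $O(n\log p)$ budget and are fully deterministic.
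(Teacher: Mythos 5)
Your proposal has a genuine gap, and it is essentially the same gap that makes the na\"ive ``pack and FFT'' idea unsound. The core problem is that elements of $\mathbb{F}_q$ with $q \geq 2np^2$ necessarily occupy $\Theta(\log q) = \Theta(\log n + \log p)$ bits --- essentially a full word --- not $O(\log p)$ bits. Only the \emph{initial input} to the NTT consists of small $\mathbb{F}_p$ values; the twiddle factors $\omega_{2n}^{jk}$ are arbitrary elements of $\mathbb{F}_q$, so after the very first multiplicative butterfly the intermediate entries are generic $\mathbb{F}_q$ elements of $\Theta(\log n)$ bits. Reducing modulo $q$, as you suggest, keeps a value in $[0,q)$ but does nothing to shrink it back to $O(\log p)$ bits, so the ``slot'' invariant you rely on is immediately destroyed. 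Slack bits only delay this by $O(1)$ additive-only levels; the $\Theta(\log n)$ levels that remain must be done one full-word element at a time, giving $O(n\log n)$ rather than $O(n\log p)$. Your fallback of applying twiddle multiplications slot-by-slot gives $g$ scalar multiplications per packed word, which also destroys the speedup. There is no choice of prime $q$ that helps: any $q$ with a $2n$-th root of unity must satisfy $q \equiv 1 \pmod{2n}$, hence $q > 2n$ and $\log q = \Omega(\log n)$, so ``small-element NTT over a single Fourier prime'' is structurally impossible when $p \ll n$.

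The paper's proof avoids exactly this obstacle by not using a Fourier prime at all. It works in extension fields $\mathbb{F}_{p^L}$, exploiting (via the number-theoretic lemmas of Harvey--van~der~Hoeven--Lecerf) the fact that $p^\lambda-1$ has a large smooth divisor for a suitable $\lambda$ of size $(\log n)^{\Theta(\log\log\log n)}$. Roots of unity of smooth order $N$ then live in $\mathbb{F}_{p^L}$ with $L \approx n/N$, and arithmetic in $\mathbb{F}_{p^L}$ reduces to degree-$L$ polynomial multiplication over $\mathbb{F}_p$, which is attacked by a \emph{second} application of the same construction. After two levels, the extension degree $L_2$ is so small that $L_2\log p < 0.1\log n$, at which point genuine bit-packing and table look-ups for the tiny DFTs are valid, and one recovers $O(n\log p)$ overall. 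The crucial difference from your proposal is that the ring in which the small packed DFTs take place really does have $O(\log p)$-bit (or sub-word) elements, including its roots of unity, because those roots come from the multiplicative structure of $p^{L_2}-1$ rather than from a prime forced to exceed $n$.
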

Indyk \cite{Indyk98a} claimed a proof of
\cref{lem:packed_FFT} for the case of $p=2$, but his proof was incomplete.
For the $p=2$ case, his argument can be completed via a more recent work \cite{LinAHC16}, but it does not extend to larger $p$. 
In \cref{app:poly-mult} we point out the issue in Indyk's argument (and mention subsequent works affected by this issue),  and then include a complete proof of \cref{lem:packed_FFT} for general $p$ that fixes this issue.

We use $M(n_1,n_2,n_3)$ to denote the time for computing the product between an $n_1 \times n_2$ matrix and an $n_2 \times n_3$ matrix. We use the following algorithm for computing the equality product between two matrices, which follows from known techniques \cite{MatIPL, YusterDom} (see also \cite{ChanVX23}).  

\begin{lemma}\label{lem:eqprod}
Given an $n_1\times n_2$ matrix $A$ and an $n_2\times n_3$ matrix $B$, 
their \emph{equality product} 
\[ C[i,j] := |\{k\in [n_2]: A[i,k]=B[k,j]\}| \]
can be computed in time
\[ \Meq(n_1,n_2,n_3)\ =\ O\left( \min_{1\le r\le n_2}\left(\frac{n_1n_2n_3}{r} +
M(n_1,rn_2,n_3)\right)\right).
\]
\end{lemma}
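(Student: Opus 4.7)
The plan is to prove the bound by the standard heavy/light balancing trick, choosing a single threshold $r\in[1,n_2]$ and splitting the contribution to each $C[i,j]$ according to whether the common value $A[i,k]=B[k,j]$ is frequent in row $k$ of $B$ or not. The key combinatorial observation is that for each fixed $k$, the frequencies $f_k(v):=|\{j:B[k,j]=v\}|$ sum to $n_3$, so at most $r$ distinct values can satisfy $f_k(v)\ge n_3/r$; call those the \emph{heavy} values at $k$, and the rest \emph{light}.

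First I would preprocess $B$ in $O(n_2n_3)$ time to record, for every $(k,v)$, both the frequency $f_k(v)$ and an explicit list $L_k(v)$ of columns $j$ with $B[k,j]=v$. For the light contribution, I would then iterate over all $(i,k)\in[n_1]\times[n_2]$, set $v=A[i,k]$, and, if $v$ is light at row $k$ of $B$, traverse $L_k(v)$ (of length strictly less than $n_3/r$) and increment $C[i,j]$ by $1$ for each $j\in L_k(v)$. This correctly accounts for every witness $(i,j,k)$ with $A[i,k]$ light at $k$, and runs in total time $O(n_1n_2\cdot n_3/r)=O(n_1n_2n_3/r)$.

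For the heavy contribution, I would introduce a new index set consisting of the pairs $(k,v)$ with $v$ heavy at row $k$ of $B$, padded to exactly $r$ per row, for a total of $rn_2$ indices. Define $\tilde A\in\{0,1\}^{n_1\times rn_2}$ by $\tilde A[i,(k,v)]=[A[i,k]=v]$ and $\tilde B\in\{0,1\}^{rn_2\times n_3}$ by $\tilde B[(k,v),j]=[B[k,j]=v]$; then $(\tilde A\tilde B)[i,j]$ is exactly the number of $k$ such that $A[i,k]=B[k,j]$ and this common value is heavy at $k$. Computing $\tilde A\tilde B$ in $M(n_1,rn_2,n_3)$ time and adding it entrywise to the partial $C$ produced in the light phase yields the total count, and taking the best $r$ gives the stated bound.

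\textbf{Main obstacle.} Since this is essentially a balancing argument, there is no serious obstacle. The one point to be careful about is that the heavy/light classification must be done asymmetrically, namely only with respect to rows of $B$ and not also columns of $A$; otherwise the rectangular matrix $\tilde A$ would not have the clean $rn_2$-column structure that drives the $M(n_1,rn_2,n_3)$ term. Everything else (the $O(n_1n_2+n_2n_3)$ preprocessing, the padding of the heavy index set, and the entrywise additions into $C$) is dominated by the two terms in the claimed bound.
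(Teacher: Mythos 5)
Your proof is correct, and it is the standard heavy/light argument from the references the paper cites for this lemma (the paper itself gives no proof, only citations to Matou\v{s}ek-style dominance/equality product techniques). The one small point you gloss over is how to look up $f_k(v)$ and $L_k(v)$ given an arbitrary query value $v = A[i,k]$ in $O(1)$ time: this needs hashing, or a per-row sorted array with binary search (which costs an extra log), or a rank-reduction preprocessing step; any of these is easily absorbed into the stated bound, so the argument stands.
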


For example, in the case of square matrices, the above implies
$\Meq(n,n,n)=O(\min_r (n^3/r + M(n,rn,$ $n)))\le O(\min_r (n^3/r + rn^\omega))=
O(n^{(3+\omega)/2})$.

\section{Approximate Text-to-Pattern Hamming Distances}

In this section, we begin by solving approximate counting variants of
several core problems in fine-grained complexity---namely, Exact Triangle, Convolution-$3$SUM, and $3$SUM\@.  All this will then lead to an algorithm for
approximate Text-to-Pattern Hamming Distances.

\subsection{Approximate Counting All-Edges Exact Triangle}

In recent work \cite{ChanVX23}, Chan, Vassilevska Williams and Xu proved fine-grained equivalences between several central fine-grained problems and their counting versions. Let us take the All-Edges Exact Triangles (AE-Exact-Triangle) problem for an example. In this problem, when given a weighted tripartite graph, and for each edge, we need to decide whether this edge is in a triangle whose edge weights sum up to $0$. In the counting version of AE-Exact-Triangle, we need to count the number of zero-weight triangles each edge is in. 
Equivalently, given 3 matrices $A$, $B$, and $C$, we need to count the number of $k$s such that $A[i,k]+B[k,j]=-C[i,j]$, for each $i,j$.
Prior to \cite{ChanVX23}, via the technique in~\cite{focsyj}, it was known that AE-Exact-Triangle is equivalent to its counting version where all the per-edge counts are small. The key observation in \cite{ChanVX23} is that, when the per-edge counts are big, we can efficiently compute the counts exactly, using Fredman's trick \cite{fredman1976new} in combination with Equality Product. 

The following lemma adapts this approach to an approximate counting setting with additive error.  The main new idea is that when counts are small, we can use random sampling at a lower rate to estimate such counts, since the variance is lower.  In fact, even in the case when counts are big, we can also use sampling at different rates to approximate the Equality Products a little more quickly.

Let us briefly discuss the lemma statement below.
First, in the approximate setting of AE-Exact-Triangle, it turns out that
the third matrix $C$ is unnecessary: one can design a truly subcubic time algorithm
 that solves the problem for all $C$ at the same time.
Intuitively, the reason is that when additive error is allowed, small counts in principle may be approximated by zeros, and zero values need not be output explicitly. Thus, it suffices to estimate the count values $\COUNT[i,j,z]:=|\{k\in [n_2]: A[i,k]+B[k,j]=z\}|$ for the ``heavy hitters'' $z$ with sufficiently large
counts, but the number of such $z$ is sublinear per $(i,j)$.  

Second, the lemma statement below bounds the variance of the estimators, instead of the additive error (which is bounded by the square root of the variance with good probability); this will be important, as we will later need to sum several estimators 
(if they are independent or uncorrelated, we can sum their variances).

Third, the lemma involves multiple parameters, and on first reading,
it may be helpful to focus, throughout this section, on the simplest setting with $t=1$ and $\Delta=1$ (where
one can ignore the condition about uncorrelation), which is already sufficient to address the critical case of the Hamming Distances problem when
$\eps^{-1}=\sqrt{m}$ 
and distances are $\Theta(m)$ (where we want $O(\sqrt{m})$ additive error).

\begin{lemma}\label{lem:approx:exacttri}
Given an $n_1\times n_2$ integer matrix $A$ and an $n_2\times n_3$ integer matrix $B$,
where all values of $A$ are divisible by positive integer $\D \le n_3$, 
define 
\[ \COUNT[i,j,z]:=|\{k\in [n_2]: A[i,k]+B[k,j]=z\}|. \]
Given parameter $1 \le t \le n_2$, there is a randomized algorithm
that computes estimates $f[i,j,z]$ over all $i\in [n_1]$, $j\in [n_3]$, and $z$, such 
that the expectation of $f[i,j,z]$ is equal to $\COUNT[i,j,z]$,
and the variance of $f[i,j,z]$ is $O(t n_2)$.  
(Zero entries of $f$ need not be output explicitly.)
Furthermore, $f[i,j,z]$ and $f[i',j',z']$ are uncorrelated if $(i,j)\neq (i',j')$ and
$z\not\equiv z'\!\!\pmod{\D}$.
The running time is
\[ \OO\left( \min_{1 \le s \le n_2 / t}\left( \frac{n_1n_2n_3}{st} + 
s\D M(n_1,n_2/t,n_3/\D)\right) \right).
\]
\end{lemma}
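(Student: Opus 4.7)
My plan is to combine Bernoulli sampling at rate $1/t$ along the middle dimension with the Fredman's-trick--plus--Equality-Product framework of~\cite{ChanVX23}, using an \emph{independent} sample per residue class of $z\bmod\D$. Because $\D\mid A[i,k]$, any witness of $\COUNT[i,j,z]$ must satisfy $B[k,j]\equiv z\pmod\D$, so the problem splits cleanly into $\D$ subproblems indexed by $r:=z\bmod\D$. Processing these $\D$ subproblems independently is exactly what supplies the factor $\D$ in the second term of the target runtime and the factor $1/\D$ in the first term (via the $n_3/\D$ inside the matrix product), and it simultaneously delivers the required uncorrelation for free.

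Concretely, for each $r\in[\D]$ I draw an independent Bernoulli sample $K_r\subseteq[n_2]$ in which every index is included with probability $1/t$, and define
\[ f[i,j,z]\ :=\ t\cdot\bigl|\{k\in K_{z\bmod\D}:A[i,k]+B[k,j]=z\}\bigr|. \]
Then $\Ex[f[i,j,z]]=\COUNT[i,j,z]$ and, by independence of the Bernoulli trials, $\Var(f[i,j,z])=t(1-1/t)\COUNT[i,j,z]\le tn_2$. If $z\not\equiv z'\!\!\pmod\D$, then $f[i,j,z]$ and $f[i',j',z']$ depend on the disjoint independent samples $K_{z\bmod\D}$ and $K_{z'\bmod\D}$ and are therefore fully independent, which is strictly stronger than the claimed uncorrelation.

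For the computation, fix one residue $r$. Only entries of $B$ with $B[k,j]\equiv r\pmod\D$ can participate in a residue-$r$ witness, so after a standard load-balancing reindexing (independently per $r$, hash or explicitly bucket the active entries into $O(n_3/\D)$ columns) we may regard the residue-$r$ restriction $B_r$ as a matrix of shape $(n_2/t)\times O(n_3/\D)$ on the sampled rows $K_r$. The task ``for each $(i,j)$, report the nonzero counts $|\{k:A[i,k]+B_r[k,j]=z\}|$ across $z$'' is exactly what Fredman's trick combined with Equality Product handles in~\cite{ChanVX23}: heavy $z$ of count $\ge s$ (of which there are at most $O((n_2/t)/s)$ per cell) are located through $O(s)$ calls to Equality Product on matrices of dimensions $n_1\times(n_2/t)\times O(n_3/\D)$, while light entries are enumerated directly in $O(n_1(n_2/t)(n_3/\D)/s)$ time. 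Applying~\cref{lem:eqprod} gives per-residue runtime $\OO(\Meq(n_1,n_2/t,n_3/\D))=\OO(\min_s(n_1(n_2/t)(n_3/\D)/s+s\cdot M(n_1,n_2/t,n_3/\D)))$, and multiplying by $\D$ residues yields the claimed $\OO(\min_{s}(n_1n_2n_3/(st)+s\D\,M(n_1,n_2/t,n_3/\D)))$.

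The main obstacle is the residue-balanced representation of $B$: some rows of $B$ might be heavy in a particular residue and threaten the promised $O(n_3/\D)$ column width. Independent hashing per $r$ (with careful chopping of overflowing buckets) fixes this while preserving cross-residue independence. A second detail to handle is that the Equality-Product step produces counts \emph{within the subsample}---the factor $t$ in the definition of $f$ is precisely what compensates for this, so unbiasedness, the variance bound, and uncorrelation all carry over unchanged. Finally, using a Bernoulli (rather than fixed-size) sample means $|K_r|$ is only $n_2/t$ in expectation; standard Chernoff concentration plus a simple truncation at $O(n_2/t)$ keeps the stated runtime with high probability without perturbing the first two moments of $f$.
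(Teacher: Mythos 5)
Your proposal gets the probabilistic analysis (unbiasedness, variance, cross-residue uncorrelation) right \emph{given} that exact counts within the sample could be produced cheaply, but the computational part has a genuine gap that makes the runtime claim incorrect, and this is precisely the part the paper's proof has to work hardest to get right.

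The first problem is the single uniform sampling rate $1/t$. Every sampled triple $(i,k,j)$ with $k\in K_r$ contributes a nonzero count, so the number of nonzero entries of $f$ is already $\Theta(n_1\cdot(n_2/t)\cdot n_3)$; there is no $1/s$ factor anywhere, and your claimed light-entry enumeration time $O(n_1(n_2/t)(n_3/\D)/s)$ has no justification. The paper avoids this precisely by sampling the ``light'' regime at a much sparser rate $\rho_* = 1/(st)$ (independently per cell), which brings the enumeration and output size down to $\OO(n_1n_2n_3/(st))$ while keeping the variance at $st\cdot\COUNT\le tn_2$ since light $z$'s have $\COUNT\le n_2/s$. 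Simultaneously, in the ``heavy'' regime the paper uses $\log s$ levels with progressively coarser samples $\rho_\ell=1/(2^\ell t)$ paired with progressively larger pivot sets $H^{(\ell)}$; this product structure is what lets the sum over levels collapse to $s\D M(n_1,n_2/t,n_3/\D)$. A single-rate sample cannot achieve this trade-off.

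The second problem is the bookkeeping of pivots versus the $r$ parameter in \cref{lem:eqprod}. A single equality product returns one scalar per $(i,j)$; it does not produce a spectrum of counts over $z$. To recover $\COUNT[i,j,z]$ for many $z$'s you must first fix a pivot $k_0$ (which selects the target $z=A[i,k_0]+B[k_0,j]$) and then appeal to Fredman's trick, and you need a hitting set of pivots large enough to hit every heavy $z$. With your definition (count $\ge s$ within a sample of size $n_2/t$) you would need $\tilde O((n_2/t)/s)$ pivots and hence that many EP calls per residue, not $O(s)$. You appear to be conflating the $r$ that appears inside $\Meq$ in \cref{lem:eqprod} (an internal brute-force/matrix-product trade-off within one equality product) with the number of pivots, and the claimed ``per-residue runtime $\OO(\Meq(n_1,n_2/t,n_3/\D))$'' does not follow.

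Finally, the ``residue-balanced reindexing'' of $B$ into an $(n_2/t)\times O(n_3/\D)$ matrix does not work as stated: the condition $B[k,j]\equiv r\pmod\D$ is a constraint on the pair $(k,j)$, so the active entries of $B$ for residue $r$ do not live in a rectangular submatrix that can be obtained by relabelling columns. The paper's proof sidesteps this by splitting the columns $j$ by $B[k_0,j]\bmod\D$ \emph{per pivot} $k_0$; Fredman's trick then makes the residue filter automatic because $A[i,k]-A[i,k_0]\equiv 0\pmod\D$, so mismatched residues simply fail the equality test. That per-pivot splitting is also where the factor $\D$ in front of the matrix product and the $n_3/\D$ inner dimension really come from. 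In short, the residue-splitting intuition and the variance computation are on the right track, but the heavy/light decomposition, the multi-rate sampling, and the per-pivot residue handling are all essential ingredients that are missing, and without them the stated running time is not achieved.
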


\begin{proof}
Define the \emph{witness set}  $W[i,j,z] = \{k\in [n_2]: A[i,k]+B[k,j]=z\}$.

\begin{itemize}
\item {\bf Few-witnesses case.}
Independently for each $(i, j)$, pick a random subset $R_{ij}\subseteq [n_2]$ where each element in $[n_2]$ is put in $R$
with probability $\rho_* := 1/(st)$.
Define $f_*[i,j,z] = (1/\rho_*)\cdot |\{k\in R_{ij}: A[i,k]+B[k,j]=z\}|$.
We can generate all nonzero values $f_*[i,j,z]$ by looping through
each $i\in [n_1]$, $j\in [n_3]$, and $k\in R_{ij}$, in total time
$\OO(n_1n_3\cdot \rho_* n_2)$.  (It is essential that we are not required to output zero values, since otherwise we would not be able to obtain $o(n_1n_2n_3)$ time.)
Then $\Ex[f_*[i,j,z]]=\COUNT[i,j,z]$,
and $\Var[f_*[i,j,z]]=(1/\rho_*^2) \cdot \COUNT[i,j,z]\cdot (\rho_*-\rho_*^2) \le st\cdot \COUNT[i,j,z]$, which is good when $\COUNT[i,j,z]$ is small. As we pick $R_{ij}$ independently for each $(i, j)$, the estimates $f_*[i, j, z]$ and $f_*[i', j', z']$ are independent if $(i, j) \ne (i', j')$.

\item {\bf Many-witnesses case.}
For $\ell=0,\ldots,\log s$ (w.l.o.g., we assume $s$ to be a power of $2$), do the following:

Pick a random subset $H^{(\ell)}\subseteq [n_2]$ of size $c2^\ell\log(n_1n_2n_3)$ for a sufficiently large constant $c$.  
Then $H^{(\ell)}$ hits $W[i,j,z]$ w.h.p.\ for each
$i,j,z$ with $\COUNT[i,j,z]\ge n_2/2^{\ell}$.

For each $k_0\in H^{(\ell)}$ and $\xi\in[\D]$,
pick a random subset $R^{(\ell,k_0,\xi)}\subseteq [n_2]$ where each element in $[n_2]$ is put in $R^{(\ell,k_0,\xi)}$
with probability $\rho_\ell := 1/(2^\ell t)$.
For each $k_0\in H^{(\ell)}$ and $\xi\in[\D]$,
compute the equality product
$C^{(k_0,\ell)}[i,j] = |\{k\in R^{(\ell,k_0,\xi)}: A[i,k]-A[i,k_0] = B[k_0,j]-B[k,j]\}|$ for $i\in [n_1]$ and $j\in [n_3]$ with $B[k_0,j]\bmod \D =\xi$.  
This takes total time $\OO(2^\ell\D\Meq(n_1,\rho_\ell n_2,n_3/\D))$, by splitting each set $\{j\in [n_3]: B[k_0,j]\bmod\D = \xi\}$ into subsets
of size $O(n_3/\D)$.

Consider $i,j,z$ such that $W[i,j,z]$ is hit by $H^{(\ell)}$, i.e.,
$z=A[i,k_0]+B[k_0,j]$ for some $k_0\in H^{(\ell)}$.  Let $k_0$ be the smallest
such index.
Define $f_\ell[i,j,z] = (1/\rho_\ell)\cdot C^{(k_0,\ell)}[i,j]$.
Then $\Ex[f_\ell[i,j,z] \mid H^{(\ell)}] = |\{k\in [n_2]: A[i,k]-A[i,k_0]=B[k_0,j]-B[k,j]\}|
= |\{k\in [n_2]: A[i,k]+B[k,j]=A[i,k_0]+B[k_0,j]=z\}|
= \COUNT[i,j,z]$ by Fredman's trick.
And $\Var[f_\ell[i,j,z] \mid H^{(\ell)}] = (1/\rho_\ell^2)\cdot \COUNT[i,j,z]\cdot (\rho_\ell-\rho_\ell^2)\le 2^\ell t \cdot \COUNT[i,j,z]$.

Note that $f_\ell[i,j,z]$ (conditioned on a fixed $H^{(\ell)}$) depends on $R^{(\ell,k_0,\xi)}$ for
$\xi = B[k_0,j]\bmod \D = z\bmod \D$.  Thus, if $z\not\equiv z'\!\!\pmod \D$, then $f_\ell[i,j,z]$ and
$f_{\ell}[i',j',z']$ are independent conditioned on any fixed $H^{(\ell)}$. 
\end{itemize}

Finally, define $f[i,j,z]$ to be $f_\ell[i,j,z]$ for the smallest index
$\ell\in [\log s]$ such that $W[i,j,z]$ is hit by $H^{(\ell)}$, or $f_*[i,j,z]$ if $\ell$ does not exist.  Then $\Ex[f[i,j,z] \mid \{H^{(\ell)}\}_\ell]=\COUNT[i,j,z]$ for any fixed $\{H^{(\ell)}\}_\ell$, which implies $\Ex[f[i,j,z]]=\COUNT[i,j,z]$. 

If $\COUNT[i,j,z]\in [n_2/2^{p+1}, n_2/2^p)$ for $p\in [\log s]$,
then $\ell\le p+1$ w.h.p. Also, $\Var[f[i,j,z] \mid \ell \le p + 1] \le 2^{p+1} t\cdot \COUNT[i,j,z]$ and $\Var[f[i,j,z]$ is polynomially bounded regardless the value of $\ell$, we have that $\Var[f[i,j,z]]\le O(2^{p+1} t\cdot \COUNT[i,j,z])\le O(tn_2)$.  Similarly,
if $\COUNT[i,j,z] < n_2/s$,
then 
$\Var[f[i,j,z]]\le st \cdot \COUNT[i,j,z]\le O(tn_2)$. Note that for $(i,j)\neq (i',j')$ and
$z\not\equiv z'\!\!\pmod{\D}$, $f[i, j, z]$ and $f[i', j', z']$ are independent conditioned on a fixed choice of $\{H^{(\ell)}\}_\ell$. Therefore, $\Ex[f[i, j, z] f[i', j', z'] \mid \{H^{(\ell)}\}_\ell] = \Ex[f[i, j, z] \mid \{H^{(\ell)}\}_\ell] \cdot \Ex[f[i', j', z'] \mid \{H^{(\ell)}\}_\ell] = \COUNT[i, j, z] \cdot \COUNT[i', j', z']$. Summing over all possible $\{H^{(\ell)}\}_\ell$ gives $\Ex[f[i, j, z] f[i', j', z']] = \COUNT[i, j, z] \cdot \COUNT[i', j', z'] = \Ex[f[i,j, z]] \cdot \Ex[f[i',j', z']]$, so $f[i, j, z]$ and $f[i', j', z']$ are uncorrelated. 

The total time is
\begin{eqnarray*}
\lefteqn{\OO\left( \frac{n_1n_2n_3}{st} + \sum_{\ell=0}^{\log s} 2^\ell\D \Meq(n_1,n_2/(2^\ell t),n_3/\D) \right)}\\
&\le& \OO\left( \frac{n_1n_2n_3}{st} +\! \sum_{\ell=0}^{\log s} 
2^\ell\D \left( \frac{n_1n_2n_3}{2^\ell t s\D} +
M(n_1,s n_2/(2^\ell t),n_3/\D) \right)\right) \ \mbox{by Lemma~\ref{lem:eqprod} with $r:=s$}\\
&\le& \OO\left( \frac{n_1n_2n_3}{st} + 
\sum_{\ell=0}^{\log s} 2^\ell\D\cdot (s/2^\ell)\cdot M(n_1,n_2/t,n_3/\D) \right).
\end{eqnarray*}

\vspace{-\bigskipamount}
\end{proof}

\subsection{Approximate Counting All-Numbers Convolution-\texorpdfstring{$3$}{3}SUM}

Next, we apply \cref{lem:approx:exacttri} to Convolution-$3$SUM, by modifying the known reduction \cite{VassilevskaW09} from Convolution-$3$SUM to Exact-Triangle.  In the counting version of the All-Numbers Convolution-$3$SUM problem, we are given 3 sequences $\langle a_0,\ldots,a_{n-1}\rangle$, $\langle b_0,\ldots,b_{n-1}\rangle$, and $\langle c_0,\ldots,c_{n-1}\rangle$, and want to count the number of $k$'s such that $a_k+b_{h-k}=-c_h$, for each $h$.  In the approximate counting version below, again the third sequence is unnecessary.  Note that the time bound below is truly subquadratic ($\OO(n^{(5+\omega)/4})$) for the case $t=\Delta=1$.

\begin{lemma}\label{lem:approx:conv3sum}
Given integer sequences $\langle a_0,\ldots,a_{n-1}\rangle$ and $\langle b_0,\ldots,b_{n-1}\rangle$ where all $a_k$'s are divisible by positive integer $\D \le n$, 
define 
\[ \COUNT[h,z]:=|\{k\in [h]: a_k + b_{h-k}=z\}|. \]
Given parameter $1 \le t \le n$, there is a randomized algorithm
that computes estimates $f[h,z]$ for all $h\in [n]$ and $z$, such 
that the expectation of $f[h,z]$ is equal to $\COUNT[h,z]$,
and the variance of $f[h,z]$ is $O(tn)$.
(Zero entries of $f$ need not be output explicitly.)
Furthermore, $f[h,z]$ and $f[h',z']$ are uncorrelated if $h\neq h'$ and
$z\not\equiv z'\!\!\pmod{\D}$.
The running time is
\[ \OO\left( n^{(5+\omega)/4}/t^{(1+\omega)/4} + n^{(5+\omega)/4}\D^{(3-\omega)/4}/t + n\right).
\]
\end{lemma}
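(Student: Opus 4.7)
My plan is to reduce length-$n$ Convolution-$3$SUM to $O(n/s)$ approximate-counting Exact-Triangle subinstances of dimensions $s \times (n/s) \times s$ via a block decomposition in the spirit of Vassilevska Williams and Williams~\cite{VassilevskaW09}, and then apply Lemma~\ref{lem:approx:exacttri} to each. Fix a block size $s$ to be optimized. Writing $k = Ks + k_0$ and $h - k = Ls + \ell_0$ with $K, L \in [n/s]$ and $k_0, \ell_0 \in [s]$, and letting $h = Hs + h_0$, we have $H = K + L + \epsilon$ and $h_0 = k_0 + \ell_0 - \epsilon s$ for a carry bit $\epsilon \in \{0,1\}$. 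This yields
\[
\COUNT[h, z] \;=\; \sum_{\epsilon \in \{0,1\}} \sum_{\substack{k_0, \ell_0 \in [s] \\ k_0 + \ell_0 = h_0 + \epsilon s}} G^{(\epsilon, H)}[k_0, \ell_0, z],
\]
where $G^{(\epsilon, H)}[k_0, \ell_0, z] := |\{K : a_{Ks+k_0} + b_{(H-K-\epsilon)s+\ell_0} = z\}|$. For each $(\epsilon, H)$, the quantity $G^{(\epsilon,H)}$ is exactly the Exact-Triangle count for the matrices $A^{(\epsilon,H)}[k_0, K] := a_{Ks+k_0}$ and $B^{(\epsilon,H)}[K, \ell_0] := b_{(H-K-\epsilon)s+\ell_0}$, and the divisibility of the $a_k$'s by $\D$ is inherited by $A^{(\epsilon,H)}$.

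I invoke Lemma~\ref{lem:approx:exacttri} on each of the $O(n/s)$ instances, using parameters $t, \D$ and independent randomness across instances, to obtain estimators $g^{(\epsilon,H)}[k_0, \ell_0, z]$ with expectation $G^{(\epsilon,H)}[k_0, \ell_0, z]$ and individual variance $O(tn/s)$. I set $f[h, z] := \sum_{\epsilon}\sum_{k_0 + \ell_0 = h_0 + \epsilon s} g^{(\epsilon,H)}[k_0, \ell_0, z]$; linearity gives $\Ex[f[h, z]] = \COUNT[h, z]$. The cross-$(h, z)$ uncorrelation required by the lemma follows from two cases: when $H \neq H'$, the relevant invocations of Lemma~\ref{lem:approx:exacttri} are independent; when $H = H'$ but $h_0 \neq h_0'$, the contributing $(k_0, \ell_0)$ index sets are disjoint, so different matrix entries of the Exact-Triangle instance are involved, and Lemma~\ref{lem:approx:exacttri}'s own uncorrelation under $z \not\equiv z' \pmod \D$ applies termwise.

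For the bound $\Var[f[h, z]] = O(tn)$, the at most $2s$ summands each have variance $O(tn/s)$, and the key claim is pairwise uncorrelatedness \emph{within a single invocation} (for a common $z$). In the few-witness branch of Lemma~\ref{lem:approx:exacttri} this is immediate because the random subsets $R_{(k_0, \ell_0)}$ are drawn independently per output index; in the many-witness branch, distinct $(k_0, \ell_0)$ pairs typically induce distinct anchor indices $k_0^\ast$ in the sampled hit set $H^{(\ell)}$ and thereby use independent random sets $R^{(\ell, k_0^\ast, \xi)}$. I expect this to be the main technical obstacle and to resolve it by a careful case analysis exploiting the disjointness of the witness sets for distinct $(k_0, \ell_0)$, or if needed by slightly modifying the sampling in Lemma~\ref{lem:approx:exacttri} to enforce independence per aggregated index at a polylogarithmic cost.

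For the running time, the total is $O(n/s)$ applications of Lemma~\ref{lem:approx:exacttri} to $s \times (n/s) \times s$ inputs, costing $\OO\bigl((n/s) \min_r \bigl(ns/(rt) + r \D \cdot M(s, n/(st), s/\D)\bigr)\bigr)$. Using $M(s, n/(st), s/\D) \le (s^2/\D)(n/(st))^{\omega-2}$ in the relevant dense regime and choosing $s \approx \sqrt{\D n/t}$ while balancing $r$ yields the stated $\OO(n^{(5+\omega)/4}/t^{(1+\omega)/4} + n^{(5+\omega)/4}\D^{(3-\omega)/4}/t + n)$, after handling the small-$\D$ and large-$\D$ regimes separately and adding the trivial $O(n)$ preprocessing.
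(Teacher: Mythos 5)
Your block decomposition goes in a genuinely different direction from the paper's, and this creates a gap in the variance analysis that the paper's choice is specifically designed to avoid.

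The paper decomposes along the \emph{summation} index $k$: for each $\ell\in[n/d]$ it defines $\COUNT^{(\ell)}[h,z]$ as the contribution from $k$ in one block, and treats $(i,j)$ (with $h=id+j$) as the output pair of a single Exact-Triangle instance of shape $(n/d)\times d\times d$. Thus $\COUNT[h,z]=\sum_\ell \COUNT^{(\ell)}[h,z]$ has exactly one term per invocation of Lemma~\ref{lem:approx:exacttri}, the $n/d$ invocations use independent randomness, and the variance bound $O((n/d)\cdot td)=O(tn)$ follows immediately. No within-invocation uncorrelation is ever needed.

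Your decomposition blocks the \emph{output} index $h=Hs+h_0$ instead, so for a fixed $(\epsilon,H)$ you aggregate $\Theta(s)$ estimators $g^{(\epsilon,H)}[k_0,\ell_0,z]$ over $k_0+\ell_0=h_0+\epsilon s$, all coming from \emph{the same} invocation of Lemma~\ref{lem:approx:exacttri} and all with the \emph{same} $z$. Lemma~\ref{lem:approx:exacttri} only guarantees uncorrelation for $(i,j)\neq(i',j')$ together with $z\not\equiv z'\pmod\D$; it gives you nothing for $(i,j)\neq(i',j')$ with $z=z'$. And this is not merely a missing case in the statement: in the many-witnesses branch, two output pairs $(k_0,\ell_0)\neq(k_0',\ell_0')$ with the same $z$ can receive the same anchor $k_0^\ast\in H^{(\ell)}$ (the anchor is the smallest element of a shared random hit set, so collisions across output pairs are generic), in which case both $f_\ell$-estimators use the same random subset $R^{(\ell,k_0^\ast,\xi)}$ and have covariance proportional to $|W[k_0,\ell_0,z]\cap W[k_0',\ell_0',z]|$ scaled by $1/\rho_\ell$. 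These witness sets live in the common index space $[n/s]$ and can overlap heavily (e.g.\ take $a$ constant on each residue class so that many $(k_0,\ell_0)$ share a near-full witness set), making the pairwise covariances the same order as the variances. Summing $\Theta(s)$ such positively correlated terms can blow the variance up to $\Theta(s\cdot tn)$ rather than $O(tn)$. Your fallback of ``slightly modifying the sampling to enforce independence per aggregated index'' would require the random masks $R^{(\ell,k_0,\xi)}$ to depend on the output cell $(i,j)$, which destroys the single equality-matrix-product that is the whole source of efficiency in Lemma~\ref{lem:approx:exacttri}; this is not a polylogarithmic patch.

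The running-time calculation as sketched also does not quite line up (the instance shape $s\times(n/s)\times s$ is the transpose of the paper's $(n/d)\times d\times d$, and the balancing choice $s\approx\sqrt{\D n/t}$ produces a spurious $\D^{(3-\omega)/2}$ factor in the $t\ge\D$ regime), but that is secondary; the structural issue is the within-invocation correlation. The fix is exactly to flip the decomposition as the paper does, so that each $(h,z)$ picks up one independent estimator per block of $k$.
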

\begin{proof}
Let $d$ be an integer parameter between $1$ and $n$. For simplicity, we assume $n/d$ to be an integer, which can be achieved by padding $\infty$ entries to the arrays. 
We will estimate
$\COUNT^{(\ell)}[id+j,z] := |\{ k\in [d]: a_{(i-\ell)d + k} + b_{\ell d + j- k}=z\}|$
for each $i\in [n/d]$, $j\in [d]$, and $\ell\in [n/d]$.
(Assume that out-of-range
array entries are set to $\infty$.)

Define an $(n/d)\times d$ matrix $A$ and a $d\times d$ matrix $B^{(\ell)}$
for each $\ell\in [n/d]$:
for each $i\in [n/d]$ and $k,j\in [d]$, let
$A[i,k] = a_{id+k}$ and
$B^{(\ell)}[k,j]=b_{\ell d + j-k}$.
(Normally, it would be better to combine into one $d\times n$ matrix $B$
and use rectangular matrix multiplication, but we need multiple independent subproblems here.)
Apply Lemma~\ref{lem:approx:exacttri} to estimate
$\COUNT^{(\ell)}[id+j,z] = |\{k\in [d]: A[i-\ell,k]+B^{(\ell)}[k,j]=z\}|$,
for all $\ell\in [n/d]$, 
in total time
\begin{eqnarray*}
\lefteqn{\OO\left(\frac{n}{d}\cdot\left(\frac{(n/d)\cdot d\cdot d}{st} + s\D M(n/d, d/t, d/\D)\right)\right)}\\
&=& \left\{\begin{array}{ll}
\OO\left(\frac{n^2}{st} + s\D \sqrt{n/t}M(\sqrt{n/t},\sqrt{n/t},\sqrt{tn}/\D) \right) & \mbox{if $t\ge \D$\qquad by setting $d:=\sqrt{tn}$}\\
\OO\left(\frac{n^2}{st} + s\D \sqrt{n/\D}M(\sqrt{n/\D},\sqrt{\D n}/t,\sqrt{n/\D}) \right) & \mbox{if $t<\D$\qquad by setting $d:=\sqrt{\D n}$}
\end{array}\right.\\
&\le& \left\{\begin{array}{ll}
\OO\left(\frac{n^2}{st} + s\D\sqrt{n/t}\cdot (t/\D)\cdot (\sqrt{n/t})^\omega \right) \ =\ \OO\left(\frac{n^2}{st} + \frac{sn^{(\omega+1)/2}}{t^{(\omega-1)/2}}\right)& \mbox{if $t\ge \D$}\\
\OO\left(\frac{n^2}{st} + s\D\sqrt{n/\D}\cdot (\D/t)\cdot (\sqrt{n/\D})^\omega \right) \ =\ \OO\left(\frac{n^2}{st} + \frac{sn^{(\omega+1)/2}\D^{(3-\omega)/2}}{t}\right) & \mbox{if $t< \D$}
\end{array}\right.\\
&=& \left\{\begin{array}{ll}
\OO\left( n^{(5+\omega)/4}/t^{(1+\omega)/4}\right)
& \mbox{if $t\ge \D$\qquad by setting $s:=(n/t)^{(3-\omega)/4}$}\\
\OO\left( n^{(5+\omega)/4}\D^{(3-\omega)/4}/t\right)
& \mbox{if $t < \D$\qquad by setting $s:=(n/\D)^{(3-\omega)/4}$.}\\
\end{array}\right.
\end{eqnarray*}

We can now estimate $\COUNT[h,z] = \sum_{\ell\in [n/d]} \COUNT^{(\ell)}[h,z]$.  The variance of each such estimate is $O((n/d)\cdot td)=O(tn)$, due to independence of the $n/d$ applications of Lemma~\ref{lem:approx:exacttri}.  
\end{proof}

\subsection{Approximate Counting All-Numbers \texorpdfstring{$3$}{3}SUM}

Next, we solve an analogous approximate counting version of $3$SUM, by modify a known reduction from $3$SUM to Convolution-$3$SUM by Chan and He~\cite{ChanHe} (which preserves counts, unlike earlier reductions~\cite{Patrascu10,KopelowitzPP16}).
In the counting version of the All-Numbers $3$SUM problem, we are given 3 sets $A$, $B$, and $C$,
 we want to count the number of $(a,b)\in A\times B$ such that $a+b=-c$, for each $c\in C$. 
 Without the third set $C$, this amounts to computing counts/multiplicities of the elements of the sumset $A+B$.

\begin{lemma}\label{lem:approx:3sum}
Given sets  $A$ and $B$ of $n$ integers, where all elements of $A$ are divisible by $\D \le n$,
define 
\[ \COUNT[z]:=|\{(a,b)\in A\times B: a+b=z\}|. \]
Given parameter $1 \le t \le n$, there is a randomized algorithm
that computes estimates $f[z]$ for all $z$, such 
that the expectation of $f[z]$ is equal to $\COUNT[z]$,
and the variance of $f[z]$ is $\OO(tn)$.
(Zero entries of $f$ need not be output explicitly.)
Furthermore, $f[z]$ and $f[z']$ are uncorrelated if $z\not\equiv z'\!\!\pmod{\D}$.
The expected running time is
\[ \OO\left( n^{(5+\omega)/4}/t^{(1+\omega)/4} + n^{(5+\omega)/4}\D^{(3-\omega)/4}/t + n\right).
\]
\end{lemma}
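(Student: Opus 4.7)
The plan is to reduce approximate counting All-Numbers $3$SUM to approximate counting All-Numbers Convolution-$3$SUM, and then invoke Lemma~\ref{lem:approx:conv3sum}. The crucial ingredient is the count-preserving reduction of Chan and He~\cite{ChanHe} from $3$SUM to Convolution-$3$SUM, in which every pair $(a,b)\in A\times B$ is mapped to exactly one triple in the output Convolution-$3$SUM instance(s). Count preservation is essential here: older reductions such as~\cite{Patrascu10,KopelowitzPP16} preserve only the \emph{existence} of solutions, not their multiplicities, so they cannot be used for $\COUNT[z]$.

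Concretely, I would apply the Chan--He reduction to $A$ and $B$ to produce at most polylogarithmically many Convolution-$3$SUM instances of total sequence length $\OO(n)$, in expected time $\OO(n)$. The reduction uses an almost-linear hash function to place each element $a\in A$ into a position of one of the sequences, storing $a$ (or an additive shift of $a$) at that position; similarly for each $b\in B$. Linearity of the hash guarantees that a $3$SUM equation $a+b=z$ becomes a convolution-type equation $a_k+b_{h-k}=z'$ at a shift $h$ and output index $z'$ that depends affinely on $z$. By running the reduction on the normalized quotients $a/\D$ instead of on $a$ directly (valid since $\D$ divides every element of $A$), all additive shifts introduced during bucketing can be made multiples of $\D$, so the divisibility hypothesis of Lemma~\ref{lem:approx:conv3sum} is inherited by the output sequences, and the mapping $z\mapsto z'$ reduces modulo $\D$ to a fixed translation.

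I then invoke Lemma~\ref{lem:approx:conv3sum} independently on each output Convolution-$3$SUM instance with parameter $t$ and divisor $\D$, using fresh randomness per instance. Define $f[z]$ as the sum of the Lemma~\ref{lem:approx:conv3sum} estimates over the $(h,z')$ pairs that the reduction associates with $z$. Count preservation and linearity of expectation give $\Ex[f[z]]=\COUNT[z]$. Since the per-instance variances are each $\OO(tn_i)$ and $\sum_i n_i=\OO(n)$, independence across instances yields $\Var[f[z]]=\OO(tn)$. For uncorrelation, when $z\not\equiv z'\pmod{\D}$ the corresponding output indices in every instance also differ modulo $\D$, so Lemma~\ref{lem:approx:conv3sum}'s uncorrelation clause fires inside each instance, while estimates from distinct instances are independent by construction; linearity of covariance then gives uncorrelation of $f[z]$ and $f[z']$. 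The expected running-time bound follows immediately from Lemma~\ref{lem:approx:conv3sum} applied to sequences of total length $\OO(n)$, together with the $\OO(n)$ reduction cost.

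The main obstacle is to certify that the count-preserving reduction can be instantiated so that (i) the sequence corresponding to $A$ has entries divisible by $\D$, and (ii) the induced map on target values is the identity modulo $\D$ up to a fixed offset. Working with the normalized elements $a/\D$ (together with a global offset that is a multiple of $\D$) should accomplish both conditions cleanly within the Chan--He framework, so I expect no genuine algorithmic difficulty beyond this bookkeeping.
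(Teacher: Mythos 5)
Your high-level plan is the same as the paper's: invoke the count-preserving Chan--He reduction to Convolution-$3$SUM, feed the resulting instances to Lemma~\ref{lem:approx:conv3sum}, and pull back the expectation/variance/uncorrelation guarantees. The point you stress about why an older reduction (e.g.~\cite{Patrascu10,KopelowitzPP16}) would not suffice is exactly right.

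Where your write-up diverges from what is actually needed is in treating the Chan--He reduction as a \emph{one-shot} map: ``apply the Chan--He reduction to $A$ and $B$ to produce at most polylogarithmically many Convolution-$3$SUM instances of total sequence length $\OO(n)$.'' That is not how the reduction works, and glossing over this hides where the real work lies. The reduction picks a hash function, classifies elements as \emph{good} or \emph{bad}, and only the good-good pairs are encoded into $O(1)$ Convolution-$3$SUM instances (of full sequence length $\Theta(n)$, not length proportional to the current set size); the bad elements are \emph{not} placed in any sequence and must be handled by recursing on the smaller subproblems ``bad$\times$all'' and ``good$\times$bad.'' The key quantitative fact that makes this work is that the number of bad elements decreases super-exponentially (from $n/k_i$ to $n/(2k_i^2)$), so the recursion tree has depth $O(\log\log n)$ and size $\polylog n$. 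The paper proves the running-time and variance bounds via recurrences driven exactly by this decay:
\[ T\bigl(\tfrac{n}{k_1},\tfrac{n}{k_2}\bigr)\le T\bigl(\tfrac{n}{2k_1^2},\tfrac{n}{k_2}\bigr)+T\bigl(\tfrac{n}{k_1},\tfrac{n}{2k_2^2}\bigr)+\OO\bigl(n^{(5+\omega)/4}/t^{(1+\omega)/4}+\cdots\bigr),\]
and similarly for the variance $V(\cdot,\cdot)\le\cdots+O(tn)$ (conditioned on the hash). Your phrase ``The expected running-time bound follows immediately from Lemma~\ref{lem:approx:conv3sum} applied to sequences of total length $\OO(n)$'' is not quite a valid deduction even if total length were controlled, because the cost of Lemma~\ref{lem:approx:conv3sum} is super-linear; what actually saves you is that there are only $\polylog n$ calls, each of size $O(n)$, which is precisely the recursion-tree-size argument. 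So the thing missing from your argument is the recursive good/bad structure, and this is not bookkeeping---it is the engine that makes both the time bound and the variance bound go through.

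Two smaller remarks. First, your $a/\D$ normalization is unnecessary: in the reduction one can simply store the original value $a$ at the hashed position, and then every nonzero entry of the $A$-sequence is divisible by $\D$ already; the target-value coordinate in Convolution-$3$SUM is $z$ itself, not a hashed version, so divisibility and the modulus condition transfer without further work. Second, on uncorrelation you should be explicit that Lemma~\ref{lem:approx:conv3sum} needs \emph{both} $h\neq h'$ and $z\not\equiv z'\pmod\D$; your argument only checks the modular condition. (This is admittedly also left implicit in the paper's proof, but since you are reconstructing the proof you should account for the shift coordinate as well, e.g.\ by noting which shift coordinates a given $z$ is read from and why distinct residues of $z$ lead to the required independence.)
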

\begin{proof}
In one of the randomized reductions from $3$SUM to Convolution-$3$SUM in \cite{ChanHe}, one classifies all numbers as bad or good based on the choice of a hash function (it takes expected $O(n)$ time to do this preprocessing). Furthermore, the $3$SUM instance between all the good elements can be reduced to $O(1)$ instances of Convolution-$3$SUM of size $n$, and if the size of the $i$-th set for $i \in [3]$ in the $3$SUM instance is $\frac{n}{k_i}$, then the number of bad elements in it is at most $\frac{n}{2k_i^2}$. The same idea also works in our setting. 

Say we have two sets of size $\frac{n}{k_1}$ and $\frac{n}{k_2}$ respectively, and say it takes $T(\frac{n}{k_1}, \frac{n}{k_2})$ time to solve such an instance. After we apply  \cite{ChanHe}'s hash function in $O(n)$ expected time,  the number of bad elements in the two sets becomes at most $\frac{n}{2k_1^2}$ and $\frac{n}{2k_2^2}$ respectively. For the good elements, we can apply \cref{lem:approx:conv3sum}. We then recursively solve the same problem between all bad elements in the first set, and all elements in the second set, which takes $T(\frac{n}{2k_1^2}, \frac{n}{k_2})$ time. Finally, we solve the same problem between all good elements in the first set and all bad elements in the second set, which takes $T(\frac{n}{k_1}, \frac{n}{2k_2^2})$. Clearly, summing up the results gives the correct expectation. 

The running time can be written as the following recurrence: 
\[ T\left(\frac{n}{k_1},\frac{n}{k_2}\right)
\ \le\ T\left(\frac{n}{2k_1^2}, \frac{n}{k_2}\right) 
+ T\left(\frac{n}{k_1}, \frac{n}{2k_2^2}\right) +
\OO\left(n^{(5+\omega)/4}/t^{(1+\omega)/4} + n^{(5+\omega)/4}\D^{(3-\omega)/4}/t  + n\right).
\]
The recursion tree has depth $O(\log \log n)$, so it has size  $2^{O(\log\log n)} = \log^{O(1)}n$. Therefore, the overall running time is 
$$\OO\left(n^{(5+\omega)/4}/t^{(1+\omega)/4} + n^{(5+\omega)/4}\D^{(3-\omega)/4}/t  + n\right).$$

Say the variance bound for two sets of size $\frac{n}{k_1}$ and $\frac{n}{k_2}$ is $V\left(\frac{n}{k_1},\frac{n}{k_2}\right)$, then we have the following recurrence (conditioned on any fixed hash function): 
\[ V\left(\frac{n}{k_1},\frac{n}{k_2}\right)
\ \le\ V\left(\frac{n}{2k_1^2}, \frac{n}{k_2}\right) 
+ V\left(\frac{n}{k_1}, \frac{n}{2k_2^2}\right) +
O\left(tn\right), 
\]
which can be similarly upper bounded by $\OO(tn)$. 
\end{proof}

\subsection{Approximate Counting Colored All-Numbers \texorpdfstring{$3$}{3}SUM}

\newcommand{\COL}{\mathop{\rm color}}
To solve the Text-to-Pattern Hamming Distances problem, we will actually
need a colored version of counting $3$SUM\@.  This colored problem 
can be solved simply by
independently invoking Lemma~\ref{lem:approx:3sum} for each color class.
Note that  the time bound below is %
sub-$n^{3/2}$  for the case $t=\Delta=1$ and $U=n$.

\begin{lemma}\label{lem:approx:col3sum}
Given sets  $A$ and $B$ of $n$ colored integers in $[U]$, 
where all elements of $A$ are divisible by $\D \le n$,
define 
\[ \COUNT[z]:=|\{(a,b)\in A\times B:\ a+b=z,\ \COL(a)=\COL(b)\}|. \]
Given parameter $1 \le t \le n$, there is a randomized algorithm
that computes estimates $f[z]$ for all $z$, such 
that the expectation of $f[z]$ is equal to $\COUNT[z]$,
and the variance of $f[z]$ is $\OO(tn)$.
Furthermore, $f[z]$ and $f[z']$ are uncorrelated if $z\not\equiv z'\!\!\pmod{\D}$.
The expected running time is
\[ \OO\left(n (U/t)^{(1+\omega)/(5+\omega)} 
 + 
nU^{(1+\omega)/(5+\omega)}\D^{(3-\omega)/(5+\omega)}/t^{4/(5+\omega)} + U\right).
\]
\end{lemma}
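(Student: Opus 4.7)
The plan is to decompose the problem by color class and handle each class either by Lemma~\ref{lem:approx:3sum} or, for classes that are ``too large,'' by an exact FFT-based convolution. For each color $c$, let $A_c := \{a \in A : \COL(a)=c\}$, $B_c := \{b \in B : \COL(b)=c\}$, and $n_c := |A_c| + |B_c|$, so that $\sum_c n_c = 2n$ and each $n_c \le 2U$. Since $\COUNT[z] = \sum_c \COUNT_c[z]$ with $\COUNT_c[z] := |\{(a,b) \in A_c \times B_c : a+b=z\}|$, I will fix a threshold $n^*$, apply Lemma~\ref{lem:approx:3sum} with parameters $t,\D$ to each $(A_c,B_c)$ with $n_c \le n^*$ to obtain an estimator $f_c[z]$, and for each $c$ with $n_c > n^*$ compute $\COUNT_c[z]$ exactly via FFT on the length-$U$ indicator vectors of $A_c, B_c$ in time $\OO(U)$ (invoking Lemma~\ref{lem:packed_FFT} modulo a $\poly(n)$-size prime, since counts are bounded by $n$), setting $f_c[z] := \COUNT_c[z]$.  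Using independent randomness across colors, define $f[z] := \sum_c f_c[z]$.

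The expectation, uncorrelation, and variance claims should fall out cleanly from this decomposition. Linearity gives $\Ex[f[z]] = \COUNT[z]$. For $z \not\equiv z' \pmod \D$, pairs with $c \ne c'$ contribute zero covariance by independence across colors, while pairs with $c = c'$ contribute zero either by the uncorrelation-modulo-$\D$ property of Lemma~\ref{lem:approx:3sum} or because both $f_c[z], f_c[z']$ are deterministic in the FFT case. By the same independence, $\Var[f[z]] = \sum_c \Var[f_c[z]]$; FFT contributions vanish, and each Lemma~\ref{lem:approx:3sum} contribution is $\OO(t n_c)$, summing to $\OO(tn)$.

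The bulk of the work is the running-time analysis. At most $2n/n^*$ color classes are ``large,'' contributing $\OO(nU/n^*)$ in FFT cost, plus a single $\OO(U)$ additive term when any large class exists (this is the ``$+U$'' in the stated bound, and it also absorbs the per-class $O(n_c)$ reading cost since $n_c \le U$). For small classes, the elementary interpolation $n_c^{(5+\omega)/4} \le n_c \cdot (n^*)^{(1+\omega)/4}$ combined with $\sum_c n_c \le 2n$ yields $\sum_{n_c \le n^*} n_c^{(5+\omega)/4} \le 2n\,(n^*)^{(1+\omega)/4}$, so the Lemma~\ref{lem:approx:3sum} contribution is bounded by $\OO\big(n(n^*)^{(1+\omega)/4}/t^{(1+\omega)/4} + n(n^*)^{(1+\omega)/4}\D^{(3-\omega)/4}/t\big)$.

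The main obstacle will be choosing $n^*$ so as to simultaneously balance both Lemma~\ref{lem:approx:3sum} cost terms against the FFT cost. Setting $n^*_1 := U^{4/(5+\omega)}\, t^{(1+\omega)/(5+\omega)}$ balances the first Lemma~\ref{lem:approx:3sum} term against FFT, producing the first target term $n(U/t)^{(1+\omega)/(5+\omega)}$; setting $n^*_2 := (Ut)^{4/(5+\omega)}/\D^{(3-\omega)/(5+\omega)}$ does the same for the second term, producing the target term $nU^{(1+\omega)/(5+\omega)}\D^{(3-\omega)/(5+\omega)}/t^{4/(5+\omega)}$. Taking $n^* := \min(n^*_1, n^*_2)$ and splitting into the cases $t \ge \D$ (where $n^*_1 \le n^*_2$) and $t < \D$ (where $n^*_2 \le n^*_1$), a short monotonicity check confirms that each of the three cost contributions (FFT, first Lemma~\ref{lem:approx:3sum} term, and second Lemma~\ref{lem:approx:3sum} term) is bounded by the sum of the two claimed terms in both regimes, completing the proof (with colors having $n_c < t$ handled by either using $t' = n_c$ in Lemma~\ref{lem:approx:3sum} or brute force, both of which are easily absorbed).
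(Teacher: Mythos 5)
Your proposal is correct and follows essentially the same approach as the paper's own proof: split by color class, apply Lemma~\ref{lem:approx:3sum} with the same $t,\Delta$ to each class with $n_c$ at most a threshold, fall back to exact FFT for larger classes, and choose the threshold as the minimum of the same two values $U^{4/(5+\omega)}t^{(1+\omega)/(5+\omega)}$ and $(Ut)^{4/(5+\omega)}/\D^{(3-\omega)/(5+\omega)}$. Your extra care about the edge case $n_c < t$ and the explicit covariance bookkeeping are sound refinements but do not change the argument's structure.
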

\begin{proof}
Let $A_c$ (resp.\ $B_c$) be the subset of all elements of $A$
(resp.\ $B$) of color $c$.  Let $n_c=|A_c|+|B_c|$.
Estimate $\COUNT_c[z]:=|\{(a,b)\in A_c\times B_c: a+b=z\}|$
by Lemma~\ref{lem:approx:3sum} if $n_c\le x$,
or compute it exactly by FFT in $\OO(U)$ time if $n_c > x$.
The number of calls to FFT is $O(n/x)$.
The total time is
\begin{eqnarray*}
\lefteqn{\OO\left( \frac{nU}{x} +  \sum_{n_c\le x} \left( 
n_c^{(5+\omega)/4}/t^{(1+\omega)/4} + n_c^{(5+\omega)/4}\D^{(3-\omega)/4}/t + n_c\right) + U \right)}\\
&=& \OO\left(\frac{nU}{x} + nx^{(1+\omega)/4}/t^{(1+\omega)/4} + nx^{(1+\omega)/4}\D^{(3-\omega)/4}/t + U\right)\\
&=& \OO\left(n(U/t)^{(1+\omega)/(5+\omega)} + 
nU^{(1+\omega)/(5+\omega)}\D^{(3-\omega)/(5+\omega)}/t^{4/(5+\omega)} + 
U\right)
\end{eqnarray*}
by setting $x:=\min\{U^{4/(5+\omega)}t^{(1+\omega)/(5+\omega)},\,
(Ut)^{4/(5+\omega)}/\D^{(3-\omega)/(5+\omega)}\}$.

We can estimate $\COUNT[z]=\sum_c\COUNT_c[z]$, with variance
$\OO(\sum_c tn_c) = \OO(tn)$ due to independence of the
different applications of Lemma~\ref{lem:approx:3sum}.
\end{proof}

The above lemma is sufficient to solve the approximate Text-to-Pattern Hamming Distances problem for distances that are $\Theta(m)$ (where we want additive error $O(\eps m)$), but to deal with more generally distances that are $\Theta(k)$ (with additive error $O(\eps k)$), we need to solve a generalization of the counting colored $3$SUM problem where the input consists of $O(k)$ intervals (i.e., contiguous blocks of integers):

\begin{lemma}
\label{lem:RLE}
Given sets $A$ and $B$ of at most $n$ colored integers in $[n]$, 
define 
\[ \COUNT[z]=|\{(a,b)\in A\times B:\ a+b=z,\ \COL(a)=\COL(b)\}|. \]
Suppose that $A$ and $B$ can be decomposed as unions of $O(k)$ disjoint intervals, where each interval is monochromatic.
There is a randomized algorithm
that computes estimates for $\COUNT[z]$ for all $z$, 
with additive error $\OO(\eps k)$ w.h.p.
The running time is
\[ \OO\left( \eps^{-1+\beta} n^{1-\beta}k^{\beta} +
\eps^{-1-\beta} n^{1+\beta} / k^{2\beta} + n
\right),\qquad
\mbox{where $\beta := (3-\omega)/(5+\omega)$.}
\]
\end{lemma}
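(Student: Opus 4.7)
Plan.  I would prove this lemma by reducing to Lemma~\ref{lem:approx:col3sum}.  First, I would verify by direct substitution that if one could invoke Lemma~\ref{lem:approx:col3sum} with effective input size $n_{\mathrm{lem}} = O(k)$, universe $U = O(n)$, divisor $\D = n/k$, and variance parameter $t = \eps^2 k^2/n$, then its runtime formula recovers the claimed bound term by term (the first target term comes from the $n_{\mathrm{lem}}(U/t)^{(1+\omega)/(5+\omega)}$ piece, the second from the $n_{\mathrm{lem}}U^{(1+\omega)/(5+\omega)}\D^{(3-\omega)/(5+\omega)}/t^{4/(5+\omega)}$ piece, and the $+n$ from the $+U$ piece).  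The resulting per-estimate variance would be $\OO(tk) = \OO(\eps^2 k^3/n) \le \OO(\eps^2 k^2)$ (using $k \le n$), giving standard deviation $\OO(\eps k)$; Chebyshev's inequality plus standard median-of-$O(\log n)$-copies boosting then yields the required additive error $\OO(\eps k)$ w.h.p.

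The task then reduces to arranging Lemma~\ref{lem:approx:col3sum}'s input so that its effective size is $O(k)$ despite $|A|,|B|$ being as large as $n$.  I would exploit the interval structure by partitioning $A$ by residue modulo $\D = n/k$: writing $A_r := \{a \in A : a \equiv r \pmod \D\}$, each residue class has size at most $O(k)$ because a single monochromatic interval of length $\ell$ contributes at most $\lceil \ell/\D \rceil \le 1 + \ell/\D$ elements per class, and summing over the $O(k)$ intervals (with $\sum \ell \le n$) gives $|A_r| \le O(k) + n/\D = O(k)$.  After translating $A_r \to \tilde A_r := A_r - r$, the elements of $\tilde A_r$ become divisible by $\D$ as required by the Lemma; I would partition and translate $B$ analogously into classes $\tilde B_s$.

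The final reduction step uses the observation that for each output position $z$, only residue-class pairs $(r,s)$ with $r+s \equiv z \pmod \D$ contribute to $\COUNT[z]$.  Lemma~\ref{lem:approx:col3sum}'s divisor-parameter mechanism---which enforces that output estimates at $z\not\equiv z'\pmod \D$ are uncorrelated, and which calibrates its internal sampling per output-residue class---should allow a single invocation of the Lemma on the combined translated input (with divisor $\D$) to handle all residue pairs simultaneously, with per-output work scaling as if the input were of size $O(k)$ rather than $O(n)$.

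The hard part will be verifying that the overall runtime matches the claimed bound rather than being inflated by a factor of $\D$ or $\D^2$ from the number of residue pairs $(r,s)$.  I expect this to emerge from a careful inspection of Lemma~\ref{lem:approx:col3sum}'s proof: its per-color work (via Lemma~\ref{lem:approx:3sum} or FFT) and the per-residue sampling inside Lemma~\ref{lem:approx:exacttri} (the factor $\D$ splitting of $B$'s columns by residue) should telescope correctly under the residue-class size bound $|A_r|,|B_s|=O(k)$, so that the total work is dominated by the cost of a single size-$O(k)$ instance plus the $+U$ FFT term.  Standard care will be needed for minor boundary issues (rounding $\D$ to an integer value, handling intervals that straddle residue boundaries, and ensuring $\D \le k$ so the Lemma's divisibility constraint is satisfied), which I would dispatch by routine modifications.
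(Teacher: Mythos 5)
Your reduction to Lemma~\ref{lem:approx:col3sum} is the right high-level move and your term-by-term verification of the target running time formula is correct, but the core of the argument---the claim that ``a single invocation of the Lemma on the combined translated input (with divisor $\D$) \ldots with per-output work scaling as if the input were of size $O(k)$ rather than $O(n)$''---does not hold up. The divisor parameter $\D$ in Lemma~\ref{lem:approx:col3sum} serves only two purposes: it requires every element of the first input set to be divisible by $\D$, and it guarantees uncorrelation of output estimates at $z\not\equiv z'\pmod{\D}$. It does \emph{not} reduce the effective input size, and the lemma's running time formula scales with the actual number $n_{\mathrm{lem}}$ of input elements, not with the per-residue-class size. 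Moreover, your translation $A_r \mapsto A_r - r$ cannot be merged into a single input: for $a\in A_r$ and $b\in B_s$, the equation $a+b=z$ becomes $(a-r)+(b-s)=z-r-s$, so the relevant output shift depends on the pair $(r,s)$---there is no single translated instance. If you instead issue one call per contributing residue pair, you pay a $\Theta(\D)$ or $\Theta(\D^2)$ multiplicative overhead, which you acknowledge but never discharge.

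The paper closes this gap by exploiting the interval structure in a qualitatively different way: it shrinks the input from $n$ integers to $O(k)$ integers \emph{before} calling Lemma~\ref{lem:approx:col3sum}, not by bucketing residues but by representing each interval by its left endpoint. Concretely, after breaking intervals to length at most $n/k$, refining to dyadic intervals, and grouping the $O(\log^2(n/k))$ cases where $A$-intervals all have length $L$ and $B$-intervals all have length $\ell\le L$, the sets $A'$ and $B'$ of interval left endpoints have only $\OO(k)$ elements, and---because dyadic intervals of length $\ell$ start at multiples of $\ell$---after rescaling by $1/\ell$ the universe shrinks to $n/\ell$ and the $A'$ endpoints become divisible by $\D=L/\ell$. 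The conversion from the endpoint counts $\COUNT'$ back to the actual counts is the weighted convolution $\COUNT[z]=\sum_{i=0}^{L+\ell-2}w(i)\COUNT'[z-i]$ with $w(i)=\min\{i+1,\ell,L+\ell-i-1\}$, and the variance analysis splits this sum so that pairwise uncorrelation (across residues mod $\D$) applies. Your plan has no analogue of the endpoint representation, the weight function $w$, or the variance accounting for combining $O(L/\ell)$ correlated/uncorrelated estimates, so it cannot reach the claimed bound without a substantially different argument.
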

\begin{proof}
We may assume that each interval has length at most $n/k$,
by breaking the intervals; the number of intervals remains $O(k)$.
Furthermore, we may assume that each interval is a \emph{dyadic interval},
since each interval can be decomposed into a union of $O(\log n)$ dyadic intervals; the number of intervals remains $\OO(k)$.

We may assume that each interval of $A$ has the same length $L$
and each interval of $B$ has the same length $\ell$, where $L$ and $\ell$
are powers of 2 at most $n/k$, since we can try all pairs $(L,\ell)$ and solve $O(\log^2(n/k))$ instances; the time and additive error bounds increase only by polylogarithmic factors.

W.l.o.g., say $\ell\le L\le n/k$.
Let $A'$ (resp.\ $B')$ denote the  colored set of the $\OO(k)$ left endpoints of the intervals in $A$ (resp.\ $B$).
Estimate $\COUNT'[z]=\{(a,b)\in A'\times B': a+b=z,\ \COL(a)=\COL(b)\}$ by
Lemma~\ref{lem:approx:col3sum}, with variance $\OO(tk)$.  Note that since endpoints in $A'$ and $B'$ are multiples of $\ell$, 
the universe size can
be shrunk to $U := n/\ell$, and  after rescaling, all elements of $A'$ are
divisible by $\D := L/\ell\le n/(k\ell)$.

Let $w(i):= |\{(i_1,i_2)\in [L]\times [\ell]: i_1+i_2=i\}|$,
i.e., $w(i)=\min\{i+1,\, \ell,\, L+\ell-i-1\}$ (for $i\in [L+\ell-1]$).
Then $\COUNT[z] = \sum_{i=0}^{L+\ell-2} w(i)\COUNT'[z-i]$.
From the estimates for $\COUNT'[\cdot]$, we can compute estimates
for $\COUNT[\cdot]$ by doing an FFT in $\OO(n)$ time (or by a more
direct way, since $w(\cdot)$ is just a piecewise-linear function with 3 pieces).
Note that $w(j)\le \ell$ and there are $O(L/\ell)$ nonzero terms $\COUNT'[z-i]$ in the sum; furthermore, after splitting the sum into two halves,
in each half, no two $z-i$ values are equal mod $L$.
Thus, we can estimate the sum of each half, with 
variance
$O((L/\ell)\cdot \ell^2\cdot tk)$, due to pairwise uncorrelation. 
Then we can estimate $\COUNT[z]$ by summing up the two halves, which still has variance 
$O((L/\ell)\cdot \ell^2\cdot tk)$, as $\Var[X+Y] \le 2(\Var[X] + \Var[Y])$ for any random variables $X$ and $Y$.  
This variance bound is $O((\eps k)^2)$
by setting $t:=\eps^2 k/(L\ell)\ge \eps^2k^2/(n\ell)$.
The running time is 
\begin{eqnarray*}
\lefteqn{
\OO\left(k \left(\frac{n/\ell}{\eps^2 k^2/(n\ell)}\right)^{(1+\omega)/(5+\omega)} + 
\frac{  k(n/\ell)^{(1+\omega)/(5+\omega)}(n/(k\ell))^{(3-\omega)/(5+\omega)} }
{ (\eps^2 k^2/(n\ell))^{4/(5+\omega)}}
+ n\right)
}\\
&\le & 
\OO\left( \eps^{-2(1+\omega)/(5+\omega)} n^{2(1+\omega)/(5+\omega)}
k^{(3-\omega)/(5+\omega)} +
\eps^{-8/(5+\omega)} n^{8/(5+\omega)}
/ k^{2(3-\omega)/(5+\omega)} + n
\right).
\end{eqnarray*}

Since the estimate for $\COUNT[z]$ has variance $\OO(tk)=\OO((\eps k)^2)$,
the estimate has additive error $\OO(\eps k)$ with probability at least
$0.9$, say, by Chebyshev's inequality.
We can repeat $\Theta(\log n)$ times and take the median, which
achieves additive error $\OO(\eps k)$ w.h.p.\ by the
Chernoff bound.
\end{proof}

\subsection{Applying to Text-to-Pattern Hamming Distances}

Finally, we connect the Text-to-Pattern Hamming Distances to colored counting $3$SUM\@.
The connection is not difficult to see:
For each $a\in [n]$, put $a$ in the set $A$ with color $T[a]$, and
for each $b\in [m]$, put $b$ in the set $B$ with color $P[b]$, where $T$ and $P$ are the given text and pattern strings.
The number of matches between $P$ and $T[i\dd i+m-1]$ is precisely
$|\{(a,b)\in A\times B:\ a-b=i,\ \COL(a)=\COL(b)\}|$.  So, we can apply 
Lemma~\ref{lem:approx:col3sum} (with $U=n$, after negating $B$) to approximate the number of matches,
and thus also the number of mismatches, with variance/additive error dependent on $n$.

However, we want additive error $O(\eps k)$ for distances bounded by $k$.
To this end, we apply a technique from known exact algorithms:
Gawrychowski and Uzna\'{n}ski \cite{GawrychowskiU18} reduced $k$-bounded Text-to-Pattern Hamming Distances to $O(n/m)$ instance of the same problem for text and pattern strings of
length $O(m)$ that have \emph{run-length encodings} bounded by $O(k)$---in other words,
the text and pattern strings are concatenations of $O(k)$ blocks of identical characters (\emph{runs}).  
The reduction takes near-linear time, and preserves
additive approximation.  When mapped to the above colored sets $A$ and $B$, these blocks become 
monochromatic intervals.  So, we can apply Lemma~\ref{lem:RLE} to approximate the number of matches with additive error $O(\eps k)$,
and thus 
the number of mismatches with additive error
$O(\eps k)$, and
immediately obtain:

\begin{theorem}
\label{thm:approx_cap_k}
The approximate $k$-bounded Text-to-Pattern Hamming Distances problem %
additive error $O(\eps k)$
can be solved by a Monte Carlo algorithm in time
\[ \OO\left(\frac{n}{m} \cdot \left( \eps^{-1+\beta} m^{1-\beta}k^{\beta} +
\eps^{-1-\beta} m^{1+\beta} / k^{2\beta} + m
\right)\right), \qquad
\mbox{where $\beta := (3-\omega)/(5+\omega)$.}
\]
\end{theorem}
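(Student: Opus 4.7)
The plan is to apply the reduction of Gawrychowski and Uzna\'{n}ski~\cite{GawrychowskiU18} to reduce the $k$-bounded problem on a length-$n$ text and length-$m$ pattern to $O(n/m)$ independent sub-instances, each consisting of a text of length $O(m)$ and the same pattern of length $m$, with the additional guarantee that both strings admit run-length encodings with only $O(k)$ runs. This reduction takes near-linear time in $n$ and preserves additive error up to constant factors, so it suffices to give a bound for one sub-instance and multiply by the number of sub-instances.

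For a single sub-instance, I would instantiate the colored $3$SUM setup described just before the theorem statement: for each position $a$ in the text (of length $O(m)$), place $a$ in $A$ with color equal to the text character at that position, and for each position $b$ in the pattern place $b$ in $B$ (negated) with the corresponding color. Because the text and pattern have $O(k)$ runs, both $A$ and $B$ decompose into $O(k)$ disjoint monochromatic intervals, and the universe has size $O(m)$. The number of character matches between $P$ and $T[i \dd i+m-1]$ is exactly the colored sumset count $\COUNT[i]$ from \cref{lem:RLE}, applied with parameters $n := O(m)$ and $k := O(k)$. Invoking \cref{lem:RLE} yields estimates for $\COUNT[\cdot]$ with additive error $\OO(\eps k)$ with high probability, in time
\[
\OO\left( \eps^{-1+\beta} m^{1-\beta}k^{\beta} +
\eps^{-1-\beta} m^{1+\beta} / k^{2\beta} + m
\right),\qquad \beta = (3-\omega)/(5+\omega).
\]

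To convert match counts to mismatch estimates, I subtract from $m$: this preserves the additive error $\OO(\eps k)$. Finally I aggregate the estimates from all $O(n/m)$ sub-instances (each handled independently) to obtain an approximation of the $k$-bounded Hamming distance array for the original input. The total running time multiplies the single-instance bound by $O(n/m)$, giving the stated bound. The high-probability guarantee on each sub-instance can be preserved under a union bound over the $O(n/m)$ sub-instances by boosting the per-instance success probability within the already logarithmic factors hidden by $\OO(\cdot)$.

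The main obstacle here is not really a new technical one: all the hard work has been isolated into \cref{lem:RLE} (which handles the colored counting $3$SUM with run-length-encoded inputs) and the reduction of~\cite{GawrychowskiU18} (which reduces the $k$-mismatch problem to short instances with bounded run-length encoding). The only thing to verify is that the reduction can be invoked in an approximation-preserving way, i.e., that per-shift additive error $\OO(\eps k)$ on each sub-instance yields additive error $\OO(\eps k)$ globally, rather than accumulating across sub-instances. This holds because each original shift $i$ contributes to a bounded number of sub-instance shifts in the standard Gawrychowski--Uzna\'{n}ski partitioning scheme, so the error bound is preserved per output shift.
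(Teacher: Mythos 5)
Your proposal follows exactly the paper's own argument: apply the Gawrychowski--Uzna\'{n}ski reduction to get $O(n/m)$ length-$O(m)$ sub-instances whose run-length encodings have $O(k)$ runs, map runs to monochromatic intervals in the colored $3$SUM formulation, invoke \cref{lem:RLE} on each sub-instance, convert matches to mismatches, and multiply by $O(n/m)$. The paper states slightly less explicitly that the reduction preserves additive approximation per shift, but your justification of that point is consistent with it.
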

\cref{thm:approx_cap_k} implies 
\cref{thm:approxhammingmain}, which we recall below:
\approxhammingmain*
\begin{proof}

For all shifts with Hamming distance $k \le \eps^{-\gamma}\sqrt{m}$, we can use a known
exact algorithm running in $\OO(\frac{n}{m}\cdot (m+k\sqrt{m}))\le \OO(\eps^{-\gamma} n)$ time \cite{GawrychowskiU18,ChanGKKP20}. 
Otherwise, the bound in \cref{thm:approx_cap_k}
is at most $\OO(\frac{n}{m}\cdot (\eps^{-1+\beta}m + \eps^{-1-\beta+2\beta\gamma}m))$.
We thus obtain an upper bound
of $\OO(\eps^{-\gamma} n)$ in all cases, by setting $\gamma:=(1+\beta)/(1+2\beta) = 8/(11-\omega) < 0.928$.
We run the entire algorithm for every $k$ that is a power of~2.
\end{proof}

\begin{remark}
With more tedious calculations, the exponent 0.928 can likely be improved by using
known bounds on rectangular matrix multiplication, but the improvement would be tiny.  If $\omega=2$, the above bound is $\OO(\eps^{-8/9}n)$.  
Note that in the critical case when $k=\Theta(m)$ and $\eps^{-1}=\sqrt{m}$ (which we have mentioned earlier),
the bound in Theorem~\ref{thm:approx_cap_k} is actually a little better: $\OO(nm^{(1-\beta)/2})$, which is $\OO(nm^{3/7})$ if $\omega=2$.
\end{remark}

\section{Randomized Exact Text-to-Pattern Hamming Distances}

\newcommand{\COUNTLIVE}{\mbox{\sc count}_{\mbox{\scriptsize live}}}

In this and the next section, we turn to exact algorithms for Text-to-Pattern Hamming Distances.

\subsection{\texorpdfstring{$X+Y$}{X+Y} Lemma}
The key ingredient of our new algorithm is the following lemma on computing the sumset $X+Y$, along with the multiplicities of its elements,
for sets $X$ and $Y$ of $n$ elements in a bounded integer universe
(this is equivalent to computing convolutions of sparse binary vectors).  
\begin{lemma}
\label{lem:x+y}
Given two (multi)sets $X$ and $Y$ of $n$ elements in $[U/2]$ with $2U<n^2$,
we can compute $\COUNT[z]:=|\{(x,y)\in X\times Y: x+y=z\}|$ for every $z\in [U]$ by a Las Vegas algorithm
in $O(U\log(n^2/U))$ expected time.
\end{lemma}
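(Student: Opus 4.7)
The plan is to combine bit-packed FFT (\cref{lem:packed_FFT}) with a sparse-convolution-style hashing scheme, picking a prime $p$ of size $\Theta(n^2/U)$ so that $\log p = \Theta(\log(n^2/U))$. Let $a$ and $b$ be the length-$U$ indicator sequences of $X$ and $Y$ viewed as vectors over $\F_p$. First I would apply \cref{lem:packed_FFT} to compute $\COUNT[z] \bmod p$ for every $z \in [U]$ in time $O(U \log p) = O(U \log(n^2/U))$, which already matches the target. Any $z$ with $\COUNT[z] < p$ (a ``light'' element) has its exact count read off immediately, so only the heavy set $H := \{z : \COUNT[z] \geq p\}$ remains to be handled. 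Since $\sum_z \COUNT[z] = n^2$, we have $|H| \leq n^2/p = O(U)$, and by inflating $p$ by a polylogarithmic factor if needed one can ensure $|H| \ll U$.

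To locate and evaluate $H$ within budget, I would use Dietzfelbinger's almost-linear hash family to fold $[U]$ onto $[U']$ with $U' = \Theta(U/p)$. These hashes satisfy an ``almost homomorphism'' of the form $h(x) + h(y) - h(x+y) \in \{0, \pm 1\}$, so the convolution of the hashed indicators of $X$ and $Y$ is essentially the hashed version of the original convolution; in particular, hash buckets in $[U']$ carrying $\Omega(p)$ total mass correspond to candidate heavy preimages in $[U]$. I would then recurse on this smaller instance to identify those buckets, yielding a small superset of~$H$. Within each such bucket, heavy elements can be disentangled from light ones by subtracting the Step~1 mod-$p$ values from the bucket totals; residual collisions are resolved by rerunning with an independent hash. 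The recursion shrinks the universe by a factor of $\Theta(p)$ at each level, so $T(U) = T(U/p) + O(U \log p)$ telescopes to $T(U) = O(U \log(n^2/U))$, matching the bound. Correctness of each run is verifiable because the recovered counts must satisfy $\sum_z \COUNT[z] = n^2$ and be consistent with the Step~1 residues; failed runs are detected and restarted, giving the Las Vegas guarantee.

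The main obstacle will be making the recursion balance cleanly despite the idiosyncrasies of Dietzfelbinger's family. I expect to need (a) a fresh ``almost-linearity/good-collision'' lemma transferring heaviness from $[U]$ to $[U']$ with controlled failure probability over the random hash; (b) a careful treatment of the one-bit carry when relating $h(X) \star h(Y)$ to the hashed version of $X \star Y$, and when reading off exact counts after subtraction; and (c) verification that in the boundary regime where $U$ approaches $n^2/2$ (so $\log(n^2/U) = O(1)$ and the per-cell budget is only $O(1)$), the geometric recursion does not accrue a spurious $\log$ factor from bucket-processing overhead. The technical overview's allusion to ``establishing new properties'' of Dietzfelbinger's hashes suggests that these three points are precisely where the novelty lies; once they are in place, both correctness and the $O(U \log(n^2/U))$ expected running time follow.
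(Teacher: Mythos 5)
There is a genuine gap, and it sits at the heart of the lemma. After you compute $\COUNT[z]\bmod p$ for every $z$, you claim that ``any $z$ with $\COUNT[z]<p$ has its exact count read off immediately.'' That is only usable if you already \emph{know} which $z$ satisfy $\COUNT[z]<p$; from the residues alone you cannot distinguish $\COUNT[z]=c$ from $\COUNT[z]=c+p$. Identifying the light elements is exactly the chicken-and-egg problem the lemma has to solve. The paper handles it with a separate hash $f\colon[U]\to[L]$ and a \emph{recursive} call on $(f(X),f(Y))$ in universe $O(L)$; the returned counts of the hashed instance give upper bounds $\COUNT_f[z]\ge \COUNT[z]$ from which light/heavy status is decided, with the number of heavy live elements bounded by $O(L/s+M/s)$. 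Your plan has no analogous step.

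Your mechanism for handling heavy elements also does not close. You hash to $U'=\Theta(U/p)$ and look for buckets of mass $\ge p$. But there can be up to $n^2/p=\Theta(U)$ such buckets, each with $p$ preimages, so the ``candidate heavy'' set can be as large as $\Theta(n^2)$ --- not a small superset of $H$. Worse, a bucket with mass $\ge p$ may contain no heavy element at all (many light elements can add up), and even if it contains one, the bucket sum plus the residues do not disentangle the preimages: shifting a multiple of $p$ from one bucket-mate to another leaves both the bucket total and all residues unchanged. For the same reason your Las Vegas verification (``$\sum_z\COUNT[z]=n^2$ and consistent with residues'') does not certify correctness. The paper instead hashes to $[tM]$, where $M$ is the current number of \emph{live} elements --- a quantity it explicitly tracks --- so that a constant fraction of live elements are \emph{isolated} (the unique live element in their bucket). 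For an isolated live $z$ the bucket total minus the already-known counts of non-live bucket-mates gives $\COUNT[z]\bmod p$ exactly; the remaining $O(L/s+M/s+M/t)$ live elements are pushed into the outer recursion, whose state is $(M,U)$ with $M$ shrinking from $U/r$ to $O(U/r^{3/2})$ per level (super-exponential in $r$), not a single-parameter recursion on $U$. Finally, the carry problem you flag is resolved by a new structural property of Dietzfelbinger's family (Lemma~\ref{lem:modifiedhash}): short labels $\tau_h(x)$ and a partial map $\phi_h$ predict $h(x)+h(y)-h(x+y)$ for all ``good'' pairs, and bad pairs (probability $O(1/q)$ each) are brute-forced, allowing the bit-packed FFT to be run separately on each $(\alpha,\beta)$-class of label values. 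Your sketch names this as a concern but supplies no mechanism, and without it the bucketed convolution $h(X)\star h(Y)$ does not relate cleanly to the hashed counts. So the high-level ingredients you chose (bit-packed FFT modulo $p\approx n^2/U$, Dietzfelbinger hashing, recursion) match the paper's toolkit, but the plan as written is missing the two-hash light/heavy plus isolation architecture, the $(M,U)$ bookkeeping, and the predictable-error labels, each of which is load-bearing.
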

Note that \cref{lem:x+y} improves over the standard $O(U\log U)$ bound by FFT, when $n^2$ is not too large compared to $U$.
Observe that the average count over all $z\in [U]$ is at most $n^2/U$, which is small in the regime of interest here.  It is tempting to simply apply a bit-packed version of FFT (Lemma~\ref{lem:packed_FFT}), but the challenge here is that there can be a mixture of elements
with small and (possibly very) large counts in the sumset, and we don't know which elements have small or large counts in advance.

Our algorithm needs the almost-linear hash family by Dietzfelbinger \cite{Dietzfelbinger96}, which was also used by Baran, Demaine, and  P\v{a}tra\c{s}cu \cite{BaranDP08} in their $3$SUM algorithm. The following statement was proved in~\cite{BaranDP08}.
\begin{lemma}[Almost-linear hash family \cite{BaranDP08}]
\label{lem:hash}
Let $L\le U$ be powers of two. There is a family of hash functions $f\colon [U]\to [L]$ with the following properties:

\begin{enumerate}[(i)]
\item For all $x,y\in [U]$, $f(x)+f(y)-f(x+y)\in\D_f$ for some fixed set $\D_f$ of $O(1)$ size. \label{item:falmostlinear}
\item For any fixed $x,y,z$ with $x+y\neq z$, $\Pr_f[f(x)+f(y)-f(z)\in\D_f]\le O(1/L)$. \label{item:ffewfalsepositive}
\item Sampling and evaluating $f$ take $O(1)$ time.
\end{enumerate}
\end{lemma}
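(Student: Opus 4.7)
The plan is to exhibit the standard Dietzfelbinger multiplicative hash, generalized with an extra ``slack'' in the word size, and then verify each of the three properties in turn. Writing $U = 2^u$ and $L = 2^\ell$, I choose an intermediate word size $w := u + \ell + c$ for a sufficiently large constant $c$, and define the family $\{h_a\}$ by
\[ h_a(x) \;=\; \bigl\lfloor (a x \bmod 2^w)\,/\,2^{w-\ell}\bigr\rfloor, \]
where $a$ is drawn uniformly from the odd integers in $[0, 2^w)$. Property (iii) is immediate: sampling an odd $w$-bit integer and evaluating $h_a$ on a $u$-bit input both take $O(1)$ time in the word-RAM model.

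First I would prove property (i). Writing $ax \bmod 2^w = h_a(x)\cdot 2^{w-\ell} + r_x$ with $r_x \in [0, 2^{w-\ell})$, and similarly for $y$, and using $a(x+y) \equiv ax + ay \pmod{2^w}$, the low $w-\ell$ bits of $a(x+y) \bmod 2^w$ equal $(r_x + r_y) \bmod 2^{w-\ell}$. Thus the high $\ell$ bits differ from $h_a(x) + h_a(y) \bmod 2^\ell$ by at most one, coming from a carry when $r_x + r_y \ge 2^{w-\ell}$. Hence $h_a(x) + h_a(y) - h_a(x+y) \equiv \delta \pmod L$ for some $\delta \in \{0,1\}$, and since each hash value lies in $[0, L)$, the integer difference lies in the constant-size set $\D_{h_a} \subseteq \{-L+1,\, 0,\, 1,\, L,\, L+1\}$.

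Next I would tackle property (ii), which is the main technical step. Given fixed $x, y, z$ with $x + y \ne z$, set $t := x + y - z$, a nonzero integer with $|t| < 2U \le 2^{u+1}$; in particular its 2-adic valuation $v := v_2(t)$ satisfies $v \le u$. Using (i), the event $h_a(x) + h_a(y) - h_a(z) \in \D_{h_a}$ forces $h_a(x+y) - h_a(z)$ to lie in the $O(1)$-size set $\D_{h_a} - \D_{h_a}$. A further carry argument, applied to $a(x+y) \bmod 2^w = (az + at) \bmod 2^w$, shows that $h_a(x+y) - h_a(z) \equiv h_a(t) + \delta' \pmod L$ for some $\delta' \in \{0,1\}$, so it suffices to bound $\Pr_a[\,h_a(t) \in S\,]$ for an $O(1)$-size set $S \subseteq [0,L)$.

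For this I would use the key number-theoretic observation: when $a$ is uniform over odd integers in $[0, 2^w)$, the product $at \bmod 2^w$ is uniformly distributed over the $2^{w-v-1}$ odd multiples of $2^v$ in $[0, 2^w)$. Each interval $[q\cdot 2^{w-\ell},(q+1)\cdot 2^{w-\ell})$ with $q \in [L)$ contains at most $2^{w-\ell-v-1} + 1$ such values, so
\[ \Pr_a[\,h_a(t) = q\,] \;\le\; \frac{1}{L} \;+\; O\!\left(2^{-(w-v)}\right). \]
Since $v \le u$ and $w = u + \ell + c$, the error term is $O(2^{-\ell}) = O(1/L)$; summing over the $O(1)$ targets in $S$ yields the desired $O(1/L)$ bound. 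The main obstacle is precisely this 2-adic-valuation concern: without the extra slack $c$ in the word size, a $t$ with $v_2(t)$ close to $u$ would pin $h_a(t)$ to nearly a single value, breaking (ii). The plan's use of $w$ strictly larger than $u + \ell$ is the essential fix, and it costs only a constant blow-up in evaluation time, preserving (iii).
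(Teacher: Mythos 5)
The paper cites this lemma from \cite{BaranDP08} without giving its own proof; your argument reconstructs essentially the standard Dietzfelbinger multiplicative-hashing argument used there, which also matches the construction the paper proves in detail for the related \cref{lem:modifiedhash}. It is correct; the extra constant slack $c$ beyond $w=u+\ell$ is harmless but unnecessary (since $v_2(t)\le u$ already gives $w-v\ge\ell$), and the explicit description of $\D_{h_a}$ has a small sign slip (for $f(x)+f(y)-f(x+y)$ it should be $\{-1,0,L-1,L\}$), but neither affects any step.
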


Although the above hash family has been used successfully for various problems
related to convolutions, new technical issues arise when applying them to \emph{counting} problems: even though we know that $h(x)+h(y)-h(x+y)$ lies in a
set of $O(1)$ size for a hash function $h$, the precise value is still unpredictable. 
(If one uses another popular almost linear hash family, $h(x)=x\bmod p$ for random primes $p$, then this issue goes away, but collision probabilities increase by a $\log U$ factor, which we cannot afford to lose here.) 
 Below, we prove new additional properties about (one version of) Dietzfelbinger's hash family, stating that for most ``good'' pairs $(x,y)$, the value of $h(x)+h(y)-h(x+y)$ can be determined precisely by looking at some short labels
$\tau_h(x)$ and $\tau_h(y)$ of $x$ and $y$.

\begin{lemma}[Almost-linear hash family with somewhat predictable errors]
\label{lem:modifiedhash}
Let $q\le V\le U$ be powers of two. There is a family of hash functions $h\colon [U]\to [V]$ with the following properties:

\begin{enumerate}[(i)]
\item For any fixed $z,z'$ with $z\neq z'$, $\Pr_h[h(z)=h(z')]\le O(1/V)$.
\label{item:collisionprob}
\item There is a fixed set $\Delta_h$ of $O(1)$ size, and mappings $\tau_h\colon [U]\rightarrow [q^{O(1)}]$ and
$\phi_h\colon [q^{O(1)}]\times [q^{O(1)}]\rightarrow \D_h\cup\{\mbox{undefined}\}$ such that for all $x,y$, 
$h(x)+h(y)-h(x+y)=\phi_h(\tau_h(x),\tau_h(y))$ if $\phi_h(\tau_h(x),\tau_h(y))$ is defined.  
Call $(x,y)$ \emph{good} if $\phi_h(\tau_h(x),\tau_h(y))$ is defined, and \emph{bad} otherwise.
\label{item:mappingprediction}
\item For any fixed $x,y \in [U/2]$, $\Pr_h[\mbox{$(x,y)$ is bad}]\le  O(1/q)$.
\label{item:badprob}
\item Sampling and evaluating $h$ take $O(1)$ time. The mappings $\tau_h,\phi_h$ can also be computed in $O(1)$ time.
\end{enumerate}
\end{lemma}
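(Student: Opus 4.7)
The plan is to use Dietzfelbinger's construction with a small augmentation.  Let $w := \log_2 U$, $v := \log_2 V$, $Q := \log_2 q$, and set $h(x) = \lfloor (ax \bmod 2^w)/2^{w-v}\rfloor$ for a uniformly random odd $a\in [2^w]$.  Writing $g(x) := ax\bmod 2^w = h(x)\cdot 2^{w-v} + r(x)$ with $r(x)\in[0,2^{w-v})$, and using $g(x+y) = (g(x)+g(y))\bmod 2^w$, a four-way case analysis (on whether $r(x)+r(y)\ge 2^{w-v}$ and whether $g(x)+g(y)\ge 2^w$) shows
\[ h(x)+h(y)-h(x+y) \;=\; W\cdot 2^v - c \;\in\; \Delta_h \;:=\; \{0,\,-1,\,2^v,\,2^v-1\}, \]
where $c := \mathds{1}[r(x)+r(y)\ge 2^{w-v}]$ is the low-half carry bit and $W := \mathds{1}[g(x)+g(y)\ge 2^w]$ is the top-half wrap bit.

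The label $\tau_h(x)$ will expose just enough of $g(x)$ to let the carry and the wrap be recovered from labels alone.  I would set $\tau_h(x) := (\lfloor r(x)/2^{w-v-Q}\rfloor,\ \lfloor h(x)/2^{v-Q}\rfloor) \in [q]\times[q] \subseteq [q^{O(1)}]$, i.e., the top $Q$ bits of $r(x)$ together with the top $Q$ bits of $h(x)$.  Letting $\alpha_i,\beta_i$ denote the two coordinates of $\tau_h(\cdot)$, the interval bound $r(x)+r(y)\in[(\alpha_1+\alpha_2)\cdot 2^{w-v-Q},(\alpha_1+\alpha_2+2)\cdot 2^{w-v-Q})$ pins down $c$ whenever $\alpha_1+\alpha_2\neq q-1$, and an analogous bound for $h$ pins down $W$ whenever $\beta_1+\beta_2\neq q-1$.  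I therefore declare $\phi_h(\tau_h(x),\tau_h(y))$ defined exactly when both sums avoid the ambiguous value $q-1$, and otherwise mark $(x,y)$ bad; when defined, $\phi_h$ outputs the unique element of $\Delta_h$ predicted by the computed $(c,W)$.  Property (ii) then holds by construction and (iv) is immediate.

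Property (i) will follow from the usual Dietzfelbinger analysis: $h(z)=h(z')$ forces $a(z-z')\bmod 2^w$ into a union of a constant number of intervals of total length $O(2^{w-v})$, and the uniform-odd distribution of $a$ gives probability $O(1/V)$.  For property (iii), by a union bound it suffices to show that for every fixed $x,y\in[U/2]$, each of $\Pr_a[\alpha_1+\alpha_2 = q-1]$ and $\Pr_a[\beta_1+\beta_2 = q-1]$ is $O(1/q)$.  Each label coordinate is itself a Dietzfelbinger-style hash into $[q]$---namely $x\mapsto\lfloor (ax\bmod 2^{w-v})/2^{w-v-Q}\rfloor$ and $x\mapsto\lfloor (ax\bmod 2^w)/2^{w-Q}\rfloor$---so the problem reduces to bounding the probability that the sum of such a hash evaluated at $x$ and at $y$ hits a fixed target, which is a modest extension of the standard pair-collision calculation for Dietzfelbinger's family.

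The main technical obstacle I expect is this last calculation for pairs $(x,y)$ where $x$ and $y$ share a large $2$-adic factor, since the output distribution of the hash then concentrates on multiples of a high power of $2$.  My fallback is to case-split on $j := \min(\nu_2(x),\nu_2(y))$: factoring out $2^j$ reduces the problem to a hash defined on a range whose outputs are the $2^{w-v-j}/2$ odd multiples of $2^j$ in $[2^{w-v}]$, and the target ``short window'' of length $O(2^{w-v}/q)$ contains only $O(2^{w-v-j}/q)$ such multiples, so the required $O(1/q)$ bound still holds.  The hypothesis $x,y\in[U/2]$ rules out the trivial corner where the hash output collapses to a constant and no nontrivial bound is needed.
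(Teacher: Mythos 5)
Your proposal uses the "small" Dietzfelbinger multiplier and places the hash at the top $v$ bits of $ax\bmod 2^w$. Properties (i), (ii), (iv) are fine, but property (iii) breaks for your construction, and this is precisely where the paper diverges from you.

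Your low-carry label is $\alpha(x)=\lfloor (ax\bmod 2^{w-v})/2^{w-v-Q}\rfloor$, i.e.\ bits $w-v-Q,\dots,w-v-1$ of $ax$. Your argument that $\Pr[\alpha(x)+\alpha(y)=q-1]=O(1/q)$ implicitly needs randomness in those bit positions of $a(x+y)$. But the only control you have from the hypothesis $x,y\in[U/2]$ is $\low(x+y)\le w-1$, which says nothing about bits below $w-1$. If $\low(x+y) > w-v-1$, those label positions of $a(x+y)$ are deterministically $0$, and the event can become certain. A concrete counterexample: take $x=2$, $y=2^{w-1}-2\in[U/2]$, so $x+y=2^{w-1}$. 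Then $r(x)+r(y)=a(x+y)\bmod 2^{w-v}=0$ modulo $2^{w-v}$, and since $r(x),r(y)\neq 0$ for $v<w-1$, in fact $r(x)+r(y)=2^{w-v}$ for every odd $a$. Whenever $Q\le w-v-2$ (which is compatible with the lemma's constraint $q\le V$ as soon as $U\ge 4q^2$), $r(x)\not\equiv 0\pmod{2^{w-v-Q}}$, and a short calculation gives $\alpha(x)+\alpha(y)=q-1$ for \emph{every} odd $a$. So $\Pr[\text{bad}]=1$, not $O(1/q)$. Your fallback case-split on $j=\min(\nu_2(x),\nu_2(y))$ does not rescue this: here $\nu_2(x)=\nu_2(y)=1$, yet $\nu_2(x+y)=w-1$; the dangerous quantity is $\nu_2(x+y)$, which your case-split does not control.

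The paper sidesteps this by using a multiplier $r$ drawn uniformly from the odd integers in $[2^{w+\ell}]$ and defining $h(x)=\bin(rx)_{w+\ell-1:w}$, so the hash occupies bits $w,\dots,w+\ell-1$ and the two carries to predict are at positions $w$ and $w+\ell$. The labels $\tau_h(x)$ are then the bit windows $w-k,\dots,w-1$ and $w+\ell-k,\dots,w+\ell-1$. Now the low label window sits at indices $\ge w-k$, and since $\low(x+y)\le w-1$ the argument splits cleanly: if $\low(x+y)\ge w-k+1$ a bit in the window is forced to be $0$ (ruling out the all-ones outcome) or forced to be $1$ (ruling out the all-zeros outcome), and otherwise the $k$ or $k-1$ bits above $\low(x+y)$ are uniform, giving the $O(1/q)$ bound. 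In short, the design decision of putting the hash bits \emph{above} position $w$, which requires the enlarged multiplier range $[2^{w+\ell}]$, is exactly what makes property (iii) go through; your version with a $w$-bit multiplier does not admit the same proof and in fact fails.
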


\begin{proof}
For a nonnegative integer $a$, let $\bin(a)_i$ denote the $i$-th binary bit of $a$ (for example, $\bin(a)_0 = a \bmod 2$). For $i\ge j\ge 0$, let $\bin(a)_{i:j}$ denote the concatenation $\bin(a)_i\circ \bin(a)_{i-1}\circ \cdots \circ \bin(a)_j$.
Let $\low(a)$ denote the smallest $i$ such that $\bin(a)_i=1$; define $\low(0)=+\infty$.

Let $U=2^w$, $V=2^\ell$, and $q = 2^k$. %
We define the hash family  as follows, similar to \cite{Dietzfelbinger96,BaranDP08}:     Pick a random odd $r\in [2^{w+\ell}]$.  For $x\in [2^w]$, define $h(x)$ as the following $\ell$-bit integer,
\[h(x) = \bin(r\cdot x)_{w+\ell-1:w}.\]

Observe that, when $\low(x) = j$, we have $\bin(r\cdot x)_{j}=1$, $\bin(r\cdot x)_{j'}=0$ for all $j'<j$, and $\bin(r\cdot x)_{w+\ell-1:j+1}$ is a uniform random bit string.

We first prove \cref{item:collisionprob}. Given $0\le z'<z< 2^w$, note that $h(z)=h(z')$ implies $(r\cdot z - r\cdot z')\equiv v \pmod{2^{w+\ell}}$ for some integer $v$ with $|v|< 2^w$, which then implies $\bin(r\cdot (z-z'))_{w+\ell-1:w}$ is either the all-$0$ or all-$1$ $\ell$-bit string. 
Since $\low(z-z')\le w-1$, we know
 $\bin(r\cdot (z-z'))_{w+\ell-1:w}$ is a uniform random $\ell$-bit string. Thus, $h(z)=h(z')$ happens with probability at most $2/2^{\ell} = O(1/V)$.

Now we prove \cref{item:mappingprediction}. 
First note that the error $  h(x+y)-h(x)-h(y)\in \{0,1, -2^\ell, -2^\ell+1\}$, where the $+1$ term appears if and only if adding $r\cdot x$ and $r\cdot y$ generates a carry to the $w$-th bit, and the $-2^\ell$ term appears if and only if this addition generates a carry to the $(w+\ell)$-th bit.
In order to predict these two carry bits, we can define $\tau_h$ as the following pair of $k$-bit strings,
\[\tau_h(x) =(\bin(r\cdot x)_{w-1:w-k},\; \bin(r\cdot x)_{w+\ell-1:w+\ell-k}). \]
Observe that adding $r\cdot x$ and $r\cdot y$ generates a carry to the $w$-th bit if and only if \[\bin(r\cdot x)_{w-1:0} + \bin(r\cdot y)_{w-1:0} \ge 2^w.\]
By looking at the sums of their prefixes $\bin(r\cdot x)_{w-1:w-k}$ and $\bin(r\cdot y)_{w-1:w-k}$ (which are the first component in $\tau_h(x)$ and $\tau_h(y)$), we can unambiguously tell whether this inequality holds, except in the case where this sum $\bin(r\cdot x)_{w-1:w-k} + \bin(r\cdot y)_{w-1:w-k} $ happens to equal the all-$1$ $k$-bit string. In this case we let $\phi_h(\tau_h(x),\tau_h(y))$ be undefined.  Similarly, we use the second component of $\tau_h(x)$ and $\tau_h(y)$ to predict the carry to the $w+\ell$-th bit, and let $\phi_h(\tau_h(x),\tau_h(y))$ be undefined if this fails. This proves \cref{item:mappingprediction}.

Now we bound the probability that $\bin(r\cdot x)_{w-1:w-k} + \bin(r\cdot y)_{w-1:w-k} $ equals the all-$1$ $k$-bit string. Note that this event implies $\bin(r\cdot (x+y))_{w-1:w-k}$ is either the all-$1$ or all-$0$ string. We analyze each of these two cases, as follows:
\begin{itemize}
    \item The all-$1$ case may happen only if $\low(r\cdot (x+y)) = \low(x+y) \le w-k$. In this case, $\bin(r\cdot (x+y))_{w-1:w-k+1}$ is a uniformly random $(k-1)$-bit string, so the probability that it equals the all-$1$ string is at most $1/2^{k-1} = O(1/q)$.
    \item  Since we know $\low(r\cdot (x+y)) =\low(x+y) \le w-1$ from the assumption that $x,y\in [U/2]$, we know that the all-$0$ case may happen only if $\low(r\cdot (x+y)) = \low(x+y) \le w-k-1$. In this case, $\bin(r\cdot (x+y))_{w-1:w-k}$ is a uniformly random $k$-bit string, so the probability that it equals the all-$0$ string is at most $1/2^{k} = O(1/q)$.
\end{itemize}  
Hence, we know the probability that $\bin(r\cdot x)_{w-1:w-k} + \bin(r\cdot y)_{w-1:w-k} $ equals the all-$1$ $k$-bit string is at most $O(1/q)$.

The probability that $\bin(r\cdot x)_{w+\ell-1:w+\ell-k} + \bin(r\cdot y)_{w+\ell-1:w+\ell-k} $ equals the all-$1$ string can be similarly bounded by  $O(1/q)$.  Then \cref{item:badprob} is proved by a union bound.
\end{proof}

Now we prove \cref{lem:x+y} by a new algorithm that uses a clever combination of hashing, bit-packed FFT, and recursion.  We use hashing to reduce the number of live elements (elements whose counts are not yet known), but interestingly, we also use hashing and an extra recursive call to identify candidates for light elements (elements whose counts are small).  Fortunately, since the number of live elements decreases at a super-exponential rate, the extra recursive calls do not blow up the final time bound, as we will see.

\begin{proof}
Our algorithm uses recursion. It recursively solves the following ``partial version'' of the original problem: There are $M$ \emph{live} elements $z\in [U]$ for which we are required to compute $\COUNT[z]$. For the remaining $U-M$ non-live elements $z\in [U]$, the correct values of $\COUNT[z]$ are already known and given to us.
Let $T(n,M,U)$ be the expected time complexity of the problem of computing $\COUNT[z]$ for all the $M$ live elements $z\in [U]$.
The original problem corresponds to the case where all elements are live and $M=U$; so we want to bound $T(n,U,U)$.

\paragraph{Step 1: classify elements as light or heavy.}
Let $L,s $ be parameters. 
Choose a random function $f\colon [U]\rightarrow [L]$
from a family of almost-linear hash functions from \cref{lem:hash}.
Let $\COUNT_f[z]:=|\{(x,y)\in X\times Y: f(x)+f(y)-f(z)\in\D_f\}|$.
Call $z$ \emph{light} if $\COUNT_f[z] < 2sn^2/L$, and \emph{heavy} otherwise.
By \cref{item:falmostlinear} of \cref{lem:hash}, if $z$ is light, then $\COUNT[z]\le \COUNT_f[z]< 2sn^2/L$.

To determine which elements are light or heavy, we recursively
solve the problem for the (multi)sets $f(X)$ and $f(Y)$, in $T(n,O(L),O(L))$ time, and obtain $c_f[k]:=|\{(x,y)\in X\times Y: f(x)+f(y)=k\}|$ for all $k$. Then 
we can obtain $\COUNT_f[z]=\sum_{\delta\in\D_f}c_f[f(z)+\delta]$.

We now bound the number of heavy live elements in two cases:
\begin{enumerate}[(i)]
    \item 
First, since $\sum_{z} \COUNT[z] = |X||Y|=n^2$, the number of elements $z$ with $\COUNT[z]\ge sn^2/L$ is at most $L/s$. \label{item:case1}
\item \label{item:case2}
Now we fix a live element $z$ with $\COUNT[z]< sn^2/L$.  Note that $\COUNT_f[z] - \COUNT[z]$ is 
the number of $(x,y)\in X\times Y$ with $x+y\neq z$ and $f(x)+f(y)-f(z)\in \D_f$, which has expectation
at most $O(n^2/L)$ by \cref{item:ffewfalsepositive} of \cref{lem:hash}.
By Markov's inequality, the probability that this number exceeds $sn^2/L$ is $O(1/s)$.
Thus, the probability that $z$ is heavy is $O(1/s)$.  
\end{enumerate}
Summing up Case~\ref{item:case1} and Case~\ref{item:case2} (over all $M$ live elements), the expected number of heavy live elements is $O(L/s + M/s)$.
By Markov's inequality, we can guarantee that this bound holds after $O(1)$
expected number of trials.

\paragraph{Step 2: classify elements as isolated or non-isolated.}
Let $t$ and $q$ be parameters.  
Independently choose another random function
$h\colon [U] \rightarrow [tM]$ from a family of modified almost-linear hash functions from \cref{lem:modifiedhash} (with parameter $q$).
Let $\COUNTLIVE[z]:=|\{z'\ \mbox{live}:\ z'\neq z,\ h(z')=h(z)\}|$.
Call a live element $z$ \emph{isolated} if $\COUNTLIVE[z]=0$, and \emph{non-isolated} otherwise.

For a fixed live element $z$, the expectation of $\COUNTLIVE[z]$ is 
$O(M\cdot 1/(tM))=O(1/t)$ by \cref{item:collisionprob} of \cref{lem:modifiedhash}.  By Markov's inequality, the probability that $z$ is non-isolated is
$O(1/t)$.  So, the expected number of non-isolated live elements is $O(M/t)$.  By Markov's inequality, we can guarantee that this bound holds after $O(1)$
expected number of trials.

\paragraph{Step 3: compute counts for isolated light elements.}
\newcommand{\cbad}{c_{\mbox{\scriptsize\rm bad}}}
\newcommand{\cgood}{c_{\mbox{\scriptsize\rm good}}}
Define
\begin{eqnarray*}
 c[k] &:=& \sum_{z:\ h(z)=k}\COUNT[z]\\ 
 &=& |\{(x,y)\in X\times Y: h(x+y)=k\}|.
\end{eqnarray*}
To compute $c[k]$, we decompose it into $c[k] = \cbad[k] + \cgood[k]$ and compute separately,
where $\cbad[k] := |\{(x,y)\in X\times Y\ \mbox{bad}: h(x+y)=k\}|$ 
and $\cgood[k] := |\{(x,y)\in X\times Y\ \mbox{good}: h(x+y)=k\}|$ (see the definition of good and bad in \cref{lem:modifiedhash}).
Let $X^{(\alpha)}:=\{x\in X: \tau_h(x)=\alpha\}$ and 
$Y^{(\beta)}:=\{y\in Y: \tau_h(y)=\beta\}$. We preprocess sets $X^{(\alpha)}$ for all $\alpha \in [q^{O(1)}]$ (and sets $Y^{(\beta)}$ for all $\beta\in [q^{O(1)}]$) in $O(n+q^{O(1)})$ time. Then,
\begin{itemize}
    \item 
We can compute $\cbad[k]$ for all $k\in [tM]$ by
examining each $\alpha,\beta\in [q^{O(1)}]$ such that
$\phi_h(\alpha,\beta)$ is undefined, and enumerating all
$(x,y)\in X^{(\alpha)}\times Y^{(\beta)}$, and incrementing
the counter for $h(x+y)$.  Since each pair $(x,y)\in X\times Y$ is bad with $O(1/q)$ probability by \cref{item:badprob} of \cref{lem:modifiedhash}, the expected total time
is $O(q^{O(1)}+ n^2/q)$.
\item 
Fix a prime $p\in [2sn^2/L, 4sn^2/L]$.\footnote{The primes used by this recursive algorithm can be generated and fixed at the very beginning, in $\polylog(n)$ Las Vegas time. }
We can compute $\cgood[k]\bmod p$ for all $k\in [tM]$, by
examining each $\alpha,\beta\in [q^{O(1)}]$ such that
$\phi_h(\alpha,\beta)$ is defined, and
computing $\cgood^{(\alpha,\beta)}[k] := |\{(x,y)\in X^{(\alpha)}\times Y^{(\beta)}:\ h(x)+h(y)=k+\phi_h(\alpha,\beta)\}|\bmod p$,
which reduces to a convolution problem for the multisets 
$h(X^{(\alpha)})$ and
$h(Y^{(\beta)})$ in $[tM]$, done modulo $p$.
By \cref{lem:packed_FFT}, each convolution takes $O(tM\log (sn^2/L))$ time.
The total time over all $\alpha,\beta$ is $O(q^{O(1)}tM\log (sn^2/L))$.
\end{itemize}

Now, we know $c[k]\bmod p$ for all $k\in [tM]$ by summing up $\cbad[k]$ and $\cgood[k]\bmod p$.

For each isolated live element $z$,
we can compute $\COUNT[z]\bmod p$ by taking $c[h(z)]$ and 
subtracting $\COUNT[z']$ for all $z'$ with $z'\neq z$ and
$h(z')=h(z)$.
  Since $z$ is isolated, all such elements $z'$ are not live and
thus their $\COUNT[z']$ values are known.   The expected number of such elements $z'$
is $O(U/(tM))$, by \cref{item:collisionprob} of \cref{lem:modifiedhash}.
  Hence, the total expected time over all isolated live elements $z$
is $O(M\cdot U/(tM))=O(U/t)$.

If $z$ is light, $\COUNT[z]$ is the same as $\COUNT[z]\bmod p$.
Thus, we have computed $\COUNT[z]$ for all isolated light live elements $z$.

\paragraph{Step 4: compute counts for non-isolated elements and
heavy elements.}
The remaining live elements are non-isolated or heavy.  
We have already shown that there are $O(L/s + M/s + M/t)$ such elements.
We recursively solve the problem for these elements
in $T(n,O(L/s+M/s+M/t),U)$ time.

\paragraph{Analysis.}  We obtain the following recurrence: 
\begin{eqnarray*}
 T(n,M,U) &\le& O(1)T(n,O(L),O(L)) + T(n,O(L/s+M/s+M/t),U) \\
 && {} + O\big(q^{O(1)}tM\log (sn^2/L) + n^2/q + U/t\big).
\end{eqnarray*}
Let $M=U/r$. Set $L=U/r^{3/2}$, $s=t=\sqrt{r}$, and $q=r^\eps$ for a sufficiently small constant $\eps>0$. Recall $n^2>2U$.  Then the recurrence simplifies to
\[  T(n,U/r,U)\ \le\ O(1) T(n,O(U/r^{3/2}),U) + O\big ((U/r^{1/2-O(\eps)})\log(n^2/U) + n^2/r^\eps\big ),
\]
where we absorbed the $r$ dependence in $\log(sn^2/L) = \log(r^2 n^2/U)$ into the $r^{O(\eps)}$ factor.
Expanding the recurrence gives
\begin{eqnarray}
T(n,U/r,U) &\le& O\left(\sum_{i=0}^\infty  
O(1)^i \left((U/r^{(3/2)^i (1/2-O(\eps))})\log(n^2/U)+ n^2/r^{(3/2)^i\eps}\right) \right)\nonumber\\
&=& O\big((U/r^{1/2-O(\eps)})\log(n^2/U)+ n^2/r^\eps\big ).  \label{eqn1}
\end{eqnarray}

Now we explain an alternative, simpler algorithm, to be used in the first level of recursion: 
Instead of running Steps 2--3, we just directly
compute the counts for the light live elements by packed FFT (\cref{lem:packed_FFT}) over the universe $U$ modulo a $p\in [2sn^2/L, 4sn^2/L]$,
and recurse on the remaining heavy elements. The recurrence is
\begin{eqnarray*}
 T(n,M,U) &\le& O(1)T(n,O(L),O(L)) + T(n,O(L/s+M/s),U) + O(U\log (sn^2/L)).
\end{eqnarray*}
Set $M=U$, $L=U/r_1$, and $s=r_1$.  Then
\begin{eqnarray*}
 T(n,U,U) &\le& O(1) T(n,O(U/r_1),U) + O(U\log (r_1^2n^2/U))\\
 &\le & O((U/r_1^{1/2-O(\eps)})\log(n^2/U)+ n^2/r_1^\eps + U\log(r_1^2n^2/U))\qquad\mbox{by (\ref{eqn1}).}
\end{eqnarray*}
Finally, setting $r_1=(n^2/U)^{1/\eps}$ yields
$T(n,U,U)=O(U\log(n^2/U))$.
\end{proof}

\subsection{Text-to-Pattern Hamming Distances}
Our exact algorithm for the Text-to-Pattern Hamming Distances problem (\cref{thm:exacthammingmain}) now easily follows from \cref{lem:x+y}.
\exacthammingmain*
\begin{proof}
For each character $c\in\Sigma$,
let $A_c=\{a\in [n]:T[a]=c\}$ and $B_c=\{b\in[m]:P[b]=c\}$,
and $n_c=|A_c|+|B_c|$.
Note that
$\sum_c n_c = O(n)$.
The number of matches between $P$ and $T[i\dd i+m-1]$ is precisely $\sum_c |\{(a,b)\in A_c\times B_c: a-b=i\}|$.
So, the problem reduces to solving an instance of the problem from \cref{lem:x+y} (after negating $B_c$) with $n_c$ elements and universe size $n$, for each character $c$. 

If $n_c \le \sqrt{2n}$, we can solve the problem by brute-force in $O(n_c^2)$ time. It is straightforward to bound the total running time of this case by $O(n^{3/2})$. 

For $n_c > \sqrt{2n}$, we apply \cref{lem:x+y}. Consider any $\ell = 1, \ldots, \lceil \log(\sqrt{n/2}) \rceil$, and consider $n_c$ that is between $2^{\ell-1}\sqrt{2n}$ and $2^{\ell} \sqrt{2n}$. The number of such $c$ is $O(\frac{\sqrt{n}}{2^\ell})$. Moreover, each time we call \cref{lem:x+y}, if its running time exceeds twice its expectation, we rerun \cref{lem:x+y}. By a standard application of the Chernoff bound, the total number of reruns is $O(\max\{\frac{\sqrt{n}}{2^\ell}, \log n\})$ w.h.p. Therefore, w.h.p., the running time contributed by these $n_c$ is 
$$O\left(\max\left\{\frac{\sqrt{n}}{2^\ell}, \log n\right\} \cdot n \cdot \log((2^{\ell}\sqrt{2n})^2/n)\right) = O\left(\frac{\ell}{2^\ell} \cdot n^{3/2} + \ell n \log n\right).$$ 
Summing up over all $\ell = 1, \ldots, \lceil \log(\sqrt{n/2}) \rceil$ gives the $O(n^{3/2})$ running time. 

Finally, by breaking the problem into $O(n/m)$ instances of size $O(m)$,
the time bound becomes $O((n/m)\cdot m^{3/2})$.
\end{proof}

\subsection{\texorpdfstring{$k$}{k}-Mismatch}

Our technique also improves the previous algorithm for the $k$-mismatch problem. In fact, we only need to replace the use of FFT in \cite{ChanGKKP20}'s $O\big (n + \min \big (\frac{nk}{\sqrt{m}}\sqrt{\log m},\frac{nk^2}{m}\big )\big )$-time algorithm with our new \cref{lem:x+y}. 

\kmismatchmain*
\begin{proof}[Proof Sketch]
    The bottleneck of \cite{ChanGKKP20}'s algorithm lies in the following task: given $2t$ sparse sequences $f_1, \ldots, f_t, g_1, \ldots, g_t$ whose supports are all in $[n]$, and the total size of their supports is $O(k)$, compute (a sparse representation of) $f_i \star g_i$ for every $i$. Additionally, all nonzero entries of $f_i$ and $g_i$ are either $0$ or $1$. The running time for this task in \cite[Lemma 7.8]{ChanGKKP20} is $O(k \min(k, \sqrt{n \log n}))$. It suffices to improve it to provide an $O(k \min(k, \sqrt{n}))$. 

    Let $n_i$ denote the sum of the support size of $f_i$ and $g_i$. Note that $\sum_i n_i = O(k)$. We can either compute $f_i \star g_i$ using brute-force or using \cref{lem:x+y}. Therefore the running time can be written as 
    \[ O\left(\sum_{i:\ n_i \le \sqrt{2n}} n_i^2\ + \sum_{i:\ n_i > \sqrt{2n}}
    n\log (n_i^2/n)\right)\ =\ O(k \min(k, \sqrt{n})).
    \]

    \vspace{-\bigskipamount}
\end{proof}

\subsection{Text-to-Pattern Dominance Matching}
 
 In the Text-to-Pattern Dominance Matching problem, we want to compute $|\{i: P[i] \le T[i+k]\}|$ for all $k$. For convenience in the following we solve the variant $|\{i: P[i] < T[i+k]\}|$ (which is without loss of generality).

We prove the first part of \cref{thm:exactdominancemain}.
\begin{theorem}
\label{thm:exactdominancerandom}
The Text-to-Pattern Dominance Matching problem can be solved by
a Las Vegas algorithm which terminates in $O(n\sqrt{m})$ time with high probability.
\end{theorem}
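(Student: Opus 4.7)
My plan is to adapt the argument for \cref{thm:exacthammingmain} to the dominance setting. By sliding a length-$O(m)$ window over the text and handling each piece independently, I may assume $n = O(m)$ at the cost of an $O(n/m)$ overhead that will be reintroduced at the end. I then split $D(k)$ into its equality and strict parts, $D(k) = \sum_b[P[b] = T[b+k]] + \sum_b[P[b] < T[b+k]]$. The equality part is exactly the Hamming match count, so \cref{thm:exacthammingmain} handles it in expected $O(n\sqrt{m})$ time, and I focus on the strict-inequality part $D^<(k)$.

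To decompose $D^<$ into sparse convolutions amenable to \cref{lem:x+y}, I will build a balanced binary segment tree $\mathcal{T}$ over the alphabet $\Sigma$. For each internal node $v \in \mathcal{T}$ with left/right-child ranges $L_v, R_v$, set $\widetilde{B}_v := \{b\in[m] : P[b] \in L_v\}$ and $\widetilde{A}_v := \{a\in[n] : T[a] \in R_v\}$. Since each pair $(b,a)$ with $P[b] < T[a]$ has a unique LCA in $\mathcal{T}$, we obtain
\[ D^<(k) \;=\; \sum_{v \in \mathcal{T}} \bigl|\{(a,b) \in \widetilde{A}_v \times \widetilde{B}_v : a-b=k\}\bigr|, \]
and each summand is an instance of \cref{lem:x+y} on sets of combined size $n_v := |\widetilde{A}_v|+|\widetilde{B}_v|$ with universe $O(m)$. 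The key quantitative observation is that at each level of $\mathcal{T}$ the $\widetilde{B}_v$'s form a partition of a subset of $[m]$ and the $\widetilde{A}_v$'s a partition of a subset of $[n]$, so $\sum_v n_v = O(m)$ at each level.

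I then run exactly the dyadic analysis used inside the proof of \cref{thm:exacthammingmain}: split the $n_v$'s into the brute-force regime ($n_v \le \sqrt{2m}$, time $O(n_v^2)$) and the \cref{lem:x+y} regime ($n_v > \sqrt{2m}$, expected time $O(m\log(n_v^2/m))$), bucket by powers of two of $n_v$, and sum. Because $\sum_v n_v = O(m)$ at the level, the number of nodes whose $n_v$ falls in the $\ell$-th dyadic bucket $[2^{\ell-1}\sqrt{2m},\, 2^\ell\sqrt{2m})$ is $O(\sqrt{m}/2^\ell)$, so that bucket contributes $O(\ell\cdot m^{3/2}/2^\ell + \ell m\log m)$ in expectation; summing over $\ell$ gives $O(m^{3/2})$ per level, and the same Chernoff tail bound used in \cref{thm:exacthammingmain} guarantees concentration with high probability.

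The main obstacle I anticipate is the $\Theta(\log\sigma)$ levels of $\mathcal{T}$: a naive level-by-level sum would give $O(m^{3/2}\log m)$, losing the logarithm we want to save. To remove it, I plan to carry out the dyadic analysis globally over the pairs (tree level, $n_v$-bucket), exploiting the fact that at deep levels most nodes have $n_v = O(1)$, so the brute-force cost $O(n_v^2)$ is trivial and the \cref{lem:x+y} contribution $\log(n_v^2/m)$ disappears; combined with the fact that the aggregate support across all levels is $O(m\log m)$, this should let the per-level $O(m^{3/2})$ bound telescope rather than multiply, yielding $O(m^{3/2})$ over the entire tree. Multiplying by the $O(n/m)$ block factor from the sliding-window reduction then gives the claimed $O(n\sqrt{m})$ bound, with all the randomization living inside the calls to \cref{lem:x+y}.
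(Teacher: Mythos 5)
The high-level strategy --- split each pair's contribution by the LCA of the two character values in a binary tree, reduce each node to a sparse convolution, and invoke \cref{lem:x+y} --- is indeed the right idea, and matches the paper in spirit. But there is a genuine gap in the way you build the tree and in the analysis of the total cost across levels.

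The paper does \emph{not} build a segment tree over the alphabet $\Sigma$. It first sorts all $n+m$ character occurrences (as multiset entries, positions included) into a single array, and then takes dyadic intervals on that \emph{sorted array of occurrences}. This is crucial: a dyadic interval at level $\ell$ contains exactly $2^{\ell+1}$ occurrences, so the instance size $n_v$ fed to \cref{lem:x+y} is automatically $\le 2^{\ell+1}$. In effect this is a \emph{weight-balanced} tree over the alphabet, balanced by occurrence count. Your tree is balanced by alphabet range, and that difference is where the analysis breaks.

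Your per-level observation ($\sum_v n_v = O(m)$ at each of the $\Theta(\log\sigma)$ levels) is correct, and your bucketing by $n_v$ does bound the cost of any single level by $O(m^{3/2})$. But those bounds do not telescope across levels, because a single occurrence can contribute to a node of \emph{large} mass at many levels of the alphabet tree. The claim that ``at deep levels most nodes have $n_v=O(1)$'' is false in general. Here is a concrete adversarial instance with $n=m$ and $\sigma=m^2$ (so $\sigma\le n^{O(1)}$, as the paper permits): take $\sqrt{m}/\log m$ disjoint ``caterpillars,'' where caterpillar $i$ uses characters $c_{i,k}=b_i+2^k$ for $k=0,\ldots,\log m - 1$ and suitable bases $b_i$, and each character appears $\Theta(\sqrt{m})$ times in both $P$ and $T$. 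Then at every level $j\in[\log m,\,2\log m)$, each caterpillar contributes one node $v$ with $|\widetilde{B}_v|\approx(\log m - \ell)\sqrt{m}$ and $|\widetilde{A}_v|\approx\sqrt{m}$ (writing $\ell=j-\log m$), so \cref{lem:x+y} costs $\Theta\bigl(m\log(\log m - \ell)\bigr)$ per node; summing over the $\sqrt{m}/\log m$ caterpillars and the $\Theta(\log m)$ relevant levels gives
\[
\Theta\!\left(\frac{\sqrt{m}}{\log m}\sum_{k=1}^{\log m} m\log k\right)\;=\;\Theta\!\left(m^{3/2}\log\log m\right),
\]
which exceeds the target $O(m^{3/2})$. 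With brute force instead of \cref{lem:x+y} the overhead is $\log m$, not $\log\log m$. So some version of the occurrence-balanced decomposition (or an equivalent idea) is necessary, not just a sharper charging argument on the alphabet tree.

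Two smaller notes. First, you do not need a separate call to \cref{thm:exacthammingmain} for the equality part: the paper handles ties inside the same decomposition, by splitting off the at-most-one shared character between the left and right halves of a dyadic interval. Second, even with the correct decomposition you still need the concentration argument (re-running \cref{lem:x+y} when it exceeds twice its expected time, then Chernoff) to upgrade expected time to with-high-probability termination; you gesture at this, and it is indeed the same argument as in \cref{thm:exacthammingmain}.
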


\begin{proof}
Let us sort all the $n+m\le 2n$  characters (we treat the same character on different locations as different) of the text and the pattern together. For every dyadic interval $I$ on this sorted array (without loss of generality, we assume the length of this array is a power of $2$), let $L$ be the set of indices of the characters in the pattern in the left half of the dyadic interval, and let $R$ be the set of indices of the characters in the text in the right half of the dyadic interval. It suffices to count the contribution of $(i, j) \in L \times R$, i.e., the number of $(i, j) \in L \times R$ with $P[i] < T[j]$ and $i+k=j$ for every $k$ whose count is nonzero. Because we sorted the characters, we already have $P[i] \le T[j]$ for every $(i, j) \in L \times R$. If $\{P[i]: i \in L\}$ and $\{T[j]: j \in R\}$ share some character $c$ (there can be at most one), let $L_c := \{i \in L: P[i] = c\}$ and $R_c := \{j \in R: T[j] = c\}$. Now it suffices to count the contributions from $(L \setminus L_c) \times R$ and $L_c \times (R \setminus R_c)$, each of which can be handled with convolution. Let the dyadic interval $I$ be of length $2^{\ell+1}$. If $2^{\ell} \le \sqrt{2n}$, we use brute-force to compute the convolution; otherwise, we use \cref{lem:x+y}. Summing over all dyadic intervals give the following running time:
$$O\left(\sum_{0 \le \ell \le \log(\sqrt{2n})} \frac{n}{2^\ell} \cdot (2^\ell)^2\ + \sum_{\log(\sqrt{2n}) < \ell \le  \log n } \frac{n}{2^\ell} \cdot n \log((2^{\ell})^2 / n) \right)\ =\ O(n \sqrt{n}).$$
Breaking the problem into $O(n/m)$ instances of size $O(m)$ gives the $O(n\sqrt{m})$ running time. 
As in the proof of Theorem~\ref{thm:exacthammingdetmain}, this can be made to hold w.h.p.\ by applying the Chernoff bound.
\end{proof}

\section{Deterministic Exact Text-to-Pattern Hamming Distances}

For our deterministic exact algorithm, we switch to a simpler approach to hashing, namely, taking a number mod $m_i$ for some choice of $m_i$ (instead of using
Dietzfelbinger's hash family).  We use the following known lemma by Chan
and Lewenstein~\cite{ChanL15}:

\begin{lemma}[\cite{ChanL15}]
\label{lem:ChanL15}
    Given a set $T \subseteq [U]$ of size $n$, there exists an $n\cdot 2^{O(\sqrt{\log n \log \log U})} \cdot \polylog(t, U)$ time deterministic algorithm that constructs $r=2^{O(\sqrt{\log n \log \log U})}$ integers $m_1, \ldots, m_r = n \cdot 2^{\Theta(\sqrt{\log n \log \log U})} \cdot \polylog(t, U)$, where  for every $x \in T$, there exists $i \in [r]$ such that no other $y \in T$ has $y \equiv x \pmod{m_i}$.
\end{lemma}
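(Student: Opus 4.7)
The plan is to build the moduli via a multi-level hashing construction that iteratively isolates each element of $T$ by composing short hash chains. The starting point is the following one-level primitive: given any $S \subseteq [V]$ of size $s$, construct a small family of prime moduli of magnitude $\tilde\Theta(s)$ such that every $x \in S$ is mapped by some prime in the family into a bucket $\{y \in S : y \equiv x \pmod p\}$ of size at most $\sqrt{s}$. The analytical basis is the observation that for any distinct $x, y \in S$, the integer $|x-y| \le V$ has at most $O(\log V / \log M)$ prime divisors in $[M, 2M]$, while the prime number theorem provides $\Theta(M/\log M)$ primes in that range; thus a uniformly random prime from this range collides on $(x,y)$ with probability $O(\log^2 V / M)$, and setting $M = \tilde\Theta(s)$ makes the expected bucket size at any fixed $x$ constant.

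I would then apply this primitive recursively. Starting from $S_0 = T \subseteq [U]$, the level-$1$ family contains primes $p^{(1)}_j$ and shrinks every $x$'s bucket from $n$ down to $\sqrt{n}$ under some $p^{(1)}_{j(x)}$. Recurse inside each such bucket to obtain level-$2$ primes, and so on; after $L = \Theta(\log \log n)$ levels every bucket is a singleton. Each root-to-leaf path in the recursion tree corresponds to a modulus $m_i := p^{(1)}_{j_1} p^{(2)}_{j_2} \cdots p^{(L)}_{j_L}$. By the Chinese Remainder Theorem, if some $y \neq x$ in $T$ satisfies $y \equiv x \pmod{m_i}$ on the path belonging to $x$, then $y$ survives every bucket refinement and hence $y = x$, contradicting the leaf being a singleton. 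So the constructed collection isolates each $x \in T$.

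The parameter balancing is as follows. Choose $L$ and the per-level family size $q$ so that the total family size $q^L$ equals $2^{O(\sqrt{\log n \log \log U})}$ and the product modulus magnitude $\prod_\ell p^{(\ell)}$ equals $n \cdot 2^{O(\sqrt{\log n \log \log U})} \polylog(U)$. Setting $L = \Theta(\sqrt{\log n / \log \log U})$ and $q = 2^{O(\sqrt{\log n \log \log U})/L} = 2^{O(\log \log U)}$ at each level, with per-level prime magnitude $\tilde\Theta(n^{1/L})$, achieves both targets (the modulus magnitude telescopes to $n \cdot \polylog(U)^{O(L)}$ and the product of family sizes is $q^L$). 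The construction time is dominated by the recursive processing of $T$, yielding the claimed $n \cdot 2^{O(\sqrt{\log n \log \log U})} \polylog(U)$ bound.

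The main obstacle is the derandomization of the one-level primitive to achieve family size only $2^{O(\sqrt{\log s \log \log V})}$, not $\poly(s)$. A naive enumeration over all primes in $[M, 2M]$ of magnitude $\tilde\Theta(s)$ works but is far too large. Overcoming this requires either a splitter-style deterministic design (e.g., Naor--Schulman--Srinivasan) applied to the prime pool, or Chan--Lewenstein's additive-combinatorics construction which directly packages the right subexponential-size family together with its bucket-size guarantee. Once the primitive is available at the correct parameters, the rest of the argument---level-by-level bucket-size bookkeeping, CRT-based isolation, and the product-modulus accounting---follows routinely.
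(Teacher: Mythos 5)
The paper does not actually prove this lemma: it is quoted verbatim from Chan--Lewenstein \cite{ChanL15}, and the only original content in the paper is the one-line remark that the $m_i$'s can be padded up by a constant factor to enforce the stated lower bound. So there is no ``paper's own proof'' to compare against; I can only assess your proposal on its internal merits, and it has genuine gaps.

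First, your parameter bookkeeping is internally inconsistent. Early on you say the recursion has $L = \Theta(\log\log n)$ levels (square-rooting the bucket size each time), but in the balancing paragraph you switch to $L = \Theta(\sqrt{\log n / \log\log U})$ with ``per-level prime magnitude $\tilde\Theta(n^{1/L})$''. These two regimes are mutually exclusive: square-rooting at each level forces prime magnitudes $\tilde\Theta(n), \tilde\Theta(\sqrt n), \tilde\Theta(n^{1/4}), \ldots$, whereas the second regime requires each level to shrink the bucket by only a small power. Second, under the first (and more carefully described) regime, the product modulus along a root-to-leaf path is $\tilde\Theta\bigl(n \cdot n^{1/2} \cdot n^{1/4} \cdots\bigr) = \tilde\Theta(n^{2 - o(1)})$, which is off by essentially a factor of $n$ from the claimed $m_i = n \cdot 2^{\Theta(\sqrt{\log n \log\log U})}\polylog(U)$. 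You cannot absorb that factor into the $2^{\Theta(\sqrt{\cdot})}$ term, since $n$ can be far larger than any quasi-polylogarithmic quantity. Third, and most seriously, you identify the crux---producing a deterministic, \emph{subexponential-size} family at each level rather than the $\poly(s)$-size family that the naive union bound over primes gives you---and then explicitly defer it to ``Chan--Lewenstein's additive-combinatorics construction.'' That is circular: \cite{ChanL15} is precisely the source of the lemma you are trying to prove. The splitter-style alternative you mention is also not developed; it is not clear that NSS-type splitters applied to a pool of $\poly(s)$ primes yield the required $2^{O(\sqrt{\log n \log\log U})}$ bound without further ideas.

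In short, the CRT/path-product skeleton is sensible and the randomized collision estimate is fine, but the construction as written both overshoots the modulus magnitude by a polynomial factor and punts on the one step that is actually hard, so it does not constitute a proof.
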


Note that we slightly adapted the results in \cite{ChanL15} in that the integers $m_1, \ldots, m_r$ now also have a lower bound. This is without loss of generality because if some integer is too small, we can multiply it with an appropriate factor.  

One particular application of \cref{lem:ChanL15} in \cite{ChanL15} is a $t \cdot 2^{O(\sqrt{\log n \log \log U})} \cdot \polylog(t, U)$ time deterministic algorithm for the sparse nonnegative convolution problem, in which we are given two sparse nonnegative sequences $A, B$, and we need to compute (a sparse representation of) their convolution $A \star B$, with the additional assumption that a small size-$t$ superset of the support of the output sequence is given. Bringmann and Nakos \cite{icalpBringmannN21} removed this assumption via recursion. We first closely follow the approaches in~\cite{ChanL15}  and \cite{icalpBringmannN21} to solve a problem similar to sparse nonnegative convolution.

\begin{lemma}
\label{lem:sparse_conv_minus_C}
    Given three integer sequences $A, B, C$ of length $U$ and a set $T \subseteq [U]$, 
    with the promise that $(A \star B - C)[i] \ge 0$ for every $i$ and $T$ is a superset of the support of $A \star B - C$, we can compute $A \star B$ in $$O(U) + t \cdot 2^{O(\sqrt{\log t \log \log U})}\cdot \polylog(t, U)$$ deterministic time, where $t = \max\{||A||_0, ||B||_0, ||T||_0\}$.  
\end{lemma}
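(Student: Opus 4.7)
The plan is to reduce the task to the Chan--Lewenstein sparse-superset convolution framework applied to the residual $D := A \star B - C$. By the promise of the lemma, $D$ is pointwise nonnegative with $\mathrm{supp}(D) \subseteq T$, and $|T| \le t$. Since $C$ is given, once we recover $D[x]$ for every $x \in T$ we can write out $(A\star B)[x] = C[x] + D[x]$ on $T$ and $(A\star B)[y] = C[y]$ off $T$ in a single $O(U)$ sweep over the length-$U$ output array.

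The first step is to feed $T$ into \cref{lem:ChanL15} to obtain $r = 2^{O(\sqrt{\log t \log\log U})}$ moduli $m_1, \ldots, m_r$, each of magnitude $m_i = |T| \cdot 2^{\Theta(\sqrt{\log t \log\log U})} \polylog(t,U)$, with the guarantee that every $x \in T$ is isolated from the rest of $T$ modulo at least one $m_i$. For each $i$, I would fold the sparse sequences $A$ and $B$ into length-$m_i$ arrays in $O(t)$ time and compute their length-$m_i$ cyclic convolution $(A\star B)\bmod(x^{m_i}-1)$ by FFT in $O(m_i\log m_i)$ time, obtaining folded $(A\star B)$ at every residue modulo $m_i$. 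Summed over all $i$, this step costs $t\cdot 2^{O(\sqrt{\log t\log\log U})}\polylog(t,U)$. Sorting $T$ modulo each $m_i$ and scanning for adjacent duplicates then determines, for every $x \in T$, some isolating index $i(x)$, within the same budget.

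The key idea that keeps the $C$-dependent cost down to $O(U)$ is that I never fold $C$ modulo all $r$ moduli simultaneously; doing so would cost $O(Ur)$ and blow the budget. Instead, for each $x \in T$ I compute on demand only the single folded-$C$ value actually needed, namely $\sum_{y \equiv x \pmod{m_{i(x)}}} C[y]$. This is an arithmetic-progression sum over $O(U/m_{i(x)})$ entries of $C$; since every modulus produced by \cref{lem:ChanL15} is at least $|T|$, summing over the at most $|T|\le t$ elements of $T$ gives a total folded-$C$ work of $O(|T| \cdot U/|T|) = O(U)$.

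With folded $A\star B$ values precomputed for every modulus and the $|T|$ required folded-$C$ values obtained on demand, for each $x \in T$ I subtract to get the folded-$D$ value at position $x\bmod m_{i(x)}$. Because $\mathrm{supp}(D)\subseteq T$ and $x$ is the only $T$-element in its residue class modulo $m_{i(x)}$, this folded value equals $D[x]$ exactly; the nonnegativity of $D$ in the promise is not actually needed for this identity, although it matches the setting of later applications (and becomes essential in the subsequent lemma, where $T$ is not given). Adding the recovered $D[x]$'s to $C$ yields $A\star B$ in the promised $O(U) + t\cdot 2^{O(\sqrt{\log t\log\log U})}\polylog(t,U)$ time. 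The main thing I would need to check carefully is that each on-demand folded-$C$ query reads its $O(U/m_{i(x)})$ entries of $C$ in $O(1)$ each so that the aggregate $O(U)$ bound really is attained, and that the per-modulus isolation detection is done deterministically within budget; both follow directly from \cref{lem:ChanL15} and standard sparse FFT bookkeeping as in~\cite{ChanL15,icalpBringmannN21}.
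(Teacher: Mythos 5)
Your proposal is correct and follows essentially the same route as the paper: apply \cref{lem:ChanL15} to $T$ to get the moduli, fold $A$ and $B$ and compute per-modulus cyclic convolutions of $A\star B$ by FFT, find for each $x\in T$ an isolating modulus $m_{i(x)}$, compute only the single needed folded-$C$ value $\sum_{y\equiv x\,(\mathrm{mod}\,m_{i(x)})}C[y]$ on demand so that the total $C$-dependent work is $O(|T|\cdot U/m_{i(x)})=O(U)$, subtract to recover $(A\star B-C)[x]$, and add back $C$. Your side remark that the nonnegativity promise is not actually used in this lemma (only in the later application where $T$ must be discovered) is also accurate and matches the paper.
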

\begin{proof}

We apply \cref{lem:ChanL15} on $T$ and $U$ and find $r=2^{O(\sqrt{\log t \log \log U})}$ integers $m_1, \ldots, m_r = t \cdot 2^{\Theta(\sqrt{\log t \log \log U})} \cdot \polylog(t, U)$. For each $k \in [r]$, we prepare two arrays $A_k$ and $B_k$, which are defined as $A_k[i] := \sum_{j \equiv i \bmod{m_k}} A[j]$ and $B_k[i] := \sum_{j \equiv i \bmod{m_k}} B[j]$. Then we compute the following array $D_k$ via FFT in $\OO(m_k)$ time:
$$D_k[i]\ := \sum_{j \equiv i \bmod{m_k}} (A \star B)[j] = (A_k \star B_k)[i] + (A_k \star B_k)[i+m_k].$$
Furthermore, for each $x \in T$, we find the integer $m_k$ such that for any other $y \in T$, $y \not \equiv x \pmod{m_k}$. 
The above takes $t \cdot 2^{O(\sqrt{\log t \log \log U})}\cdot \polylog(t, U)$ time. 

Next, for each $x \in T$ and the corresponding $m_k$, we compute the following value $$E[x]\ :=\ D_k[x \bmod{m_k}] - \sum_{j \equiv x \pmod{m_k}} C[j], $$
which equals
$$\sum_{j \equiv x \pmod{m_k}} (A \star B)[j]\ - \sum_{j \equiv x \pmod{m_k}} C[j]\ = \sum_{j \equiv x \pmod{m_k}} (A \star B - C)[j]. $$
Since there is no other $y \in T$ such that $y \equiv x \bmod{m_k}$, and $T$ is a superset of the support of $A \star B - C$, 
$$\sum_{j \equiv x \pmod{m_k}} (A \star B - C)[j]\ =\ (A \star B - C)[x]. $$
For any $x \not \in T$, we can simply set $E[x]$ to be $0$.  The time for computing $E[x]$ for each $x \in T$ is $O(\frac{U}{m_k}) = O(\frac{U}{t \cdot 2^{\Theta(\sqrt{\log t \log \log U})} \cdot \polylog(t, U)}) = O(\frac{U}{t})$, so the overall running time for computing $E$ is $O(\frac{U}{t} \cdot |T|) = O(U)$. 

Overall, $E = A \star B - C$, and we can compute $A \star B$ by adding $E$ and $C$ in $O(U)$ time.

\end{proof}

\begin{lemma}
\label{lem:det_X_plus_Y}
Given two (multi)sets $X$ and $Y$ of $n$ elements in $[U/2]$ with $2U<n^2$,
we can compute $\COUNT[z]:=|\{(x,y)\in X\times Y: x+y=z\}|$ for every $z\in [U]$ by a deterministic algorithm
in $O(U\log(n^2/U) + U \sqrt{\log n \log \log U})$ time.
\end{lemma}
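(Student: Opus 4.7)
My plan is to derandomize \cref{lem:x+y} in two phases, using bit-packed FFT (\cref{lem:packed_FFT}) for the bulk of the counts and the deterministic sparse-convolution routine \cref{lem:sparse_conv_minus_C} to repair the few heavy entries, with a short coarser FFT in between to locate candidates for the heavy entries.

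\emph{Light phase.} Pick a prime $p$ of magnitude $(n^2/U)\cdot 2^{c\sqrt{\log n\log\log U}}$ for a sufficiently large constant $c$ and compute $C := (A\star B)\bmod p$ using \cref{lem:packed_FFT}; this takes $O(U\log p) = O(U\log(n^2/U)+U\sqrt{\log n\log\log U})$ time. Every ``light'' $z$ with $\COUNT[z]<p$ already has $C[z]=\COUNT[z]$, so only the heavy set $H := \{z : \COUNT[z]\ge p\}$ remains, and the identity $\sum_z\COUNT[z]=n^2$ gives $|H|\le n^2/p \le U/2^{c\sqrt{\log n\log\log U}}$.

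\emph{Localization phase.} To produce a superset $T\supseteq H$ deterministically, I would do a second, much coarser bit-packed FFT: set $L := U/2^{c'\sqrt{\log n\log\log U}}$ for a small constant $c'<c$, form the coset-sum inputs $\tilde A[k] := \sum_{x\equiv k \pmod L} A[x]$ and $\tilde B$ analogously, and compute the circular convolution $\COUNT'[k] := (\tilde A\star_L\tilde B)[k] = \sum_{z\equiv k \pmod L}\COUNT[z]$ modulo some prime $p_2>n^2$ (so the computation is exact, since $\COUNT'[k]\le n^2<p_2$). This costs $O(L\log p_2)=O(L\log n)$, which is $o(U\sqrt{\log n\log\log U})$ for any fixed $c'>0$. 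Every heavy $z\in H$ forces its coset mass $\COUNT'[z\bmod L]$ to be at least $p$, so the anomalous family $K := \{k : \COUNT'[k]\ge p\}$ has $|K|\le n^2/p\le U/2^{c\sqrt{\log n\log\log U}}$ by the same pigeonhole argument applied to $\sum_k\COUNT'[k]=n^2$; lifting $K$ back to $[U]$ gives a candidate set $T := \{z\in[U] : z\bmod L\in K\}$ of size $|K|\cdot U/L\le U/2^{(c-c')\sqrt{\log n\log\log U}}$ that still contains $H$.

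\emph{Heavy phase.} Feeding $(A,B,C,T)$ into \cref{lem:sparse_conv_minus_C} corrects $C$ into $A\star B$ in time $O(U) + |T|\cdot 2^{O(\sqrt{\log|T|\log\log U})}\cdot\polylog(|T|,U)$. Using $\log|T|\le\log U\le 2\log n$ (from the hypothesis $2U<n^2$), the second term is at most $U\cdot 2^{-(c-c')\sqrt{\log n\log\log U}+O(\sqrt{\log n\log\log U})}$; taking $c-c'$ larger than the hidden constant makes it $O(U)$. Summing the three phases yields the advertised $O(U\log(n^2/U)+U\sqrt{\log n\log\log U})$ running time. The delicate part I expect to require the most care is locking down $c$ and $c'$ together with the two primes, so that the coarser FFT is exact, the anomaly threshold $p$ is truly above the background mass per coset while heavy elements still exceed it, and the final sparse-convolution correction collapses to $O(U)$; beyond that the argument is a routine assembly of \cref{lem:packed_FFT}, \cref{lem:ChanL15}, and \cref{lem:sparse_conv_minus_C}.
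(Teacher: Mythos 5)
Your proposal is correct in its main ideas, and it takes a genuinely different route from the paper at the localization step. The paper produces the candidate superset $T$ by \emph{recursion}: it reduces $X,Y$ modulo $U'/2$ with $U'=U/2$, solves the smaller instance to get $\COUNT'$, and sets $T:=\{z:\COUNT'[z]\ge p/2\}+\{0,U'/2,U',3U'/2\}$; the recursion has $O(\log U)$ levels but the geometric decay of $U$ keeps the total cost a constant multiple of the top level. You instead replace the whole recursion with a \emph{single} non-recursive coarse modular FFT: fold everything modulo $L=U/2^{c'\sqrt{\log n\log\log U}}$, compute the exact $L$-periodic counts $\COUNT'[k]=\sum_{z\equiv k\bmod L}\COUNT[z]$, and take $T$ to be the union of the arithmetic progressions whose folded mass is at least $p$. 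This is arguably cleaner (no recursion, no factor-of-two bookkeeping), the pigeonhole bound $|K|\le n^2/p$ is clean, and the inflation of $T$ by the folding factor $U/L=2^{c'\sqrt{\cdot}}$ is absorbed by taking $c>c'$. Both routes work.

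There is, however, one gap to patch. The running-time parameter in \cref{lem:sparse_conv_minus_C} is $t=\max\{\|A\|_0,\|B\|_0,\|T\|_0\}$, not $\|T\|_0$ alone; since $\|A\|_0,\|B\|_0$ can be as large as $\min(n,U/2)$, you always have $t\ge\min(n,U/2)$ regardless of how small $T$ is. Your step ``the second term is at most $U\cdot 2^{-(c-c')\sqrt{\cdot}+O(\sqrt{\cdot})}$'' silently assumed $t=|T|$, which fails once $|T|<n$. When $t\ge n$ the heavy phase costs at least $n\cdot 2^{\Omega(\sqrt{\log n\log\log U})}\cdot\polylog$, and this is within the $O(U\log(n^2/U)+U\sqrt{\log n\log\log U})$ budget only when $U$ exceeds $n$ by at least a $2^{\Omega(\sqrt{\log n\log\log U})}$ factor, i.e.\ when $n^2/U=n^{o(1)}$. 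In the complementary regime $n^2/U=n^{\Omega(1)}$ (so $U$ is within $n^{o(1)}$ of $n$) your three-phase construction overshoots. The fix is exactly the paper's first line: in that regime $\log(n^2/U)=\Theta(\log U)$, so a plain FFT in $O(U\log U)$ time already meets the claimed bound, and no localization is needed. Add that one-line case distinction before the three phases, keep $c-c'$ larger than the hidden constant (and also larger than $\sqrt2$ times the exponent constant of \cref{lem:sparse_conv_minus_C}, since $\log t$ can be as large as $2\log n$), and the argument closes.
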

\begin{proof}
Without loss of generality, we assume $U$ is a power of $2$.

First, if $\frac{n^2}{U} = n^{\Omega(1)}$, then directly applying FFT already achieves the claimed running time. Now we assume $\frac{n^2}{U} = n^{o(1)}$.

    Let $p$ be a prime whose range is to be determined later. We use  \cref{lem:packed_FFT} to compute $\COUNT[z] \bmod{p}$ for every $z \in [U]$ in $O(U \log p)$ time. 

    Then we aim to apply \cref{lem:sparse_conv_minus_C} using $X$ for $A$, $Y$ for $B$ and $\COUNT[z]  \bmod{p}$ for $C$. However, we need a small set $T$ that is a superset of the support of $A \star B - C$. To do so, we recursively solve the following problem. 

    Let $U' = U / 2$, and define $X' := \{x \bmod{\frac{U'}{2}} : x \in X\}$ and $Y' := \{y \bmod{\frac{U'}{2}} : y \in Y\}$. Then we compute $\COUNT'[z]:=|\{(x,y)\in X'\times Y': x+y=z\}|$ recursively in $Q(n, U / 2)$ time, where $Q(n, U)$ denotes the running time to solve the instance as described in the lemma statement. We let $T$ be $\{z \in U': \COUNT'[z] \ge p / 2\} + \{0, U' / 2, U', 3 U' / 2\}$. 

    We verify that $T$ is a superset of the support of $A \star B - C$. Note that if $(A \star B - C)[i] \ne 0$ for some $i$, then $\COUNT[i] \ge p$, or equivalently, $|\{(x,y)\in X\times Y: x+y=i\}| \ge p$. This implies 
    \begin{align*}
    &\ |\{(x',y')\in X'\times Y': x' + y' \equiv i \pmod{U' / 2}\}| \\
    = & \ \left|\left\{(x,y)\in X\times Y: \left(x \bmod \frac{U'}{2}\right)+\left(y \bmod \frac{U'}{2}\right) \equiv i \pmod{U' / 2}\right\}\right|\\
    \ge & \ p.
    \end{align*}
    Therefore,  $\COUNT'[i \bmod{\frac{U'}{2}}] \ge p / 2$ or $\COUNT'[(i \bmod{\frac{U'}{2}}) + (U' / 2)] \ge p / 2$. In either case, $i$ will be included in $T$. 

    Additionally, it is easy to see that $|T| = O(n^2 / p)$, as $\sum_z \COUNT'[z] \le n^2$. Then we apply \cref{lem:sparse_conv_minus_C}, using $X$ for $A$, $Y$ for $B$, $\COUNT[z]  \bmod{p}$ for $C$ and $T$. The running time is $O(U) + t \cdot 2^{O(\sqrt{\log t \log \log U})} \cdot \polylog(t, U)$ for $t = \max\{n, \frac{n^2}{p}\}$. Also, as $2U < n^2$, we can simplify the running time to $O\left(U + t \cdot 2^{c \sqrt{\log t \log \log U})}\right)$ for some constant $c$.  

    The overall running time is thus 
    \[
    Q(n, U / 2) + O\left(U \log p + t \cdot 2^{c \sqrt{\log t \log \log U})} \right)
    \]
    for $t = \max\{n, \frac{n^2}{p}\}$. Picking $p$ from $\Theta\left(\frac{n^2}{U} \cdot 2^{c \sqrt{\log n \log \log U}} \right)$ (it only takes $n^{o(1)}$ time to find such a prime as $\frac{n^2}{U} = n^{o(1)}$) simplifies the running time to 
    \[
    Q(n, U / 2) + O\left(U\log(n^2/U) + U \sqrt{\log n \log \log U}\right),
    \]
    and it is easy to see that the recursion only incurs a constant factor blowup to the running time. 
\end{proof}

\exacthammingdetmain*
\begin{proof}
Let $n_c$ be the number of occurrences of character $c$ in the text and pattern.  Note that
$\sum_c n_c = O(n)$.
As we know, the problem reduces to solving an instance of the problem from \cref{lem:det_X_plus_Y} with $n_c$ elements and universe size $n$, for each character $c$. 

For character $c$ with $n_c \le n^{1/2}(\log n \log \log n)^{1/4}$, we use the brute-force $O(n_c^2)$ time algorithm. For character $c$ with $n_c > n^{1/2}(\log n \log \log n)^{1/4}$, we use the algorithm from \cref{lem:det_X_plus_Y} which runs in $O(n \log(n_c^2 / n) + n \sqrt{\log n_c \log \log n})$ time. The overall running time is $O(n^{3/2} (\log n \log \log n)^{1/4})$.

Finally, by breaking the problem into $O(n/m)$ instances of size $O(m)$,
the time bound becomes $O((n/m)\cdot m^{3/2} (\log m \log \log m)^{1/4})$.
\end{proof}

The deterministic algorithm for Text-to-Pattern Dominance Matching also easily follows from \cref{lem:det_X_plus_Y}. Its proof is identical to the proof of \cref{thm:exactdominancerandom}, except that we replace \cref{lem:x+y} with \cref{lem:det_X_plus_Y} and update the running time analysis properly.  

\begin{theorem}
The Text-to-Pattern Dominance Matching problem can be solved by
a deterministic algorithm in $O(n\sqrt{m} (\log m\log \log m)^{1/4})$ time.
\end{theorem}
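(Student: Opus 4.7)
The plan is to rerun the construction from the proof of \cref{thm:exactdominancerandom} essentially verbatim, but substitute the deterministic sumset algorithm \cref{lem:det_X_plus_Y} for \cref{lem:x+y} at every point, and then re-balance the brute-force/sumset threshold to absorb the extra $\sqrt{\log n \log \log n}$ overhead that distinguishes the deterministic lemma from the randomized one.

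Concretely, I would sort all $O(n)$ characters of $T$ and $P$ together, iterate over every dyadic interval $I$ of the sorted array, and let $L$ (respectively $R$) be the set of pattern (respectively text) positions lying in the left (respectively right) half of $I$. By construction $P[i] \le T[j]$ for every $(i,j) \in L \times R$, and after handling the (at most one) shared character $c$ by splitting off $L_c$ and $R_c$, the contribution of $I$ to $|\{i : P[i] < T[i+k]\}|$ for every shift $k$ reduces to two convolutions of binary indicator vectors supported in $[n]$, which is exactly the sumset counting problem handled by \cref{lem:det_X_plus_Y}.

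For a dyadic interval of length $2^{\ell+1}$, I would compute each of its two convolutions by brute force in $O((2^\ell)^2)$ time when $2^\ell$ is below a threshold $T$ and by \cref{lem:det_X_plus_Y} in $O\bigl(n\log((2^\ell)^2/n) + n\sqrt{\log 2^\ell \log\log n}\bigr)$ time otherwise. With $O(n/2^\ell)$ intervals at level $\ell$, the total cost sums to
\[
O\!\left(\sum_{2^\ell \le T} n\cdot 2^\ell \;+\; \sum_{2^\ell > T} \frac{n^2}{2^\ell}\!\left(\log\!\frac{(2^\ell)^2}{n} + \sqrt{\log n \log\log n}\,\right)\right).
\]
The first sum telescopes to $O(nT)$; in the second sum the prefactor $n^2/2^\ell$ decreases geometrically in $\ell$, so the whole sum is dominated by its smallest-$\ell$ term and is $O\bigl(\tfrac{n^2}{T}\sqrt{\log n\log\log n}\bigr)$ (after checking that, at $2^\ell = T$, the $\log((2^\ell)^2/n)$ contribution is only $O(\log\log n)$ and is thus absorbed by the $\sqrt{\log n \log\log n}$ term). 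Choosing $T := \sqrt{n}\,(\log n\log\log n)^{1/4}$ balances the two pieces at $O\bigl(n^{3/2}(\log n\log\log n)^{1/4}\bigr)$.

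Finally, applying the standard reduction that breaks the text into $O(n/m)$ overlapping windows of length $O(m)$ and runs the above algorithm on each window, the total deterministic running time becomes $O\bigl((n/m)\cdot m^{3/2}(\log m\log\log m)^{1/4}\bigr) = O\bigl(n\sqrt{m}\,(\log m\log\log m)^{1/4}\bigr)$. The one delicate point is the geometric-sum analysis ensuring the $\log((2^\ell)^2/n)$ factor is dominated in the sum, so that the final log exponent stays at $1/4$ and does not creep up; everything else is a mechanical substitution of the deterministic sumset lemma into the argument already used for the randomized dominance matching theorem.
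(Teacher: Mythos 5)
Your proposal is correct and matches the paper's approach exactly: the paper's proof of this theorem is stated as a one-line remark that one repeats the argument of the randomized dominance-matching theorem with \cref{lem:det_X_plus_Y} substituted for \cref{lem:x+y} and the threshold rebalanced, which is precisely what you do. Your threshold $T=\sqrt{n}(\log n\log\log n)^{1/4}$ coincides with the one the paper uses for the deterministic Hamming theorem, and your check that the $\log((2^\ell)^2/n)$ term is $O(\log\log n)$ at the crossover and thus absorbed by $\sqrt{\log n\log\log n}$ is the right (and only) delicate point in the rebalancing.
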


\begin{remark}
It would be natural to build \cref{lem:sparse_conv_minus_C} on Bringmann, Fischer and Nakos's more efficient algorithm \cite{BringmannFN22} for sparse nonnegative convolution that runs in $\OO(t)$ time, in order to further improve our deterministic algorithm for exact Text-to-Pattern Hamming Distances. The difficulty in this approach is  that, to implement \cite{BringmannFN22}'s idea, we have to view $C$ as a degree $U$ polynomial and evaluate it on $t$ carefully chosen points. It is unclear how to do this evaluation in $o(U \log U)$ time ($O(U \log U)$ is the naive bound for \cref{lem:det_X_plus_Y} via FFT). 
\end{remark}

\section{Equivalence with a Variant of \texorpdfstring{$3$}{3}SUM}

In this section, we prove \cref{thm:equivalence}, which we recall below:

\equivalence*

\begin{proof}
The forward direction is implied by the proof of \cite[Theorem 2.17]{BringmannN20}. For completeness, we include this simple proof. Suppose we have an algorithm $\mathcal{A}$ for \cref{prob:$3$SUM_variant} in $T(N)$ time and we are given a Text-to-Pattern Hamming Distances instance with text $T$ and pattern $P$ where $n = O(m)$. For every $i \in [m]$, we add a number $-2nP[i]-i$ to a set $A$. For every $i \in [n]$, we add a number $2nT[i]+i$ to $B$. Finally, let $C = [n]$. Then we run algorithm $\mathcal{A}$ on sets $A, B, C$. It suffices to show that for every $i \in C$, the number of $(a, b) \in A \times B$ with $a+b = i$ is exactly the number of $j$ where $P[j] = T[i + j]$ (if this is the case, then the Hamming distance between $P$ and $T[i \mathinner{.\,.} i + m-1]$ is $m$ minus the count of $3$SUM solutions for $i \in C$). In order for some number $-2nP[j] - j \in A$ and some number $2nT[k] + k \in B$ to sum up to $i$, we must  have $P[j] = T[k]$ and $-j + k = i$. Therefore, the number of such pairs is exactly the number of $j$ where $P[j] = T[i + j]$. 

We next show the previously unknown backward direction. Suppose we have a $T(n)$-time algorithm $\mathcal{B}$ for Text-to-Pattern Hamming Distances with $n = O(m)$ and we are given an instance of \cref{prob:$3$SUM_variant}. First, we can negate all numbers in $A$, so that the task becomes finding the number of $(a, b) \in A \times B$ where $-a + b = c$ for every $c \in [N]$. Then we partition the sets $A, B$ in the following way: for any integer $g$, let $A_g := \{a \in A: gN \le a < (g+1)N\}$  and similarly let $B_g := \{b \in B: gN \le b < (g+1)N\}$ (we do not need to create $A_g$ or $B_g$ that is empty). In order for $-a+b \in [N]$, we only need to match numbers in $A_g$ with numbers in $B_{g-1}, B_g, B_{g+1}$. In the following, we only consider matching numbers in $A_g$ with numbers in $B_g$, and the other two cases can be handled similarly. 

 For every $g$ where $A_g'$ and $B_g'$ are nonempty, we sample a uniformly random shift $s_g \in [N]$. Let $A'_g := \{a - gN + s_g: a \in A_g\}$ and let $B'_g := \{b - gN + s_g: b \in B_g\}$ (this random shifts idea appeared in \cite{Lincoln0W20}). Now the problem becomes, for every $c \in [N]$, find the number of $g, a \in A'_g, b \in B'_g$ where $-a + b = c$. Note that all numbers in $A'_g$ and $B'_g$ are in $[2N]$. For each $i \in [2N]$, the expected number of times it appears in $A_g'$ and $B_g'$ over all $g$ is at most $\frac{1}{N} \sum_i (|A_g|+|B_g|) = O(1)$, so by the Chernoff bound, the number of times it appears in $A_g'$ and $B_g'$  is $O(\log N)$ w.h.p. For every $(x, y) \in \{1, \ldots, O(\log N)\}^2$, we create a Text-to-Pattern Hamming Distances instance as follows. Let the pattern $P_x$ be of length $2N$, initially consisting of unique characters at each position. Then for every $i$, if $A_g'$ is the $x$-th set (among all $A_g'$'s) that contains $i$, we set $P_x[i]$ to be $g$. Similarly, let the text $T_y$ be of length $3N$, initially consisting of unique characters at each position. Then for every $i$, if $B_g'$ is the $y$-th set (among all $B_g'$'s) that contains $i$, we set $T_y[i]$ to be $g$. Then we call algorithm $\mathcal{B}$ on each of these $O(\log^2 N)$ instances. For each shift $i \in [N]$, $2N$ minus the Hamming distance equals the number of $j$ where $P_x[j] = T_y[i+j]$, i.e., it is the number of $g, a \in A_g', b \in B_g'$ where $-a + b = i$, $A_g'$ is the $x$-th set (among all $A_g'$'s) that contains $a$, $B_g'$ is the $y$-th set (among all $B_g'$'s) that contains $b$. Summing over all $(x, y)$ gives exactly the quantity we seek for each $i \in [N]$. The overall running time of this algorithm is $O(T(N) \log^2 N)$. 
\end{proof}

\section{Open Problems}

We conclude with a few open questions:
\begin{itemize}
    \item For $(1+\eps)$-approximating Text-to-Pattern Hamming distances, what is the best possible dependence on $1/\eps$? Are there \emph{deterministic} algorithms faster than Karloff's $\OO(\eps^{-2}n)$ algorithm \cite{Karloff93}?
    \item Is there a $o(n\sqrt{m})$-time randomized algorithm for exact Text-to-Pattern Hamming Distances in the word-RAM model?  Is there an $O(n\sqrt{m})$-time deterministic algorithm?
    \item Do our algorithms generalize to Text-to-Pattern $\ell_p$ Distances? 
\end{itemize}

\bibliographystyle{alphaurl} 
\bibliography{main}

\appendix
\section{Slight Extension to Indyk's Reduction}
In this appendix, we observe a simple extension to Indyk's reduction from BMM to Text-to-Pattern Hamming Distances, so that instead of BMM we start from the following Equality Product problem: Given two $N\times N$ integer matrices $A$ and $B$, compute the $N\times N$ matrix $C$ where $C[i,j]=|\{k~|~A[i,k]=B[k,j]\}|$.

Equality Product has been studied in several papers \cite{MatIPL,LabibUW19,vnotes,ChanVX23,GoldS17,Lincoln0W20}. The fastest known algorithm \cite{YusterDom} for it runs in $O(N^{2.659})$ time using rectangular matrix multiplication \cite{GallU18, journals/corr/abs-2307-07970}. This  running time would be $O(N^{2.5})$ if $\omega=2$.

Equality Product is among the so called ``intermediate'' matrix products which seem to require $N^{2.5-o(1)}$ time (in the word-RAM model of computation with $O(\log N)$ bit words), even if $\omega=2$ (see \cite{Lincoln0W20,LabibUW19}).

Here we reduce Equality Product to Text-to-Pattern Hamming Distances, following Indyk's reduction closely.

Given $N\times N$ matrices $A$ and $B$, we create a text $T$ and a pattern $P$, both of length $\Theta(N^2)$ as follows.

First let us define our alphabet $\Sigma$.
For every $i,j\in [N]$, interpret $(j,A[i,j])$ as a new letter in $\Sigma$.
Similarly, for every $j,k\in [N]$, add letters $(j,B[j,k])$ to $\Sigma$.
So far, $\Sigma$ is a subset of $[n]\times \Z$.
Also let $\$$ be a new letter that does not appear in $\Sigma$ so far, adding it to $\Sigma$.

Encode each row $A_i$ of $A$ as a string $f(i) = (0,A[i,0])\odot (1,A[i,1])\odot\ldots\odot (N-1,A[i,N-1])$, where $\odot$ means concatenation.
Let the text be 
$$T=\$^{N^2}\odot f(0)\odot \$ \odot f(1)\odot \$ \odot \ldots \odot \$ \odot f(N-1)\odot \$^{N^2}.$$

Similarly, encode each column $B_k$ of $B$ as a string $g(k)=(0,B[0,k])\odot (1,B[1,k])\odot\ldots\odot (N-1,B[N-1,k])$.
Let the pattern be
$$P=g(0)\odot g(1)\odot \ldots g(N-1).$$

Note that the Hamming distance between $f(i)$ and $g(k)$ is exactly the number of $j$ for which $A[i,j]\neq B[j,k]$, so that $N-$the Hamming distance of $f(i)$ and $g(k)$ is exactly the number of $j$ for which $A[i,j]=B[j,k]$.

Similarly to Indyk's reduction, the $\$$ symbols in $T$ ensure that if we align $P$ with $T$ so that $f(i)$ is exactly aligned with $g(k)$, then there are no other symbols of $\Sigma$ that can be equal and aligned except those in $f(i)$ and $g(k)$, and so the Hamming distance between $T$ and $P$ for the corresponding shift equals $|P|-$the number of $j$ for which $A[i,j]=B[j,k]$.

The lengths $n$ and $m$ of $T$ and $P$ are both $\Theta(N^2)$. Thus, any algorithm that runs in $O(nm^{1/4-\epsilon})$ time for $\epsilon>0$ for Text-to-Pattern Hamming Distances would result in an $O(N^2\cdot N^{2/4-2\epsilon})=O(N^{2.5-2\epsilon})$ time algorithm for Equality Product.

The lower bound for Text-to-Pattern Hamming Distances would be higher if we assumed that the current best known algorithms for Equality Product are optimal. In particular, if Equality Product requires $N^{2.5+\delta-o(1)}$ time for some $\delta>0$, then the lower bound for Text-to-Pattern Hamming Distances becomes $nm^{1/4+\delta/2-o(1)}$.

\paragraph{Extension to Gawrychowski and Uzna\'{n}ski's reduction for $k$-mismatch.}
We remark that the same modification can be performed on the reduction by 
Gawrychowski and Uzna\'{n}ski \cite{GawrychowskiU18} for the $k$-mismatch problem, to give a conditional lower bound for $k$-mismatch of $((kn/\sqrt{m})\cdot (1/m^{1/4}))^{1-o(1)}$ which would hold even if $\omega=2$. Recall that the fastest algorithm runs in $O(n+kn/\sqrt m)$ time.

Gawrychowski and Uzna\'{n}ski presented a reduction from Boolean Matrix Multiplication of an $M'\times N$ matrix by an $N\times M$ matrix (for $M'\geq M\geq N$), to the $k$-mismatch problem for a text of length $n$ and a pattern of length $m$ where $m=M^2$, $n=M'M$ and $k=MN$.

Using our simple modification we immediately obtain a reduction from the Equality Product of an $M'\times N$ matrix by an $N\times M$ matrix to the $k$-mismatch problem for a text of length $n$ and a pattern of length $m$ where $m=M^2$, $n=M'M$ and $k=MN$.

The fastest algorithm for this rectangular Equality Product when $M\leq N^2$ runs in $O(M'N \sqrt M)$ time if $\omega=2$.

Suppose that $k$-mismatch has an algorithm running in time $O((kn/m^{3/4})^{1-\eps})$ time for some $\eps>0$. Then Equality Product of an $M'\times N$ matrix by an $N\times M$ matrix can be solved asymptotically in time
\[\left(\frac{(MN)\cdot(M'M)}{(M^2)^{3/4}}\right)^{1-\eps} = \left(M'N\sqrt{M}\right)^{1-\eps},\]
thus beating the best known running time for rectangular Equality Product in this setting, even if $\omega=2$.

\section{Polynomial Multiplication over \texorpdfstring{$\F_p$}{Fp} in word RAM}
\label{app:poly-mult}

\newcommand{\GF}[1]{\F_{#1}}

In this section, we prove \cref{lem:packed_FFT}, which we recall below: 
\packedFFT*

\paragraph{Indyk's original approach. } 
Indyk \cite{Indyk98a} claimed a proof of \cref{lem:packed_FFT} (originally described in the $p=2$ case) as follows: 
in the word RAM model with $\Theta(\log n)$-bit word length, we can pack $\ell = \Theta(\frac{\log n}{\log p})$ numbers in $\GF{p}$ to a single word, represented in $\GF{p^{2\ell-1}}$, which reduces the length of the arrays to $O(n / \ell)$. Then we essentially need to multiply two degree-$O(n / \ell)$ polynomials over $\GF{p^{2\ell-1}}$. 
Indyk \cite{Indyk98a} assumed that this multiplication could be done in $O((n/\ell) \log(n / \ell))$ time (where each field operation in $\GF{p^{2\ell-1}}$ takes constant time), which would imply the desired $O(n \log p)$ time bound. Today, it is known how to  perform this multiplication for $p=2$ in $O((n/\ell) \log(n / \ell))$ time \cite{LinAHC16}, but for general $p$, the current best algorithm for multiplying two degree-$m$ polynomials over a finite field uses $O(m \log m \cdot 2^{O(\log^* m)})$ field operations~\cite{HarveyHL17} (for finite fields with primitive roots of large smooth order, the textbook Cooley-Tukey FFT has a faster $O(m\log m)$ run time, but for general finite fields it is a major open question to remove this $2^{O(\log^* m)}$ factor; see e.g.,\ discussion in \cite{BenSassonCKL23}), so Indyk's original proof (combined with the state-of-the-art \cite{HarveyHL17} result directly as a black box) only implies an algorithm with slower $O(n \log p \cdot 2^{O(\log^* n)})$ time. 

\begin{remark}
We briefly discuss other papers that relied on Indyk's algorithm and are hence affected by this issue (but can be saved either by \cite{LinAHC16} or using our new proof of \cref{lem:packed_FFT}). 
\cite{CardozeS98} studied pattern matching for point sets and gave a randomized $O(n\log n)$ time algorithm. \cite{ChanH20a} studied the coin change problem and gave a randomized $O(t\log t)$ time decision algorithm. \cite{BringmannFN21} gave a  randomized $O(k\log k)$ time non-negative sparse convolution algorithm. These results \cite{CardozeS98,ChanH20a,BringmannFN21} only use the case $p=2$, so they can already be fixed by \cite{LinAHC16}.%
\end{remark}

In the rest of the section, we describe a more involved word RAM algorithm
that saves this additional $2^{O(\log^* n)}$ factor,  proving \cref{lem:packed_FFT}. It builds on the recursive algorithm of \cite{HarveyHL17}  and additionally uses bit tricks and table look-ups (in a similar spirit to the $O(n)$-time integer multiplication algorithm in the word RAM model by \cite{Furer14a}). 
This algorithm does not solve the aforementioned major question left open by \cite{HarveyHL17}, since it runs on the word RAM model.

First, multiplying two length-$n$ polynomials over $\F_p$ can be reduced to multiplying two $O(n\log(pn))$-bit integers
via a standard Kronecker substitution (see \cite[Section 2.6]{HarveyHL17}). The latter task can be done in $O(n\log(pn))$ time in word RAM \cite{Furer14a} using FFT with bit packing.
If $p\ge n^{0.01}$, then the running time is $O(n\log(pn)) = O(n\log p)$ as desired. Hence, in the rest of the section we assume $p\le n^{0.01}$.

At a high level, our algorithm uses the techniques from \cite{HarveyHL17}'s $\GF{p}$-polynomial multiplication algorithm with $O(n\log n \cdot 8^{\log^* n}\cdot \log p)$ bit complexity in the Turing Machine model.  Their algorithm has roughly $\log^* n$ levels of recursion, where each level exponentially decreases the length of the DFT\@. Here we adapt their algorithm to the word RAM model: we only need two levels of recursion to decrease the length of the DFT to sub-logarithmic, and then we look up the DFT results from preprocessed tables. We also need to use some bit tricks to speed up the DFT implementation.

\paragraph*{Number-theoretic lemmas.}
We call a positive integer \emph{$y$-smooth} if all of its prime divisors are less than or equal
to $y$.
We quote the following two theorems from \cite{HarveyHL17}.

\begin{lemma}[{\cite[Theorem 4.1]{HarveyHL17}}]
\label{lem:4.1}
There exist computable absolute constants $c_3 > c_2 > 0$ and $n_0 \in \N$ with
the following properties. Let $p$ be a prime, and let $n\ge n_0$. Then there exists an integer $\lambda$ in the interval
\[(\log n)^{c_2 \log \log \log n} <\lambda< (\log n)^{ c_3 \log \log \log n}, \]
and a $(\lambda + 1)$-smooth integer $M\ge n$, such that $M \mid p^{\lambda} - 1$. Moreover, given $p$ and $n$, we
may compute $\lambda$ and the prime factorization of $M$ in time $O((\log n)^{\log \log n} )$.
\end{lemma}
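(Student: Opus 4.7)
The statement is purely number-theoretic: given a prime $p$ and a large enough $n$, I must exhibit $\lambda$ in the prescribed polylogarithmic range and a $(\lambda+1)$-smooth divisor $M\ge n$ of $p^{\lambda}-1$. My plan is to take $\lambda$ to be the least common multiple $\operatorname{lcm}(1,2,\ldots,y)$ for a parameter $y$ chosen to satisfy $y\asymp \log\log n\cdot\log\log\log n$. By the Prime Number Theorem, $\operatorname{lcm}(1,\ldots,y)=e^{(1+o(1))y}=(\log n)^{\Theta(\log\log\log n)}$, which places $\lambda$ comfortably inside $\bigl((\log n)^{c_2\log\log\log n},(\log n)^{c_3\log\log\log n}\bigr)$ for a suitable pair of absolute constants $c_2<c_3$. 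This choice of $\lambda$ has the key property that every integer $d\le y$ divides $\lambda$, so every prime $q$ with multiplicative order $\operatorname{ord}_q(p)\le y$ automatically satisfies $q\mid p^{\lambda}-1$.

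Having fixed $\lambda$, I would then take $M$ to be the $(\lambda+1)$-smooth part of $p^{\lambda}-1$ (i.e., the product of prime-power factors $q^{v_q(p^\lambda-1)}$ over primes $q\le\lambda+1$). Divisibility and smoothness are built in; what remains is to show $M\ge n$. To do this I would use the cyclotomic factorization
\[
p^{\lambda}-1\ =\ \prod_{d\mid \lambda}\Phi_d(p),
\]
and split each $\Phi_d(p)$ into its $(\lambda+1)$-smooth part and its rough part. Any prime $q>\lambda$ dividing $\Phi_d(p)$ satisfies $\operatorname{ord}_q(p)=d$ and $q\equiv 1\pmod{d}$; using $\Phi_d(p)\le p^{\varphi(d)}$ together with the constraint that rough primes are $>\lambda$, one obtains an upper bound on the aggregated rough part that is subpolynomial in $p^\lambda$. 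Since $\log(p^\lambda-1)=\lambda\log p$ is astronomically larger than $\log n$ (for the chosen $\lambda$), the complementary smooth part $M$ must satisfy $\log M\gg \log n$, giving $M\ge n$ for $n\ge n_0$.

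The hard step is the quantitative control of the rough part: proving that primes $q>\lambda$ with small multiplicative order of $p$ cannot, in aggregate, account for more than a small fraction of $\log(p^\lambda-1)$. This is where the proof genuinely depends on analytic number theory; the reference \cite{HarveyHL17} invokes inputs such as bounds on the Carmichael function $\lambda(\cdot)$ in the spirit of Erd\H{o}s--Pomerance--Schmutz, together with effective distributional results (Linnik-type) for primes in arithmetic progressions $q\equiv 1\pmod d$. I would expect this to be the main obstacle and the portion of the argument that cannot be done by elementary means. Once existence is established, the algorithmic claim is essentially free: $y$ and $\lambda$ are polylogarithmic in $n$; we can enumerate all primes $q\le\lambda+1$ and, for each, test whether $q\mid p^\lambda-1$ by fast modular exponentiation in $\widetilde{O}(\log\lambda)$ time. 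Since $\lambda\le(\log n)^{c_3\log\log\log n}\le (\log n)^{\log\log n}$ for large $n$, the total cost fits within the stated $O((\log n)^{\log\log n})$ bound, and the prime factorization of the output $M$ is produced as a by-product of the enumeration.
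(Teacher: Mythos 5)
The paper quotes this result verbatim as \cite[Theorem~4.1]{HarveyHL17} and supplies no proof of its own, so there is nothing in-paper to compare against; I will evaluate your sketch on its own terms.

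Your high-level choices --- $\lambda=\operatorname{lcm}(1,\dots,y)$ for $y\asymp\log\log n\cdot\log\log\log n$ (so $\lambda=e^{(1+o(1))y}$ lands in the prescribed window by PNT), and $M$ equal to the $(\lambda+1)$-smooth part of $p^\lambda-1$ --- do match the construction in the cited source, and the algorithmic tail of your argument is fine. But the quantitative step you use to prove $M\ge n$ is wrong. You claim that the aggregated \emph{rough} part of $p^\lambda-1$ (the product of prime-power factors $q^a$ with $q>\lambda+1$) is subpolynomial in $p^\lambda$, and then deduce $\log M\gg\log n$ by subtraction. That claim is false as a matter of arithmetic. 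For a divisor $d\mid\lambda$ with $d$ close to $\lambda$, the cyclotomic factor $\Phi_d(p)\approx p^{\varphi(d)}$ already has $\varphi(d)=\Omega(\lambda/\log\log\lambda)$, so a single such factor is $p^{\Omega(\lambda)}$; and every prime $q\mid\Phi_d(p)$ other than possibly the largest prime factor of $d$ satisfies $q\equiv 1\pmod d$, hence $q>d$, and typically $q>\lambda$. The simplest concrete instance: with $p=2$ and $\lambda$ a power of $2$, $\Phi_\lambda(2)$ is a Fermat number of size about $2^{\lambda/2}$, which is believed (and in all checked cases known) to be entirely rough. So the rough part of $p^\lambda-1$ can be $p^{\Omega(\lambda)}$, and ``rough part small $\Rightarrow$ smooth part large'' is not available.

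The argument in \cite{HarveyHL17} goes the other way: it lower-bounds the smooth part \emph{directly}. Observe that every prime $q\neq p$ with $q-1\mid\lambda$ satisfies $\operatorname{ord}_q(p)\mid q-1\mid\lambda$, hence $q\mid p^\lambda-1$; and since $q\le\lambda+1$, such a $q$ contributes at least $q$ to $M$. The content is then to show that $\prod_{q\ \mathrm{prime},\ q\neq p,\ q-1\mid\lambda} q\ \ge\ n$, i.e., that the highly composite number $\lambda=\operatorname{lcm}(1,\dots,y)$ has sufficiently many divisors $m$ with $m+1$ prime. That counting statement (not any control on the rough cyclotomic contributions) is the genuine number-theoretic ingredient, and it is where \cite{HarveyHL17} invokes the nontrivial external result. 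You correctly flag that the argument is not elementary, but your sketch misidentifies where the difficulty lies and rests on a false intermediate bound, so as written it does not establish $M\ge n$.
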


\begin{lemma}[{\cite[Theorem 4.6]{HarveyHL17}}]
\label{lem:4.6}
Let $p, n, \lambda, M$ be as in \cref{lem:4.1}. Let $R$ and $S$ be positive integers such that $\lambda < S < R < M$. Then there exist $(\lambda + 1)$-smooth integers $m_1,\dots, m_d$ with the following properties:
\begin{enumerate}
    \item $N := m_1\cdots m_d$ divides $M$ (and hence divides $p^{\lambda} - 1$).
    \item $R\le N\le (\lambda + 1)R$.
    \item $S\le m_i\le S^3$ for all $i$.
\end{enumerate}
Given $\lambda, S, R$, and the prime factorization of $M$, we may compute such $m_1,\dots, m_d$ (and
their factorizations) in time $\tilde O(\lambda^3)$.
\end{lemma}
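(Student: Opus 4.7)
The plan is a two-stage greedy construction. Let $q_1, q_2, \ldots, q_k$ be the multiset of prime factors of $M$ in an arbitrary fixed order, so each $q_j \le \lambda+1$ and $q_1 q_2 \cdots q_k = M$. First I will carve out a valid $N$ by scanning prefix products. Consider $P_j := q_1 q_2 \cdots q_j$: since $P_0 = 1$, $P_k = M \ge R$, and each step grows $P_j$ by at most a factor of $\lambda+1$, there is a smallest index $j^{*}$ for which $P_{j^{*}} \ge R$. Taking $N := P_{j^{*}}$ and using $P_{j^{*}} = P_{j^{*}-1} \cdot q_{j^{*}} < R \cdot (\lambda+1)$, we get $N \in [R, (\lambda+1)R]$, and $N$ divides $M$ by construction. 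This handles properties~1 and~2 and also gives the prime factorization of $N$ (namely $q_1, \ldots, q_{j^{*}}$) for free.

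For property~3, I will partition the sequence $q_1, \ldots, q_{j^{*}}$ into consecutive blocks whose products are the $m_i$. The rule is: open a new block, append the next prime, and close the block as soon as its running product reaches $S$. Each closed block $m_i$ then satisfies $S \le m_i < S \cdot (\lambda+1) \le S^2$, since before the final multiplication the running product was less than $S$ and the multiplier is at most $\lambda+1$; here the hypothesis $\lambda+1 < S$ is used. If after exhausting $q_1, \ldots, q_{j^{*}}$ an unfinished trailing block with product $m^{*} < S$ remains, absorb it into the previously closed block; the merged block lies in $[S, S^2 \cdot S) = [S, S^3)$. Since $N \ge R > S$ forces the greedy to close at least one block, the merge is always well defined, so $d \ge 1$. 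Each $m_i$ is a product of prime factors of $M$ and is therefore $(\lambda+1)$-smooth, with factorization read off from the block; moreover $\prod_i m_i = N$, completing the construction.

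For the runtime, the prefix-product scan involves $k = O(\log M) = O(\lambda \log p)$ multiplications of integers of bit-length $O(\log M) = \tilde O(\lambda)$, and the partition is a single additional pass; standard schoolbook arithmetic on such integers gives the claimed $\tilde O(\lambda^3)$ bound. The delicate point of the proof is the block-size accounting: the upper bound $m_i < S(\lambda+1)$ for unmerged blocks together with $\lambda+1 < S$ is what allows an absorbed leftover to still satisfy $m_i < S^3$. Without the slack afforded by the upper end $S^3$ (rather than something tighter like $S^2$), the merge step would break, and without $\lambda+1 < S$, even the unmerged bound $S^2$ would be lost.
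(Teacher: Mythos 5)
The paper does not prove this lemma; it is quoted verbatim from \cite[Theorem 4.6]{HarveyHL17}, so there is no in-paper proof to compare against. Your two-stage greedy construction is correct and is the natural proof of the stated claim: scan prefix products of $M$'s prime factors (with multiplicity) until the first one reaches $R$, which gives $N\in[R,(\lambda+1)R]$ because the last factor appended is at most $\lambda+1$; then greedily chop the factorization of $N$ into consecutive blocks with product in $[S,\,S(\lambda+1))\subseteq[S,S^2]$, using that every prime is $\le\lambda+1\le S$, and fold any undersized trailing block into the last closed one, which lands in $[S,S^3)$. The three hypotheses $\lambda<S$, $S<R$ (so $N>S$ and at least one block closes before the tail), and $R<M$ (so the prefix scan terminates) are each used exactly where needed. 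Two small nits. First, the hypothesis actually available is $\lambda+1\le S$ (from $\lambda<S$ over the integers), not the strict ``$\lambda+1<S$'' you invoke; the non-strict form already yields $S(\lambda+1)\le S^2$, so nothing breaks, but the phrasing should match the statement. Second, the runtime paragraph asserts both $\log M=O(\lambda\log p)$ and $\log M=\tilde O(\lambda)$, which are simultaneously true only if $\log p$ is being absorbed into the $\tilde O$; a cleaner way to get the bound is to note that the scan stops at $P_{j^*}<(\lambda+1)R$, so no intermediate product ever exceeds $(\lambda+1)R$ in magnitude, and the per-step multiplier is a single prime of $O(\log\lambda)$ bits.
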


We will use the following corollary which combines the two lemmas above.
\begin{cor}
\label{cor:combined}
Let $p$ be a prime, and let $n \ge n_0$ (where $n_0$ is an absolute constant). 
Then there exist integers $n' \in [n, 2n]$, $L \in (\log n)^{\Theta(\log \log \log n)}$, and $m_1,\dots,m_d$ (all computable in $n^{o(1)}$ time), such that:
\begin{itemize}
\item $N:= m_1\cdots m_d$ divides $p^{L} - 1$,
    \item $n' = NL$, and
    \item $\sqrt{L}/2 \le m_i \le L^3$.
\end{itemize}
\end{cor}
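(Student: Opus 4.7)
The plan is to combine the two preceding lemmas by feeding the $\lambda$ and $M$ from Lemma~\ref{lem:4.1} into Lemma~\ref{lem:4.6} with carefully chosen $R$ and $S$, and then to pick $L$ as a small multiple of $\lambda$ so that $NL$ lands inside $[n,2n]$.

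First I would invoke Lemma~\ref{lem:4.1} on $(p,n)$ to obtain $\lambda \in \bigl((\log n)^{c_2\log\log\log n},\, (\log n)^{c_3\log\log\log n}\bigr)$ together with the prime factorization of a $(\lambda+1)$-smooth integer $M \geq n$ dividing $p^{\lambda}-1$; this costs $n^{o(1)}$ time. Then I would apply Lemma~\ref{lem:4.6} with $S := \lambda+1$ (the smallest admissible value) and $R := \lfloor n/(\lambda+1)^2 \rfloor$. These satisfy the precondition $\lambda < S < R < M$ for $n$ larger than some absolute constant, since $\lambda$ is polylogarithmic in $n$ and $R \leq n \leq M$. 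This yields factors $m_i \in [\lambda+1,\,(\lambda+1)^3]$ and a product $N := m_1\cdots m_d$ with $N \in [R,\,(\lambda+1)R]$, so that $n/N \in \bigl[\lambda+1,\, 2(\lambda+1)^2\bigr]$ (the lower bound uses $(\lambda+1)R \leq n/(\lambda+1)$, and the upper bound uses $R \geq n/(2(\lambda+1)^2)$ once $n \geq 2(\lambda+1)^2$).

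The crucial step is then to pick $L$ to be any multiple of $\lambda$ lying in the interval $[n/N,\, 2n/N]$. Such an $L$ exists because the length of this interval, $n/N$, is at least $\lambda+1 > \lambda$, so the interval necessarily contains a multiple of $\lambda$; and since $N \mid M \mid p^{\lambda}-1$ and $\lambda \mid L$, we get $N \mid p^{L}-1$ automatically. Setting $n' := NL$ then gives $n' \in [n,2n]$ by construction. To verify the size constraints on $m_i$, note that $L \geq n/N \geq \lambda+1 = S$ implies $m_i \leq S^3 \leq L^3$, while $L \leq 2n/N \leq 4(\lambda+1)^2$ implies $\sqrt{L}/2 \leq \lambda+1 \leq m_i$. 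Finally, $\lambda \leq L \leq 4(\lambda+1)^2$ keeps $L$ in $(\log n)^{\Theta(\log\log\log n)}$, and the total running time is dominated by the $n^{o(1)}$ cost of Lemma~\ref{lem:4.1} together with the $\widetilde{O}(\lambda^3) = n^{o(1)}$ cost of Lemma~\ref{lem:4.6}.

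The main obstacle is balancing the parameters in Lemma~\ref{lem:4.6}: its product $N$ is only pinned down to a multiplicative factor of $\lambda+1$, whereas the target condition $NL \in [n,2n]$ is only a factor-$2$ window. The resolution is to leave $L$ free until after $N$ has been produced, and then to absorb the slack by choosing $L$ within $[n/N, 2n/N]$; this works precisely because the interval has length at least $\lambda+1$, guaranteeing a multiple of $\lambda$ inside. A secondary technicality is that $m_i$ can be as large as $S^3 = (\lambda+1)^3$, which forces $L \geq \lambda+1$ so that $L^3 \geq m_i$; this is exactly what motivates the $(\lambda+1)^2$ factor (rather than $\lambda$ alone) in the denominator of $R$, ensuring $n/N$ is large enough to accommodate such an $L$.
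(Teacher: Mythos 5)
Your proposal is correct and follows essentially the same route as the paper: identical choices $S=\lambda+1$ and $R=\lfloor n/(\lambda+1)^2\rfloor$, the same bound $n/N\in[\lambda+1,2(\lambda+1)^2]$, and the same divisibility chain $N\mid p^{\lambda}-1\mid p^{L}-1$ via $\lambda\mid L$. The only cosmetic difference is that you select $L$ as a multiple of $\lambda$ in $[n/N,2n/N]$ and then set $n'=NL$, whereas the paper first takes $n'$ to be the smallest multiple of $N\lambda$ that is $\ge n$ and then sets $L=n'/N$; these are equivalent.
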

\begin{proof}
    Apply \cref{lem:4.1} with $p,n$ and obtain $\lambda, M$.  Then apply \cref{lem:4.6} with $S = \lambda+1$ and $R = \lfloor n/(\lambda+1)^2\rfloor $ to obtain $N:= m_1\cdots m_d$.

Let $n'$ be the smallest $n'\ge n$ such that $n'$ is a multiple of $N \lambda$. Since $N\lambda \le (\lambda+1)R\cdot \lambda\le (n/(\lambda+1))\cdot \lambda < n$,  we have $n'\le 2n$. Define $L:= n'/N$. Since $N\lambda \mid n'$,  we must have $\lambda \mid L$, and hence $N \mid p^{\lambda}-1 \mid p^L-1$. 

From $R\le N\le (\lambda+1) R$ (where $R= \lfloor n/(\lambda+1)^2\rfloor $) we have $n/N \in [\lambda+1,  2(\lambda+1)^2 ]$, and hence $L= n'/N \in [n/N, 2n/N] \subseteq [\lambda+1, 4(\lambda+1)^2] \subseteq (\log n)^{\Theta(\log \log \log n)}$.

From $S\le m_i \le S^3$, $S=\lambda+1$, and  $L\in [\lambda+1,4(\lambda+1)^2]$, we have $\sqrt{L}/2 \le m_i \le L^3$.

The running time for applying the two lemmas is $O((\log n)^{\log \log n}) + \tilde O(\lambda^3) \le O((\log n)^{\log \log n})\le n^{o(1)}$.
\end{proof}

Now we describe the parameters in the two levels of our algorithm for multiplying two degree-$n$ polynomials over $\GF{p}$. We will invoke \cref{cor:combined} twice with two different $n$'s (and always the same $p$).
\paragraph*{First level parameters.}
Let $n' =  N_1 L_1$ be returned by \cref{cor:combined} when plugging in $n$. Here $n'\in [n,2n]$, $L_1 \in (\log n)^{\Theta(\log \log \log n)}$, and $N_1 \mid p^{L_1}-1$.

We will consider the sub-problem of multiplying two degree-$L_1$ polynomials over $\F_p$.
At this point, we first address the easy case where $p\ge  L_1^{\Omega(1)}$, which can be solved without the second-level recursion: we simply do a Kronecker substitution to reduce this sub-problem to polynomial multiplication with integer coefficients.

\begin{lemma}[Second level (degenerate case)]
\label{lem:second-level-degenerate}
Suppose $p> L_1$.
Then multiplying two degree-$L_1$ polynomials over $\GF{p}$ can be done in $O((L_1\log L_1)\cdot \frac{\log p}{\log n})$ time.   
\end{lemma}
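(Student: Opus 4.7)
The plan is to use Kronecker substitution to reduce the degree-$L_1$ polynomial multiplication over $\GF{p}$ to a single integer multiplication, and then invoke the word-RAM FFT-with-bit-packing integer multiplication from \cite{Furer14a} already cited at the start of this appendix. The key observation is that in this sub-problem the asymptotic $O(m\log m)$ word-operation count of the FFT is strictly smaller than the loose $O(B)$-bit form used earlier, and this is exactly what yields the $\log p/\log n$ factor in the target bound.

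Carrying out the plan: since $p > L_1$ and the input coefficients are in $[0,p-1]$, every coefficient of the product polynomial (before mod-$p$ reduction) lies in $[0, L_1(p-1)^2]$ and therefore fits in $b := \lceil \log_2(L_1 p^2)\rceil = O(\log p)$ bits. I substitute $x \leftarrow 2^b$ in both inputs, producing two non-negative integers of at most $B := (L_1+1)b = O(L_1\log p)$ bits each. I then multiply these integers in the word-RAM model with $w = \Theta(\log n)$-bit words using bit-packed FFT-based integer multiplication. The inputs occupy $m = \lceil B/w \rceil = O(L_1\log p/\log n)$ words, and the multiplication runs in $O(m\log m)$ word operations. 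Since we are in the regime $p \le n^{0.01}$ we have $m \le L_1$ and hence $\log m = O(\log L_1)$, so
\[
O(m\log m) \;=\; O\!\left(\frac{L_1\log p}{\log n}\cdot \log L_1\right) \;=\; O\!\left((L_1\log L_1)\cdot\frac{\log p}{\log n}\right),
\]
matching the claimed bound. Finally, the product polynomial is recovered by reading successive $b$-bit blocks of the product integer (already in packed word form) and reducing each modulo $p$; this cleanup stays within the above budget, as explained below.

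The main obstacle I anticipate is invoking the correct form of word-RAM integer multiplication. The paper earlier quotes the $O(B)$ bound for $B$-bit inputs, but that form was tight only because in that setting $B/w$ was $\Theta(n)$ and hence $\log(B/w) = \Theta(\log n)$ made $m\log m$ and $B$ coincide. Here $B/w$ is only polylogarithmic in $n$, so $\log m = O(\log L_1) = o(\log n)$, and the lemma depends essentially on the finer $O(m\log m)$ word-operation count rather than the loose $O(B)$ bound---this is exactly where the improvement over the naive $O(L_1\log p)$ comes from. A secondary care point is the final mod-$p$ reduction: because the target time can be sub-$L_1$, I cannot afford $O(1)$ work per coefficient. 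This is resolved by working with the output in packed bit form and reducing blocks in bulk via a precomputed table of size $2^{O(\log p)} = O(n^{0.01})$ (built once, reused for every sub-problem), which lets one reduce each word's worth of blocks within the overall time bound.
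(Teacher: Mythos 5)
Your proof is correct and follows the same Kronecker-substitution-plus-FFT strategy as the paper's proof. The only cosmetic difference is that you collapse all $L_1$ coefficients into one big integer and invoke FFT-based integer multiplication, whereas the paper packs $b = \Theta(\log n / \log p)$ coefficients per word-sized integer and stops at a degree-$L_1/b$ polynomial multiplication with word-sized coefficients --- both reduce to the same $m \approx L_1 \log p / \log n$-point FFT costing $O(m\log m)$ word operations, which is exactly the bound you correctly identify as essential.
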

\begin{proof}
In this case, we may use standard Kronecker substitution to pack the coefficients into large integers (see \cite[Section 2.6]{HarveyHL17}). More specifically, we can pack $b$ coefficients in $\GF{p}$ into an integer of magnitude $O(p^2L_1)^b = p^{O(b)}$ as $p > L_1$. To fit each integer in a word, we can set $p^{O(b)} = n^{O(1)}$, so $b$ can be as large as $O(\frac{\log n}{\log p})$. 
Then the problem reduces to multiplying two degree-$\frac{L_1}{b}$ polynomials with integer coefficients (each fitting into one word), which can be done using FFT in $O(\frac{L_1}{b} \log (\frac{L_1}{b})) = O((L_1 \log L_1) \cdot \frac{\log p}{\log n})$ time as desired. 
\end{proof}

In the following, we need to prove \cref{lem:second-level-degenerate}  in the hard case where $p\le L_1$, via a second level of recursion.

\paragraph{Second level parameters.}
Assume $p\le L_1^{O(1)}$.
Let $L_1' =  N_2 L_2$ be returned by \cref{cor:combined} when plugging in $L_1$ in place of $n$. Here $L_1'\in [L_1,2L_1]$ and $L_2 \in (\log L_1 )^{\Theta(\log \log \log L_1)} \subseteq (\log \log n )^{\Theta(\log^{(4)} n)}$.

In the following we will work over the finite field $\GF{p^{L_2}}$.
Note that we can find a representation of $\GF{p^{L_2}}$ by finding an irreducible monic
polynomial of degree $L_2$, which can be done in  $\tO(L_2^4 p^{1/2}) \le \tO(L_2^4 L_1^{1/2}) = n^{o(1)}$ time  deterministically \cite{Shoup88}.
Since $N_2\mid  p^{L_2}-1$ by \cref{cor:combined}, we can find  a primitive $N_2$-th root of unity $\omega_{N_2}$ in $\GF{p^{L_2}}$ in time $\tO(L_2^9 p) = n^{o(1)}$ \cite[Lemma 3.3]{HarveyHL17}.
Note that for any factor $m$ of $N_2$, $\omega_{m}:=\omega_{N_2}^{N_2/m}\in \GF{p^{L_2}}$ is a primitive $m$-th root of unity, and recall the DFT of an length-$m$ array $(a_0,\dots,a_{m-1}) \in (\GF{p^{L_2}})^{m}$ is the array $(\hat a_0,\dots,\hat a_{m-1})\in (\GF{p^{L_2}})^{m}$ where $\hat a_k := \sum_{j=0}^{m-1}a_j\cdot \omega_{m}^{jk}$.

Let $N_2 = m'_1\cdots m'_{d'}$ as in \cref{cor:combined}, where $m'_i \in L_2^{\Theta(1)} \subseteq (\log \log n )^{\Theta(\log^{(4)} n)}$.
Since we assumed $p\le L_1^{O(1)}$, we have $m_i' L_2\log p \le L_2^{O(1)}L_2\log L_1 \le (\log \log n)^{O(\log^{(4)}(n))} < 0.1\log n$.  We also know $N_2 = \Theta(L_1/L_2) \gg \log n$.
Hence, by greedily grouping the factors $m'_i$, we can get a factorization 
\begin{equation}
    \label{eqn:factorn2}
N_2 = m_1m_2 \cdots m_d \, \text{ where } m_i < \tfrac{0.1\log n}{L_2\log p} < m_im_j \text{ for all } i,j\in [d] \,(i\neq j).
\end{equation}
For each $i\in [d]$, define $t_i = \lfloor \frac{0.1\log n}{m_iL_2 \log p} \rfloor \ge 1$.
In the following, we will pack $t_i$ instances of degree-$m_i$ DFTs over $\GF{p^{L_2}}$ in a machine word.

\begin{lemma}
\label{lem:parallel-DFT}
For each $i\in [d]$, after $n^{0.2+o(1)}$ time pre-processing, 
computing $t_i$ instances of degree-$m_i$ DFTs over $\GF{p^{L_2}}$ can be done in $O(1)$ time (assuming a compactly represented input and output).

Similarly, this also holds for the task of multiplying degree-$m_i$ polynomials.
\end{lemma}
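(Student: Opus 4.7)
The plan is to set up a table lookup: the crucial feature of our choice of parameters is that all $t_i$ instances of a length-$m_i$ DFT fit into a single machine word, so we can afford to precompute the answer for every possible input bit pattern.

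First, I would fix a bit-level layout for elements of $\GF{p^{L_2}}$: represent each element as a polynomial in $\F_p[x]/(\mu(x))$ of degree $<L_2$, where $\mu$ is the fixed irreducible polynomial chosen above. Each coefficient takes $\lceil \log p\rceil$ bits, so a field element occupies $O(L_2 \log p)$ bits, and a packed array of $t_i m_i$ field elements (the input to $t_i$ parallel degree-$m_i$ DFTs) occupies at most $t_i m_i L_2 \log p \le 0.1\log n$ bits by definition of $t_i$ and the factorization \eqref{eqn:factorn2}. Hence the entire packed input is a single integer in $[0, 2^{0.1 \log n}] = [0, n^{0.1}]$ and fits in $O(1)$ words.

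Next, I would build a lookup table $\caT_i$ of size $n^{0.1}$ indexed by all possible such packed inputs, where $\caT_i[x]$ stores the (also packed, also $\le 0.1\log n$ bits) output of the $t_i$ parallel DFTs on input $x$. For each index $x$ I unpack it into $t_i$ arrays of $m_i$ elements of $\GF{p^{L_2}}$, compute each length-$m_i$ DFT by the naive $O(m_i^2)$ algorithm over $\GF{p^{L_2}}$ using the precomputed powers of $\omega_{m_i}$, and repack the outputs. Each field operation in $\GF{p^{L_2}}$ costs $\poly(L_2 \log p) = \polylog(n)$ time in the word RAM, and $m_i, t_i \le \polylog(n)$, so a single index is processed in $\polylog(n)$ time. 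Summing over all $n^{0.1}$ indices gives preprocessing time $n^{0.1+o(1)}$, comfortably within the $n^{0.2+o(1)}$ budget. At query time we read the packed input word, do one table lookup, and return the packed output, all in $O(1)$ time.

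The polynomial-multiplication variant is entirely analogous, with two differences: the input consists of two polynomials, so its total bit size is at most $2 t_i m_i L_2 \log p \le 0.2\log n$; and the output has degree $\le 2m_i-1$, so it also fits in $\le 0.2\log n$ bits (using at most a constant number of machine words). Thus the lookup table has $n^{0.2}$ entries, each computed naively in $O(m_i^2)$ field operations, i.e., $\polylog(n)$ time, for total preprocessing $n^{0.2+o(1)}$ as stated. The main thing to be careful about is the packing-and-unpacking bookkeeping so that field elements remain aligned on fixed bit boundaries and the table index is simply the raw input word; this is purely mechanical. There is no real obstacle here — the lemma is essentially the observation that any operation whose inputs and outputs are shorter than $0.2\log n$ bits can be tabulated in $n^{0.2+o(1)}$ time and then executed in $O(1)$ per call.
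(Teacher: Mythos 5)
Your proof is correct and takes essentially the same approach as the paper: bound the number of possible packed inputs by $((p^{L_2})^{m_i})^{t_i}\le n^{0.1}$ (resp.\ $n^{0.2}$ for the product case), tabulate all answers in advance, and answer each query by a single lookup. The paper's version is just terser; your spelled-out accounting of bit layouts, naive per-entry computation in $\polylog(n)$ time, and the doubling of the exponent for multiplication matches what the paper intends.
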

\begin{proof}
   The number of such $t_i$ instances of degree-$m_i$ polynomials over $\GF{p^{L_2}}$ is $((p^{L_2})^{m_i})^{t_i} \le  n^{0.1}$ by the definition of $t_i$, so we can preprocess a look-up table in $n^{0.1+o(1)}$ time, which later allows us to look up the DFT answers in $O(1)$ time in word RAM with $\Theta(\log n)$-bit words (assuming the inputs and outputs are packed into $O(1)$ words). 
   A similar argument applies to the task of computing $t_i$ instances of degree-$m_i$ polynomial multiplication, which takes preprocessing time $n^{0.2+o(1)}$.
\end{proof}

Now we describe our second-level algorithm.

\begin{lemma}[Second level]
\label{lem:second-level}
After $n^{0.2+o(1)}$ time pre-processing,  multiplying two degree-$L_1$ polynomials over $\GF{p}$ can be done in $O((L_1\log L_1)\cdot \frac{\log p}{\log n})$ time.   
\end{lemma}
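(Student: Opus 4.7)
The plan is to run a mixed-radix Cooley--Tukey FFT of length $N_2$ over the extension field $\GF{p^{L_2}}$, driven by the factorization $N_2=m_1\cdots m_d$ from \cref{cor:combined}, and to squeeze word-level parallelism out of the small butterflies via the packed-DFT tables of \cref{lem:parallel-DFT}. In the pre-processing phase, beyond the representation of $\GF{p^{L_2}}$ and the primitive $N_2$-th root $\omega_{N_2}$ that are already built in $n^{o(1)}$ time in the paragraph preceding the lemma, I would, for each factor $m_i$, invoke \cref{lem:parallel-DFT} to build the lookup tables that let $t_i=\lfloor 0.1\log n/(m_iL_2\log p)\rfloor$ parallel length-$m_i$ DFTs (or degree-$m_i$ polynomial multiplications) over $\GF{p^{L_2}}$ be resolved by one $O(1)$-time table lookup on a compactly packed word. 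I would also tabulate the scalar multiplication map of $\GF{p^{L_2}}$ for packed twiddle-factor application; this table has only $p^{2L_2}\le n^{o(1)}$ entries because $L_2\log p\ll \log n$ by the choice of $L_2$. All pre-processing fits inside the declared $n^{0.2+o(1)}$ budget.

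For the main computation, I would encode $a,b\in\GF{p}[x]$ as length-$N_2$ arrays over $\GF{p^{L_2}}$, run a length-$N_2$ forward DFT on each, multiply pointwise, and apply the inverse DFT to read off the product coefficients (which land in $\GF{p}\subset\GF{p^{L_2}}$). The length-$N_2$ DFT is carried out round-by-round: in round $i$, one performs the $N_2/m_i$ independent length-$m_i$ sub-DFTs via \cref{lem:parallel-DFT} ($t_i$ at a time), then the $N_2$ twiddle-factor multiplications via the packed $\GF{p^{L_2}}$-multiplication table, and finally an index transposition on the same compact layout. Each round therefore costs $O(N_2/(m_it_i))+O(N_2/t)=O(N_2L_2\log p/\log n)=O(L_1\log p/\log n)$, where $t=\Theta(\log n/(L_2\log p))$ is the packing factor for scalar multiplications; the pointwise multiplication phase and the bit-packing/unpacking between the $\GF{p}$-format input/output and the $\GF{p^{L_2}}$-format work arrays each cost $O(L_1\log p/\log n)$ as well.

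For the time bound, since each $m_i\ge \sqrt{L_2}/2$ by \cref{cor:combined}, the number of rounds is $d\le \log N_2/\log(\sqrt{L_2}/2)=O(\log L_1/\log L_2)$. Summing over rounds, the forward and inverse DFTs together cost $O(d\cdot L_1\log p/\log n)=O\bigl((L_1\log L_1\log p)/(\log n\cdot\log L_2)\bigr)$, comfortably within the target $O((L_1\log L_1)\log p/\log n)$ since $\log L_2\ge 1$.

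The principal obstacle will be matching the DFT length $N_2\approx L_1/L_2$ with the length of the polynomial product, which has degree $2L_1$. My plan is to re-invoke \cref{cor:combined} with an enlarged input $\Theta(L_1\cdot(\log L_1)^{c_3\log\log\log L_1})$ in place of $L_1$, so that the returned factorization forces $N_2\ge 2L_1+1$ while $L_2$ stays in the same asymptotic regime $(\log\log n)^{\Theta(\log^{(4)}n)}$, leaving every per-round and total estimate above intact. A viable alternative is a Sch\"{o}nhage-style ring replacement together with packing $L_2/c$ coefficients per $\GF{p^{L_2}}$ slot for a small constant $c$, working in $\GF{p}[y]/(y^{2L_2}-1)$ so that the polynomial multiplication is faithfully represented without spurious reduction modulo the irreducible polynomial defining $\GF{p^{L_2}}$; either route preserves the running-time calculation.
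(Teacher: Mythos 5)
Your high-level plan—mixed-radix Cooley--Tukey of length $N_2$ over $\GF{p^{L_2}}$, with the small butterflies and twiddle multiplications done by packed table lookups via \cref{lem:parallel-DFT}---is exactly the paper's approach. However, two of your cost estimates do not hold up, and one of your two proposed fixes for the degree-matching issue is broken.

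First, the per-round cost. You claim each round costs $O(N_2 L_2 \log p/\log n) = O(L_1\log p/\log n)$, counting one table lookup per batch of $t_i$ sub-DFTs plus packed twiddle multiplications, and you mention an ``index transposition'' in passing. But that transposition is exactly where the extra cost hides: to feed a batch of $t_i$ length-$m_i$ sub-DFTs into a single table lookup, you must first gather the stride-$M_{i-1}$ elements $A[l], A[l+M_{i-1}],\dots,A[l+(m_i-1)M_{i-1}]$ that make up each butterfly into contiguous memory, and scatter them back afterwards. On a compactly packed word layout this is a bit-packed matrix transposition, and the standard divide-and-conquer (interleaving) approach costs $\Theta(\log m_i)$ passes over the packed array, i.e.\ $O(\log(m_i)\cdot N_2 L_2\log p/\log n)$ per round. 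Your claim that a single round is linear in the packed array size would require a transposition primitive faster than known. The lemma is still true because $\sum_i\log m_i = \log N_2 = \Theta(\log L_1)$, so the correct per-round cost sums to $O(L_1\log L_1\cdot\log p/\log n)$---but your intermediate claim of $O\bigl(L_1\log L_1\cdot\log p/(\log n\cdot\log L_2)\bigr)$ total is an artifact of the under-counted transposition and is unsupported.

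Second, the degree-matching. Your primary route---re-invoking \cref{cor:combined} on an enlarged input so that $N_2\ge 2L_1+1$ while $L_2$ stays in the same regime---does not balance the costs. If each $\GF{p^{L_2}}$ slot holds a single $\GF{p}$ coefficient, then the packed working array has $N_2\ge 2L_1$ slots of $\Theta(L_2\log p)$ bits each, i.e.\ $\Theta(L_1 L_2\log p/\log n)$ words; every pass over it is a factor of $L_2$ more expensive than the target, and the whole FFT runs in $\Theta(L_1 L_2\log L_1\log p/\log n)$, blowing the bound by a factor of $L_2$. The paper's remedy is essential: pack $\lfloor L_2/2\rfloor$ consecutive $\GF{p}$-coefficients into each $\GF{p^{L_2}}$ slot, so products of two packed slots have degree $<L_2$ and are never touched by reduction modulo the defining irreducible; this turns the degree-$L_1$ inputs into degree-$O(N_2)$ polynomials over $\GF{p^{L_2}}$, which can be split into $O(1)$ instances of degree-$\lfloor(N_2-1)/2\rfloor$ multiplication matching the FFT length without enlarging $N_2$. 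Your alternative route (Sch\"{o}nhage-style, working in $\GF{p}[y]/(y^{2L_2}-1)$) is closer in spirit but not obviously viable as stated: that ring is not a field, and the existence of a primitive $N_2$-th root of unity there---which the entire length-$N_2$ DFT relies on---would need a separate justification that you have not given. The under-packing trick sidesteps this cleanly by staying inside the genuine field $\GF{p^{L_2}}$.
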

\begin{proof}
Assume $p\le L_1^{O(1)}$; otherwise use \cref{lem:second-level-degenerate} instead.
Recall $ L_1' \in [L_1, 2L_1]$ and $L_2=L_1'/N_2 = \Theta(L_1/N_2)$.
Hence, we first reduce the task of multiplying two degree-$L_1$ polynomials over $\GF{p}$ to $O(1)$ instances of multiplications of two degree-$\lfloor (N_2-1)/2\rfloor$ polynomials over $\GF{p^{L_2}}$.
In more details, this is achieved by packing  contiguous $\lfloor L_2 / 2\rfloor$ coefficients from $\GF{p}$ into an element in $\GF{p^{L_2}}$ (where we divided by two so that the products will not overflow modulo the irreducible monic polynomial of degree $L_2$). This way, the problem becomes the multiplication of two  polynomials  over $\GF{p^{L_2}}$ of degree $ L_1/\lfloor L_2/2\rfloor  = O(N_2)$, which can be easily reduced to  $O(1)$ instances of multiplications of two degree-$\lfloor (N_2-1)/2\rfloor$ polynomials over $\GF{p^{L_2}}$. 

In the following, we describe how to perform this multiplication, whose product should be a  polynomial over $\GF{p^{L_2}}$ of degree at most $N_2-1$.

Recall $N_2$ has a smooth factorization $N_2 =\prod_{i=1}^d m_{i}$ given by \cref{eqn:factorn2}, and recall that we computed a primitive $N_2$-th root of unity $\omega_{N_2} \in \GF{p^{L_2}}$. 
Hence we can use the standard Cooley-Tukey FFT algorithm of length $N_2$ to do the multiplication (see e.g., \cite[Section 2.3]{HarveyHL17}). In the following, we first recall the DFT procedure, and later describe the implementation details in word RAM. 
\vspace{-0.3cm}
\paragraph*{The DFT algorithm.}
Given input array $(a_0,\dots,a_{N_2-1})\in (\GF{p^{L_2}})^{N_2}$, we initialize the working array $A:=(a_{rev(0)},\dots,a_{rev(N_2-1)}) $  where $rev(\cdot)$ is a permutation defined as follows (analogous to the bit-reversal permutation used in the radix-2 version): if $x = \sum_{i=1}^d x_i\cdot  m_1m_2\cdots m_{i-1}$ (where $0\le x_i<m_i$), then $rev(x):= \sum_{i=1}^d x_i \cdot m_{i+1}\dots m_{d-1}m_d$.
Then we perform $d$ rounds of computation on the working array $A$, where in the $i$-th round $(1\le i \le d)$ we perform the following operations (denote $M_i= m_1m_2\cdots m_i$):
\begin{enumerate}
    \item\label{item:second-level-multiplication} 
    For each $0\le k < N_2/M_i$ and $0\le j < M_{i-1}$, let $l =kM_i + j$, and for all $0\le s < m_i$, multiply the ``twiddle factors'':
    \[ A[l+sM_{i-1}] \gets A[l+sM_{i-1}]\cdot \omega_{M_i}^{sj}.\]
    
   In total there are   $N_2$ scalar multiplications over $\GF{p^{L_2}}$ in this round.
    
    \item\label{item:second-level-DFT}  For each $0\le k < N_2/M_i$ and $0\le j < M_{i-1}$, let $l =kM_i + j$, and perform a length-$m_i$ in-place DFT: \[(A[l], A[l  + M_{i-1}], ,\dots, A[l  + (m_i-1) M_{i-1}]) \gets DFT(A[l], A[l  + M_{i-1}], ,\dots, A[l  + (m_i-1) M_{i-1}]).\]
    In total there are $N_2/m_i$ instances of length-$m_i$ DFT over $\GF{p^{L_2}}$ in this round.
\end{enumerate}
One can verify that after $d$ rounds, the working array $A$ becomes the correct DFT result, i.e.,  $A[k] = \sum_{j=0}^{N_2-1} a_j\cdot \omega_{N_2}^{jk}$.

\paragraph*{Implementation of DFT.}
To implement the DFT algorithm described above, we always use a compact representation of the working array $A\in (\GF{p^{L_2}})^{N_2}$ into $O( \frac{N_2 L_2\log p}{\log n})$ words, and we need to use bit parallelism to speed up these operations.
\begin{itemize}
    \item \cref{item:second-level-multiplication} (multiplying the ``twiddle factors''):

In constant time, we multiply the twiddle factors to 
$\Theta(\frac{\log n}{L_2\log p})$ contiguous elements (represented in $O(1)$ words) in the working array $A$
using table look-up (similar to \cref{lem:parallel-DFT} with $m_i$ set to $1$).  (Note that $\frac{\log n}{L_2\log p} \le N_2$.)  
In order to do this table look-up, we also need to prepare a compact representation of the $\Theta(\frac{\log n}{L_2\log p})$ twiddle factors applied to the working array. Note that these twiddle factors are fixed in the algorithm and do not depend on the input, so we can 
 pre-compute the compact representations of them in $\poly(N_2\cdot L_2\log p) \le n^{o(1)}$ time. 

  The total time for \cref{item:second-level-multiplication} over all $d$ rounds (ignoring preprocessing) is $O(d\cdot (N_2\cdot L_2 \log p)/\log n)$.

\item  \cref{item:second-level-DFT} (length-$m_i$ DFTs):

In the $i$-th round, we need to apply length-$m_i$ DFTs on the working array, and we want to speed them up by using \cref{lem:parallel-DFT} to perform $t_i$ DFTs in a batch in constant time. To do this, we need to first collect the array elements $A[l], A[l+M_{i-1}],\dots,A[l+(m_i-1)M_{i-1}]$ participating in each DFT  into a contiguous range of memory in compact representation. (Note that we only need to do this when $i\ge 2$; for $i=1$, since $M_{i-1}=1$, these elements are already in a contiguous range.) More specifically, we need to permute array $A$ into $A'$ so that 
\begin{equation}
    \label{eqn:permu}
 A'[kM_i + j m_i + s] = A[kM_i + j + sM_{i-1}]   \text{ for all } 0\le k<N_2/M_i, 0\le j<M_{i-1}, 0\le s<m_i.
\end{equation}
In other words, if we view the length-$N_2$ working array $A$ as $N_2/M_i$ chunks each representing an $m_i \times M_{i-1}$ matrix in row-major order, then $A'$ is obtained by transposing these matrices into column-major order. 
After permuting $A$ into $A'$, we can perform the required DFTs on $A'$ with time complexity linear in the number of words using the look-up tables from \cref{lem:parallel-DFT}, and then we permutate them back by running the transposition step in reverse. Note that the running time for performing DFTs on $A'$ is dominated by the transposition steps.

 Transposing an $m_i \times M_{i-1}$ matrix can be done by a divide-and-conquer algorithm (similar to \cite[Lemma 9]{journals/jal/Thorup02}) 
  with recursion depth $\log(m_i)$: we start with $m_i$ length-$M_{i-1}$ lists each corresponding to a leaf of the recursion tree, and at each internal node of the recursion tree we interleave the lists returned by its two child nodes.
  Here, using word operations (which can be replaced by table look-ups after preprocessing), interleaving two lists can be done with time complexity linear in the number of words in their compact representations. Hence, transposing an $m_i \times M_{i-1}$ matrix (with entries from $\GF{p^{L_2}}$) via divide-and-conquer takes total time $\sum_{q=0}^{\log m_i} 2^q\cdot (O(\frac{(m_i/2^q) M_{i-1} L_2\log p}{\log n}) + O(1) ) \le O((\log m_i)m_i M_{i-1}L_2\frac{\log p}{\log n} + m_i)$, and transposing $N_2/M_i$ such matrices in total takes $O((\log m_i)N_2L_2 \frac{\log p}{\log n} + N_2/M_{i-1})$ time. For $i\ge 3$, we have $M_{i-1}\ge m_1m_2 > \frac{0.1\log n}{L_2\log p}$ from \cref{eqn:factorn2}, and the second term $N_2/M_{i-1}$ in the time complexity is dominated, so the run time becomes $O((\log m_i)N_2L_2 \frac{\log p}{\log n})$. For the  remaining case $i=2$, the same run time can be achieved by  slightly modifying the divide-and-conquer matrix transposition algorithm: when the total bit length of the lists in the current recursion subtree is below $0.1\log n$, we simply look up the transposition result from a preprocessed table instead of recursing.

  To summarize, the total time for
  \cref{item:second-level-DFT} for 
   all $d$ rounds is   
\begin{align*}
& O\left(\sum_{i=1}^d\log(m_i) \cdot N_2 L_2\frac{\log p}{\log n}\right)\\
& = O(\log N_2) \cdot N_2 L_2\frac{\log p}{\log n} \tag{by \cref{eqn:factorn2}}\\
& = O(\log N_2) \cdot L_1\frac{\log p}{\log n} \tag{by $L_2=L_1'/N_2$ and $L_1'=\Theta(L_1)$}\\
& = O\left (L_1\log L_1 \cdot \frac{\log p}{\log n}\right ).
\end{align*}
\end{itemize}

 Finally, note that the initialization step (applying the $rev(\cdot)$ permutation to the input array) can be done in a similar fashion to the transposition steps described in \cref{item:second-level-DFT}, with the same total time complexity  $O\left (L_1\log L_1 \cdot \frac{\log p}{\log n}\right )$.

Note that the total time of \cref{item:second-level-multiplication} is dominated by \cref{item:second-level-DFT}, so the total time complexity of the algorithm is
$O\left (L_1\log L_1 \cdot \frac{\log p}{\log n}\right )$.
The total pre-processing time of calling \cref{lem:parallel-DFT} $d$ times is $O(n^{0.2+o(1)} \cdot d) = n^{0.2+o(1)}$, and the pre-processing time for other look-up tables used by the algorithm can also be bounded similarly by $n^{0.2+o(1)}$. 
\end{proof}

The proof of \cref{lem:second-level} described above can also prove the following slightly stronger statement:
\begin{cor}
    \label{cor:varl1}
    Let $\tilde L_1 \in [L_1^{0.2}, L_1^{10}]$ be a power of two.
After $n^{0.2+o(1)}$ time pre-processing,  multiplying two degree-$\tilde L_1$ polynomials over $\GF{p}$ can be done in $O((\tilde L_1\log \tilde L_1)\cdot \frac{\log p}{\log n})$ time.   
\end{cor}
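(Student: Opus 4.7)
\medskip

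The plan is to replay the proof of \cref{lem:second-level} verbatim, substituting $\tilde L_1$ in place of $L_1$ throughout, and then verify that all parameter bounds remain in the same asymptotic regimes so that the preprocessing still fits within $n^{0.2+o(1)}$ and the final running time is $O((\tilde L_1\log \tilde L_1)\cdot\frac{\log p}{\log n})$. First I would invoke \cref{cor:combined} with $\tilde L_1$ (in place of $n$) to obtain integers $\tilde L_1' \in [\tilde L_1, 2\tilde L_1]$, $\tilde L_2$, $\tilde N_2$ with $\tilde L_1' = \tilde N_2 \tilde L_2$ and $\tilde N_2 \mid p^{\tilde L_2}-1$; this plays the role of $L_1', L_2, N_2$ in the original proof.

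The critical observation is that since $\tilde L_1 \in [L_1^{0.2}, L_1^{10}]$ and $L_1 \in (\log n)^{\Theta(\log\log\log n)}$, we have $\log \tilde L_1 = \Theta(\log L_1) = \Theta(\log\log n)$, so $\tilde L_2 \in (\log\tilde L_1)^{\Theta(\log\log\log\tilde L_1)} = (\log\log n)^{\Theta(\log^{(4)}n)}$, which is the same order as $L_2$. Consequently, the smooth factorization $\tilde N_2 = \tilde m_1 \cdots \tilde m_d$ can be regrouped exactly as in \cref{eqn:factorn2} so that each $\tilde m_i < 0.1\log n/(\tilde L_2\log p)$ but any pair of them multiplies to exceed this threshold, and the degenerate case $p > \tilde L_1$ is dispatched by the same Kronecker substitution as in \cref{lem:second-level-degenerate}. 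With these parameters, the Cooley--Tukey FFT of length $\tilde N_2$ over $\GF{p^{\tilde L_2}}$, the twiddle-factor multiplications via table look-up, and the matrix-transposition steps all run within the claimed $O((\tilde L_1\log \tilde L_1)\cdot\frac{\log p}{\log n})$ bound by the same accounting as in \cref{lem:second-level}.

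The main subtlety, and what I expect to be the only nontrivial point, is ensuring that the preprocessing in \cref{lem:parallel-DFT} is done \emph{once} and serves all $\tilde L_1$ in the range $[L_1^{0.2}, L_1^{10}]$ simultaneously. The look-up tables depend only on the triples $(m, L, p)$ with $m L \log p \le 0.1\log n$, of which there are only polylogarithmically many regardless of which $\tilde L_1$ we happen to use; preprocessing \emph{every} such triple still costs only $n^{0.1+o(1)}\cdot\polylog(n) = n^{0.2+o(1)}$ total. Similarly, the fixed twiddle-factor blocks and matrix-transposition helper tables, although technically dependent on $\tilde N_2$, are functions of $(m, L, p)$ alone and can be shared. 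Thus a single preprocessing phase of cost $n^{0.2+o(1)}$ suffices to support multiplication for every power-of-two $\tilde L_1$ in the stated range, completing the corollary.
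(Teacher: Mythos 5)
Your proposal is correct and takes essentially the same approach as the paper: observe that the only properties of $L_1$ used in the proof of \cref{lem:second-level} are $L_1 \in (\log n)^{\Theta(\log \log \log n)}$ and $p \le L_1^{O(1)}$, both of which still hold for any $\tilde L_1 \in [L_1^{0.2}, L_1^{10}]$, and then replay the argument. Your extra worry about sharing preprocessing across different $\tilde L_1$ values is a reasonable sanity check but not strictly necessary: even running the $n^{0.2+o(1)}$ preprocessing afresh for each distinct degree invoked in the first level (Newton iteration and Bluestein steps) costs only $\polylog(n)\cdot n^{0.2+o(1)} = n^{0.2+o(1)}$ total, so the bound is unaffected.
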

\begin{proof}
The only bounds on $L_1$ that we used in proving \cref{lem:second-level} are $L_1 \in (\log n)^{\Theta(\log \log \log n)}$ and $p\le L_1^{O(1)}$, which also hold for $\tilde L_1$.
Hence we can simply repeat the proof of \cref{lem:second-level} with $\tilde L_1$ in place of $L_1$. 
\end{proof}

Finally we describe the first level of our algorithm, proving \cref{lem:packed_FFT}.
\begin{proof}[Proof of \cref{lem:packed_FFT}]
Recall $p\le n^{0.01}$, $L_1=n'/N_1$ and $n=\Theta(n')$.
By the same reasoning as in the proof of \cref{lem:second-level},  we can reduce the task of multiplying two degree-$n$ polynomials over $\GF{p}$ to $O(1)$ instances of polynomial multiplication over $\GF{p^{L_1}}$ whose product has degree at most $N_1-1$.
Note that we can find a representation of $\GF{p^{L_1}}$ by finding an irreducible monic
polynomial of degree $L_1$, which can be done in  $\tO(L_1^4 p^{1/2}) = n^{0.005+o(1)}$ time deterministically \cite{Shoup88}. 
Since we have shown earlier that $N_1\mid  p^{L_1}-1$, we can find  a primitive $N_1$-th root of unity $\omega_{N_1}\in \GF{p^{L_1}}$ in $\tO(L_1^9 p) = n^{o(1)}\cdot n^{0.01}$ time \cite[Lemma 3.3]{HarveyHL17}.
Let $N_1 = \prod_{i=1}^d m_{i}$ as in \cref{cor:combined}, where $\sqrt{L_1}/2 \le m_{i}\le  L_1^{3}$.

To do this multiplication, we run Cooley-Tukey FFT using this smooth $N_1$-th root. Similar as the proof of \cref{lem:second-level}, it involves $d$ rounds of computation on a (compactly represented) working array of $N_1$ elements from $\GF{p^{L_1}}$, where the $i$-th round involves the following operations.
\begin{enumerate}
    \item $N_1$ scalar multiplications over $\GF{p^{L_1}}$:  multiply the ``twiddle factors'' to each of the $N_1$ elements in the array. 
    
    The cost for preparing all possible twiddle factors $\{\omega_{N_1}^j\}_{j\in [N_1]}$ is $N_1$ scalar multiplications over $\GF{p^{L_1}}$, which can be absorbed into the cost of this step. (In contrast to the proof of \cref{lem:second-level}, here we do not need to prepare the compact representations of multiple twiddle factors packed into one word, since here each twiddle factor already occupies more than one word.)  
    \item $N_1/m_i$ instances of length-$m_i$ DFT over $\GF{p^{L_1}}$.
\end{enumerate}

Let $T_{D, L_1}(\ell)$ denote the cost of computing the DFT of a length-$\ell$ polynomial over $\GF{p^{L_1}}$, and let $T_{M, L_1}$ denote the cost of scalar multiplication over $\GF{p^{L_1}}$. Then the total time complexity for FFT is (up to constant factors)
\[ 
    \sum_{i=1}^d  \big (N_1 \cdot T_{M, L_1} + \frac{N_1}{m_i} T_{D, L_1}(m_i)  \big ).
\]
Now we analyze the two terms separately.

\begin{itemize}
    \item 
To analyze $T_{M,L_1}$, note that a scalar multiplication over $\GF{p^{L_1}}$ can be done by computing the product of two degree-$L_1$ polynomials over $\GF{p}$, and then mapping it back to $\GF{p^{L_1}}$ by reducing modulo a degree-$L_1$ monic irreducible polynomial over $\GF{p}$.
By \cref{lem:second-level}, multiplying two degree-$L_1$ polynomials over $\F_p$ can be done in $O((L_1\log L_1)\frac{\log p}{\log n})$ time.
Using Newton's iteration (see e.g., \cite[Section 9]{von2013modern}),
degree-$L_1$ polynomial division with remainder can be reduced to 
$O(\log L_1)$ instances of polynomial multiplication with degrees $L_1,\frac{L_1}{2},\frac{L_1}{4},\frac{L_1}{8},\dots$ respectively.
For multiplication with degree $\ge L_1^{0.2}$, we invoke \cref{cor:varl1}. For smaller degree, we use brute-force quadratic-time multiplication. The total time for degree-$L_1$ polynomial division is thus (up to a constant factor) 
\begin{align*}
  \sum_{j=0.2\log_2 L_1}^{\log_2 L_1} (2^j \log 2^j)\tfrac{\log p}{\log n} + \sum_{j=0}^{0.2\log_2 L_1} (2^j)^2  \le  O(L_1 \log L_1)\cdot \tfrac{\log p}{\log n} + O(L_1^{0.4})
   =  O(L_1 \log L_1)\cdot \tfrac{\log p}{\log n}.
\end{align*}
Hence, $T_{M,L_1} = O((L_1\log L_1)\frac{\log p}{\log n})$.

\item
To analyze 
$T_{D, L_1}(m_i)$, we use Bluestein's chirp transform (see \cite[Section 2.5]{HarveyHL17}) to reduce the task of computing a length-$m_i$ DFT over $\GF{p^{L_1}}$ to multiplying two degree-$m_i$ polynomials over $\GF{p^{L_1}}$. This can further be reduced to multiplying  degree-$2 m_i L_1$  polynomials over $\GF{p}$ via Kronecker substitution (see \cite[Section 2.6]{HarveyHL17}), which can be solved using \cref{cor:varl1} (recall $m_i\le L_1^{3}$) in time $O(m_i L_1 \cdot \log (m_i L_1)\cdot  \frac{ \log p}{\log n})$. 
Afterwards, we divide $m_i$ degree-$2L_1$ polynomials over $\GF{p}$ by the degree-$L_1$ irreducible monic polynomial over $\GF{p}$, to map the elements back to $\GF{p^{L_1}}$, in  total time $O(m_i L_1 \cdot \log(L_1) \cdot \log p / \log n)$ (similar to the previous  paragraph). Hence, $T_{D,L_1}(m_i) \le O(m_i L_1  \log (L_1)\cdot  \frac{ \log p}{\log n})$ 
\end{itemize}

Hence, the total time becomes (up to constant factors)
\begin{align*}
 &    \sum_{i=1}^d  \big (N_1 \cdot T_{M, L_1} + \frac{N_1}{m_i} T_{D, L_1}(m_i)  \big )\\
    \le \ &  \sum_{i=1}^d  \big ( N_1 (L_1 \log L_1) \frac{\log p}{\log n} + \frac{N_1}{m_i} m_iL_1 \log(L_1) \cdot \frac{\log p}{\log n} \big ) \\
    \le \ &   O\big (d N_1 (L_1 \log L_1)\frac{\log p}{\log n} \big) \\
    \le \ &   O\big (d \cdot n \log L_1 \cdot \frac{\log p}{\log n}\big ).
\end{align*}
Recall that \cref{cor:combined} gave the factorization $N_1 = \prod_{i=1}^d m_{i}$ with $m_{i}\in  L_1^{\Theta(1)}$, so $d\log L_1 = \Theta( \log N_1) = \Theta(\log n)$, and the final run time becomes $O(n\log p)$ as desired.
\end{proof}

\end{document}